\documentclass{article}
\usepackage[utf8]{inputenc}
\usepackage{float}
\usepackage{jheppub}
\usepackage{amsmath}
\usepackage[english]{babel}
\usepackage{amssymb}
\usepackage{mathtools}
\usepackage{color}
\definecolor{ao}{rgb}{0.0, 0.5, 0.0}

\usepackage[caption = false]{subfig}

\usepackage{graphicx}

\usepackage{tikz}
\usepackage{comment}
\raggedbottom
\usepackage{capt-of}
\usepackage{amsthm}
\usepackage{hyperref}
\usepackage{aesupp}
\usepackage{lmodern}

\hypersetup{
    colorlinks=true,
    linkcolor=blue
    }

\usepackage{bm}
\usepackage{ftnxtra}
\usepackage{fnpos}

\newtheorem{remark}{Remark}
\newtheorem{theorem}{Theorem}
\newtheorem{proposition}{Proposition}
\newtheorem{corollary}{Corollary}
\newtheorem{definition}{Definition}
\newtheorem{lemma}{Lemma}

\def\0{\mbox{\tiny $0$}}
\def\1{\mbox{\tiny $1$}}
\def\2{\mbox{\tiny $2$}}
\def\3{\mbox{\tiny $3$}}
\def\4{\mbox{\tiny $4$}}
\def\5{\mbox{\tiny $5$}}
\def\6{\mbox{\tiny $6$}}
\def\7{\mbox{\tiny $7$}}
\def\8{\mbox{\tiny $8$}}
\def\9{\mbox{\tiny $9$}}
\def\rhoD{\tilde\rho}

\def\r{\rho_{_{\mu}}}
\def\r{\rangle}

\def\l{\langle}
\def\m{\bar{m}}

\def\n{\bar{n}}

\def\b{\beta^{'}}

\newcommand{\SOMMA}[2]{\displaystyle\sum\limits_{#1}^{#2}}

\long\def \beq#1\eeq {\begin{equation} #1 \end{equation}}
\long\def \beaq#1\eeaq {\begin{equation}\begin{aligned} #1 \end{aligned}\end{equation}}
\long\def \bes#1\ees {\begin{equation}\begin{split} #1 \end{split} \end{equation}}
\long\def \bea#1\eea {\begin{eqnarray} #1 \end{eqnarray}}
\long\def \bse[#1]#2\ese {\begin{subequations}\label{#1}\begin{align} #2 \end{align}\end{subequations}}


\newcommand{\si}{\sigma_i}

\title{Parallel Learning by  Multitasking Neural Networks}

\author[a]{Elena Agliari,}
\author[b,e]{Andrea Alessandrelli, }
\author[c,e]{Adriano Barra, }
\author[d,e,f]{Federico Ricci-Tersenghi.}

\affiliation[a]{Dipartimento di Matematica, Sapienza Universit\`a di Roma, Piazzale Aldo Moro, 5, 00185, Roma, Italy.}

\affiliation[b]{Dipartimento di Informatica, Università di Pisa, Lungarno Antonio Pacinotti, 43, 56126, Pisa Italy.}

\affiliation[c]{Dipartimento di Matematica e Fisica,  Universit\`a  del Salento, Via per Arnesano, 73100, Lecce, Italy.}

\affiliation[d]{Dipartimento di Fisica, Sapienza Universit\`a di Roma, Piazzale Aldo Moro 2, I-00185 Roma, Italy.}

\affiliation[e]{Istituto Nazionale di Fisica Nucleare, Sezioni di Roma1 e Lecce, Italy.}

\affiliation[f]{CNR-Nanotec, Rome unit, 00185 Rome, Italy.}

\abstract{A modern challenge  of Artificial Intelligence is learning  multiple patterns at once (i.e. {\em parallel learning}). While this can not be accomplished by standard Hebbian associative neural networks, in this paper we show  how the Multitasking Hebbian Network (a variation on theme of the Hopfield model working on sparse data-sets) is naturally able to perform this complex task. We focus on systems processing in parallel a finite (up to logarithmic growth in the size of the network) amount of patterns, mirroring the low-storage level of standard associative neural networks at work with pattern recognition. For mild dilution in the patterns, the network handles them  hierarchically, distributing the  amplitudes of their signals as power-laws w.r.t. their information content  (hierarchical regime), while, for strong dilution, all the signals pertaining to all the patterns are raised with the same strength (parallel regime). 
\newline
Further, confined to the low-storage setting (i.e., far from the spin glass limit), the presence of a teacher neither alters the multitasking performances nor changes the thresholds for learning: the latter are the same whatever the training protocol is supervised or unsupervised.  Results obtained through statistical mechanics, signal-to-noise technique and Monte Carlo simulations are overall in perfect agreement and carry interesting  insights on {\em multiple learning at once}: for instance, whenever the cost-function of the model is minimized {\em in parallel on several patterns} (in its description via Statistical Mechanics), the same happens to the standard sum-squared error Loss function (typically used in Machine Learning). 
}

\begin{document}

\maketitle

\section{Introduction}\label{LaPrimina}
Typically, Artificial Intelligence has to deal with several inputs occurring at the same time: for instance, think about automatic driving, where it has to distinguish and react to different objects (e.g., pedestrians, traffic lights, riders, crosswalks) that may appear simultaneously. Likewise, when a biological neural network learns, it is rare that it has to deal with one single input per time\footnote{It is enough to note that, should serial learning take place rather than parallel learning, Pavlov's Classical Conditioning would not be possible \cite{Aquaro-Pavlov}.}: for instance, while trained in the school to learn any single letter, we are also learning about the composition of our alphabets. In this perspective, when stating that neural networks operate {\em in parallel}, some caution in potential  ambiguity should be paid. To fix ideas, let us focus on the Hopfield model \cite{Hopfield}, the {\em harmonic oscillator} of associative neural networks accomplishing pattern recognition \cite{Amit,Coolen}: its neurons indeed operate synergistically in parallel but with the purpose of retrieving one single pattern per time, not several simultaneously \cite{Amit,AGS2,Taro}. A parallel processing where multiple patterns are simultaneously retrieved cannot be accessible to the standard Hopfield networks as long as each pattern is fully informative, namely its vectorial binary representation is devoid of blank entries. On the other hand, when a fraction of entries can be blank \cite{Aquaro-Pavlov} multiple-pattern retrieval is potentially achievable by the network. Intuitively,  this can be explained by noticing that the overall number of neurons making up the networks -- and thus available for information processing -- equals the length of the binary vectors codifying the patterns to be retrieved, hence, as long as these vectors contain information in all their entries, there is no free room for dealing with multiple patterns.  
Conversely, the multitasking neural networks, introduced in \cite{Barra-PRL2012}, are able to overcome this limitation and have been shown to succeed in retrieving multiple patterns simultaneously, just by leveraging the presence of lacun\ae\ in the patterns stored by the network. The emerging pattern-recognition properties have been extensively investigated at medium storage (i.e., on random graphs above the percolation threshold) \cite{Coolen-Medium}, at high storage (i.e., on random graphs below the percolation threshold) \cite{Coolen-High} as well as on scale-free \cite{Barra-PRL2014} and hierarchical \cite{Barra-PRL2015}  topologies. 

However, while the study of the parallel retrieval capabilities of these multi-tasking networks is nowadays over, the comprehension of their parallel learning capabilities just started and it is the main focus of the present paper. 

In these regards it is important to stress that the Hebbian prescription  has been recently revised to turn it from a storing rule (built on a set of already definite patterns, as in the original Amit-Gutfreund-Sompolinksy (AGS) theory) into a genuine learning rule (where unknown patterns have to be inferred by experiencing solely  a sample of their corrupted copies), see e.g., \cite{Agliari-Emergence,Aquaro-EPL,Fede-Data2016}\footnote{While Statistical Learning theories appeared in the Literature a long time ago, see e.g. \cite{ContrDiv,SompoLearning,Angel} for the original works and \cite{Fede-Data2012,Fede-Data2014,Carleo, Lenka} for updated references, however the statistical mechanics of Hebbian learning was not deepened in these studies.}. 
\newline
In this work we merge these extensions of the bare AGS theory and use definite patterns (equipped with blank entries) to generate a sparse data-set of corrupted examples, that is the solely information experienced by the network: we aim to highlight the role of lacun\ae\ density and of the data-set size and quality on the network performance, in particular deepening the way the network learns simultaneously the patterns hidden behind the supplied examples.  
In this investigation we focus on the low-storage scenario (where the number of definite patterns grows sub-linearly with the volume of the network) addressing both the {\em supervised} and the {\em unsupervised} setting.

The paper is structured as follows:  the main text has three Sections. Beyond this Introduction provided in Section \ref{LaPrimina}, in Section \ref{SezioneDue} we revise the multi-tasking associative network; once briefly summarized its parallel retrieval capabilities (Sec. \ref{Sezione2.1}),  we introduce a simple data-set the network has to cope with in order to move from the simpler storing of patterns to their learning from examples (Sec. \ref{Gramsci}). Next, in Section \ref{sec:RS_Guerra_sup} we provide an exhaustive statistical mechanical picture of the network's emergent information processing capabilities by taking advantage of Guerra's interpolation techniques: in particular, focusing on the Cost function (Sec. \ref{SezioneTreUno}), we face the {\em big-data} limit (Sec. \ref{sec:datalimit_sup}) and we deepen the nature of the phase transition the network undergoes as ergodicity breaking spontaneously takes place (Sec. \ref{cazzopeppe}). Sec. \ref{cazzarola} is entirely dedicated to provide phase diagrams (namely plots in the space of the control parameters where different regions depict different global computational capabilities). Further, before reaching conclusions and outlooks as reported in Sec. \ref{SezioneConclusiva}, in Sec. \ref{SezioneTreTre} we show how the network's Cost function (typically used in Statistical Mechanics) can be sharply related to  standard Loss functions (typically used in Machine Learning) to appreciate how parallel learning effectively lowers several Loss functions at once.
\newline
In the Appendices we fix a number of subtleties: in Appendix \ref{app:sampling} we provide a more general setting for the sparse data-sets considered in this research\footnote{In the main text we face the simplest kind of pattern's dilution, namely we just force to be blank the same fraction of their entries whose position is preserved in the generation of the data-sets (hence whenever the pattern has a zero, in all the examples it gives rise to, the zero will be kept), while in the appendix we relax this assumption (and blank entries can move along the examples yet preserving their amount). As in the thermodynamic limit the theory is robust w.r.t. these structural details we present as a main theme the simplest setting  and in the appendix \ref{app:sampling} the more cumbersome one.}, while in Appendix \ref{AppEntropy} we inspect the relative entropies of these data-sets and finally in Appendix \ref{AppendiceC} we provide a revised version of the Signal-to-Noise technique (that allows to evaluate computational shortcuts beyond providing an alternative route to obtain the phase diagrams).  Appendices \ref{app:calc} and \ref{app:proof} give details on  calculations, plots and proofs of the main theorems.

 \section{Parallel learning in multitasking Hebbian neural networks}\label{SezioneDue}
\subsection{A preliminary glance at the emergent parallel retrieval capabilities}\label{Sezione2.1}
Hereafter, for the sake of completeness, we briefly review the retrieval properties of the multitasking Hebbian network in the low-storage regime, while we refer to \cite{Barra-PRL2012,ElenaMult1} for an extensive treatment. 

\begin{definition} \label{def:Hop}
Given $N$ Ising neurons $\sigma_i = \pm 1$ ($i=1,...,N$), and $K$ random patterns $\boldsymbol \xi^{\mu}$ ($\mu=1,...,K$), each of length $N$, whose entries are i.i.d. from
\begin{equation}\label{eq:quenched_xi}
\mathbb  P(\xi_i^{\mu}) = \frac{(1-d)}{2}\delta_{\xi_i^{\mu},-1} +  \frac{(1-d)}{2}\delta_{\xi_i^{\mu},+1} + d  \delta_{\xi_i^{\mu},0},   
\end{equation}
where $\delta_{i,j}$ is the Kronecker delta and $d \in [0,1]$, the Hamiltonian (or cost function) of the system reads as   
\begin{equation} \label{eq:cost_0}
\mathcal H_N(\boldsymbol \sigma|\boldsymbol \xi) :=-\frac{1}{2N} \sum_{\substack{i,j \\ i\neq j}}^{N,N}\left(\sum_{\mu=1}^K \xi_i^{\mu} \xi_j^{\mu}\right) \sigma_i\sigma_j.
\end{equation}
\end{definition}
The parameter $d$ tunes the ``dilution'' in pattern entries: if $d=0$ the standard Rademacher setting of AGS theory is recovered, while for $d = 1$ no information is retained in these patterns: otherwise stated, these vectors display, on average, a fraction $d$ of blank entries.
\begin{definition}
In order to assess the network retrieval performance we introduce the $K$ Mattis magnetizations 
\begin{equation}
m_{\mu} := \frac{1}{N} \sum_i^N \xi_i^{\mu}\sigma_i, ~ \mu=1,...,K,
\end{equation}
which quantify the overlap between the generic neural configuration $\boldsymbol \sigma$ and the $\mu^{th}$ pattern.
\end{definition} 
Note that the cost function \eqref{eq:cost_0} can be recast as a quadratic form in $m_{\mu}$, namely 
\begin{equation}\label{abucione}
\mathcal H_N(\boldsymbol \sigma|\boldsymbol \xi)=
-\frac{N}{2}\sum_{\mu} m_{\mu}^2 + \frac{K}{2},
\end{equation}
where the term $K/2$ in the r.h.s. stems from diagonal terms ($i=j$) in the sum at the r.h.s. of eq. \ref{eq:cost_0} and in the low-load scenario (i.e., $K$ grows sub-linearly with $N$) can be neglected in the thermodynamic limit ($N\to\infty$).
\newline
As we are going to explain, the dilution ruled by $d$ is pivotal for the network in order to perform parallel processing.
It is instructive to first consider a toy model handling just $K=2$ patterns: let us assume, for simplicity, that the first pattern $\boldsymbol \xi^1$ contains information (i.e., no blank entries) solely in the first half of its entries and the second pattern $\boldsymbol \xi^2$ contains information solely in the second half of its entries, that is
\begin{equation}
\boldsymbol \xi^1=(\underbrace{\xi_1^1, ..., \xi_{N/2}^1}_{\in \{-1, +1\}^{\frac{N}{2}}},\underbrace{0,..., 0}_{\in \{0\}^{\frac{N}{2}}}), ~~~  \boldsymbol \xi^2=(\underbrace{0, ..., 0}_{\in \{0\}^{\frac{N}{2}}},\underbrace{\xi_{N/2+1}^1,..., \xi^1_N}_{\in \{-1, +1\}^{\frac{N}{2}}})
\end{equation}
%
%
%
Unlike the standard Hopfield reference ($d=0$), where the retrieval of one pattern employs all the resources and there is no chance to retrieve any other pattern, not even partially (i.e., as  $m_1 \to 1$ then $m_2 \approx 0$ because patterns are orthogonal for large $N$ values in the standard random setting), here nor $m_1$ neither $m_2$ can reach the value $1$ and therefore the complete retrieval of one of the two still leaves resources for the retrieval of the other. In this particular case, the minimization of the cost function $\mathcal H_N(\boldsymbol \sigma|\boldsymbol \xi)= -\frac{N}{2}\left(m_1^2 + m_2^2\right)$ is optimal when \emph{both} the magnetizations are equal to one-half, that is when they both saturate their upper bound.
In general, for arbitrary dilution level $d$, the minimization of the cost function requires the network to be in one of the following regimes
\begin{itemize}
    \item {\em hierarchical scenario}: for values of dilution not too high (i.e., $d<d_c$, {\em vide infra}), one of the two patterns is fully retrieved (say $m_1 \approx 1-d$) and the other is retrieved to the largest extent given the available resources, these being constituted by, approximately, the $Nd$ neurons corresponding to the blank entries in $\bm \xi^1$  (thus, $m_2 \approx d(1-d)$), and so on if further patterns are considered.
\item {\em parallel scenario}: for large values of dilution (i.e., above a critical threshold $d_c$), the magnetizations related to all the patterns raise and the signals they convey share the same amplitude.
\end{itemize}
\begin{figure}
\begin{center}
\includegraphics[width=15cm]{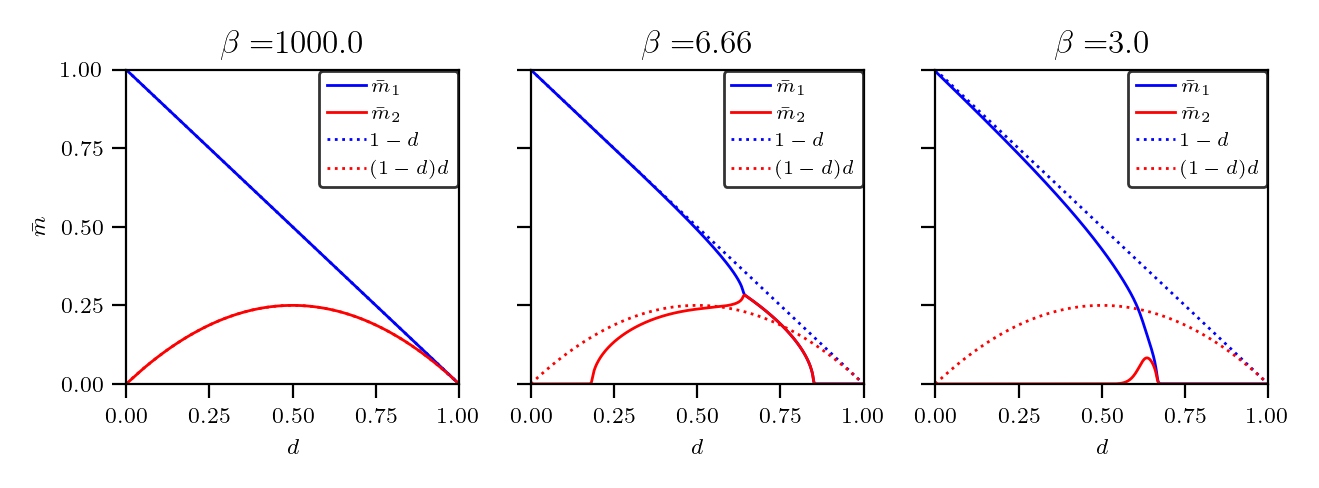} 
\caption{Numerical solutions of the two self-consistent equations \eqref{ammerda1} and \eqref{ammerda2} obtained for $K=2$, see \cite{Barra-PRL2012}, as a function of $d$ and for different choices of $\beta$: in the $d \to 0$ limit  the Hopfield serial retrieval is recovered (one magnetization with intensity one and the other locked at zero), for $d \to 1$ the network ends up in the parallel regime (where all the magnetizations acquire the same value), while for intermediate values of dilution the hierarchical ordering prevails (both the magnetizations are raised, but their amplitude is different). \label{fig:UnoTaccitua}}
\end{center}
\end{figure}

In general, in this type of neural network, the {\em pure state ansatz}\footnote{In this state the neurons are aligned with one of the patterns and, without loss of generality, here we refer to $\mu=1$.} $\bm m = (1,0, 0, ...,0)$, that is $\sigma_i = \xi^{1}_i$ for $i=1,...,N$, barely works and parallel retrieval is  often favored. In fact, for $K \geq 2$, at relatively low values of pattern dilution $d_1$ and in the zero-noise limit $\beta \to \infty$, one can prove the validity of the so-called \emph{hierarchical ansatz} \cite{Barra-PRL2012} as we briefly discuss: one pattern, say $\boldsymbol \xi^1$, is perfectly retrieved and displays a Mattis magnetization $m^1 \approx (1-d)$; a fraction $d$ of neurons is not involved and is therefore available for further retrieval, with any remaining pattern, say $\boldsymbol \xi^2$, which yields $m_2 \sim (1 - d)d$; proceeding iteratively, one finds $m_{\ell} = d^{\ell-1}(1-d)$ for $\ell =1, ..., \hat K$ and the overall number $\hat K$ of patterns simultaneously retrieved corresponds to the employment of all the resources. Specifically, $\hat K$ can be estimated by setting $\sum_{\ell=0}^{\hat K-1} (1-d) d^{\ell} =1$, with the cut-off at finite $N$ as $(1-d)d^{\hat K-1} \geq N^{-1}$, due to discreteness: for any fixed and finite $d$, this implies $K \lesssim \log N$, which can be thought of as a ``parallel low-storage'' regime of neural networks. It is worth stressing that, in the above mentioned regime of low dilution, the configuration leading to $m_{\ell} = d^{\ell-1}(1-d)$ for $\ell =1, ..., \hat K$ is the one which minimizes the cost function. 
\begin{figure}
\begin{center}
\includegraphics[width=12cm, height= 4.0cm]{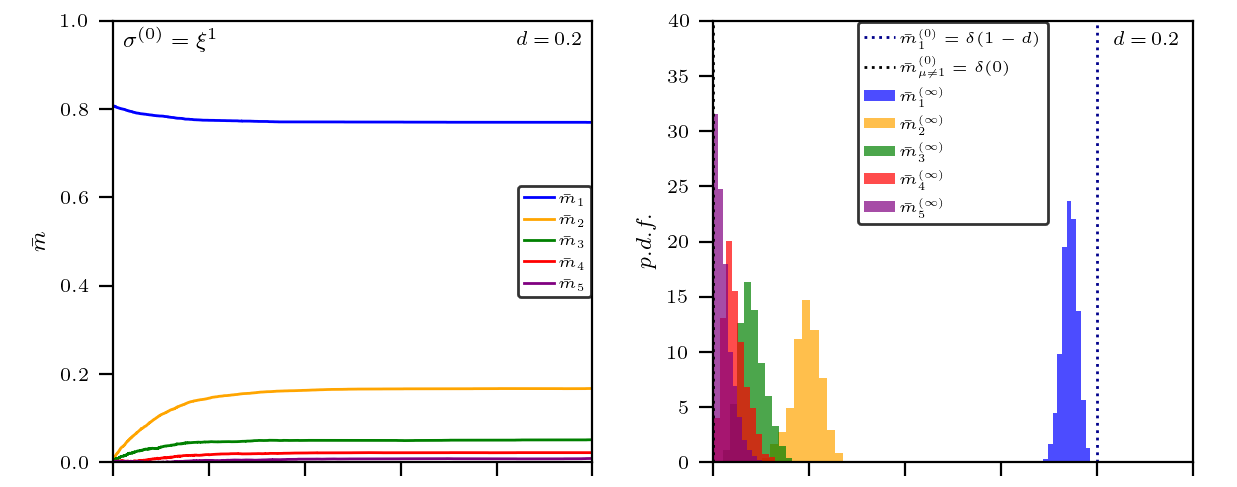} 
\includegraphics[width=12cm, height= 4.0cm]{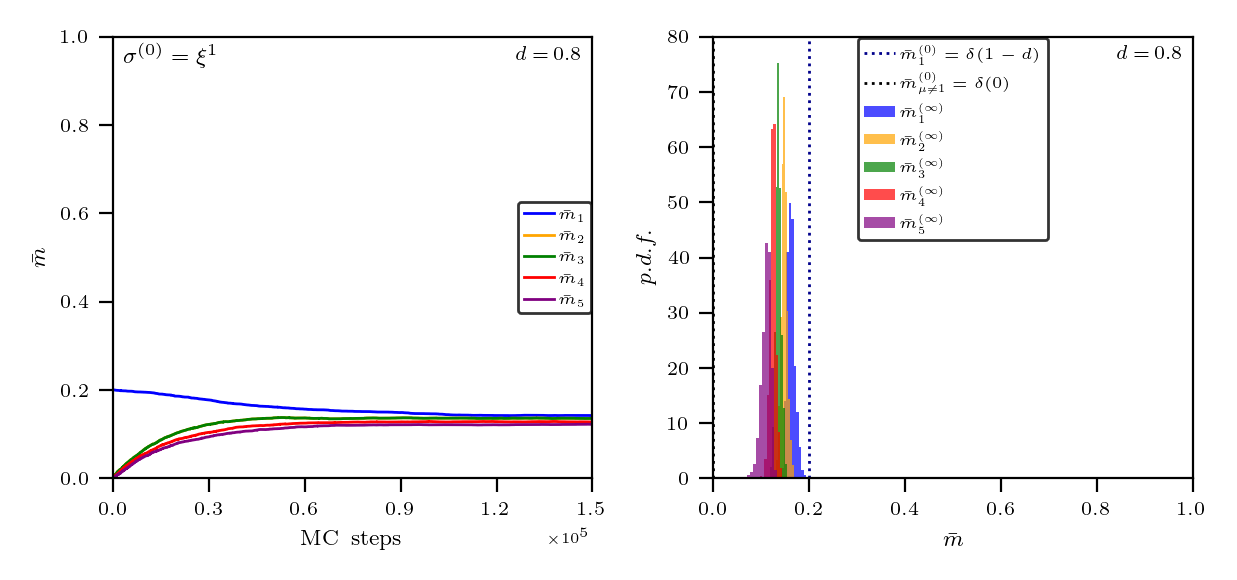} 
\caption{We report two examples of Monte Carlo dynamics until  thermalization within the hierarchical (upper plots, dilution level $d=0.2$) and parallel (lower plots, dilution level $d=0.8$) scenarios respectively. These plots confirm that the picture provided by statistical mechanics is actually dynamically reached by the network. We initialize the network sharply in a pattern as a Cauchy condition (represented as the dotted blue Dirac delta peaked at the pattern in the second columns) and, in the first column,  we show the stationary values of the various Mattis magnetizations pertaining to  different patterns, while in the second column we report their histograms achieved by sampling 1000 independent Monte Carlo simulations: starting from a sequential retrieval regime, the network ends up in a multiple retrieval mode, hierarchical vs parallel depending on the level of dilution in the patterns. \label{fig:DueTaccidete}}
\end{center}
\end{figure}
The hierarchical retrieval state 
$\bm m = (1-d)\left(1,d,d^2,d^3,\cdots\right)$ can also be specified in terms of neural configuration as  \cite{Barra-PRL2012}
%
%
\begin{equation}
    \sigma_i^* = \xi_i^1+\SOMMA{\nu=2}{\hat K}\xi_i^\nu\prod\limits_{\rho=1}^{\nu-1}\delta_{\xi_i^{\rho},0}\,.
    \label{eq:parallel_ans}
\end{equation}
This organization is stable until a critical dilution level $d_c$ is reached where $m_1 \sim \sum_{k>1}m_k$ \cite{Barra-PRL2012}, beyond that level the network undergoes a rearrangement and a new organization called {\em parallel ansatz} supplants the previous one. Indeed for high values of dilution (i.e $d \to 1$) it is immediate to check that the ratio among the various intensities of all the magnetizations stabilizes to the value one, i.e. $(m_{k}/m_{k-1}) \sim d^{k-1}(1-d)/d^{k-2}(1-d) \to 1$, hence in this regime all the magnetizations are raised with the same strength and the network is operationally set in a fully parallel retrieval mode:  the parallel retrieval state simply reads $\bm m = (\bar{m})\left(1,1,1,1,\cdots\right)$.
This picture is confirmed  by the plots shown in  Fig. \ref{fig:UnoTaccitua} and obtained by solving the self-consistency equations for the Mattis magnetizations related to the multitasking Hebbian network equipped with $K=2$ patterns that read as \cite{Barra-PRL2012}
\begin{eqnarray}\label{ammerda1}
m_1 &=& d(1-d) \tanh(\beta m_1) + \frac{(1-d)^2}{2}\left\{\tanh[\beta (m_1+m_2)] + \tanh [\beta (m_1-m_2)]\right\},\\  \label{ammerda2}
m_2 &=& d(1-d) \tanh(\beta m_1) + \frac{(1-d)^2}{2}\left\{\tanh[\beta (m_1+m_2)] - \tanh [\beta (m_1-m_2)]\right\}
\end{eqnarray}
where $\beta \in \mathbb R^+$ denotes the level of noise. 
\newline
We remark that these hierarchical or parallel organizations of the retrieval, beyond emerging naturally within the equilibrium description provided by Statistical Mechanics, are actually the real stationary states of the dynamics of these networks at work with diluted patterns as shown in Figure \ref{fig:DueTaccidete}.

\subsection{From parallel storing to parallel learning}\label{Gramsci}
In this section we revise the multitasking Hebbian network \cite{Barra-PRL2012,ElenaMult1} in such a way that it can undergo a \emph{learning} process instead of a simple \emph{storing} of patterns. In fact, in the typical learning setting, the set of definite patterns, hereafter promoted to play as ``archetypes'', to be reconstructed by the network is not available, rather, the network is exposed to examples, namely noisy versions of these archetypes. 

As long as enough examples are provided to the network, this is expected to correctly form its own representation of the archetypes such that, in further expositions to a new example related to a certain archetype, it will be able to retrieve it and, since then, suitably generalize it. 

This generalized Hebbian kernel has recently been introduced to encode unsupervised \cite{Agliari-Emergence} and supervised \cite{Aquaro-EPL} learning processes and, in the present paper, these learning rules are modified in order to deal with diluted patterns.

\bigskip

First, let us define the data-set these networks have to cope with: the archetypes are randomly drawn from the distribution \eqref{eq:quenched_xi}. 
Each archetype $\boldsymbol \xi^{\mu}$ is then used to generate a set of $M_{\mu}$ perturbed versions, denoted as $\boldsymbol{\eta}^{\mu,a}$ with $a=1,...,M_{\mu}$ and $\boldsymbol{\eta}^{\mu,a} \in \{ -1,0, +1\}^N$.
Thus, the overall set of examples to be supplied to the network is given by $\bm \eta =  \{\boldsymbol{\eta}^{\mu,a}\}_{\mu=1,...,K}^{a=1,...,M_{\mu}}$. Of course, different ways to sample examples are conceivable: for instance, one can require that the position of blank entries appearing in $\bm \xi^{\mu}$ is preserved over all the examples $\{\bm \eta ^{\mu,a}\}_{a=1,...,M_{\mu}}$, or one can require that only the number of blank entries $\sum_{i=1}^N \delta_{\xi_i^{\mu},0}$ is preserved (either strictly or in the average). Here we face the first case because it requires a simpler notation, but we refer to Appendix \ref{app:sampling} for a more general treatment.  
%
%
\begin{definition}
The entries of each examples are depicted following 
\begin{equation}
\label{eq:quenched_chi}
\mathbb{P}(\eta_i^{\mu,a}|\xi_{i}^{\mu})= \frac{1+r_\mu}{2}\delta_{\eta_i^{\mu,a},\xi_i^\mu} +  \frac{1-r_\mu}{2}\delta_{\eta_i^{\mu,a},-\xi_i^\mu}, 
\end{equation}
for $i=1, \hdots, N$ and $\mu=1,\hdots, K$. Notice that $r_\mu$ tunes the data-set quality: as $r_\mu \to 1$ examples belonging to the $\mu$-th set collapse on the archetype $\boldsymbol \xi^{\mu}$, while as $r_\mu \to 0$ examples turn out to be uncorrelated with the related archetype $\boldsymbol \xi^{\mu}$.
\end{definition}
As we will show in the next sections, the behavior of the system depends on the parameters $M_{\mu}$ and $r_{\mu}$ only through the combination $\frac{1-r_\mu^2}{M_\mu r_\mu^2}$, therefore, as long as the ratio $\frac{1-r_\mu^2}{M_\mu r_\mu^2}$ is $\mu$-independent, the theory shall not be affected by the specific choice of the archetype. 
Thus, for the sake of simplicity, hereafter we will consider $r$ and $M$ independent of $\mu$ and we will pose $\rho:=\frac{1-r^2}{M r^2}$.
Remarkably, $\rho$ plays as an information-content control parameter \cite{Aquaro-EPL}: to see this, let us focus on
the $\mu$-th pattern and $i$-th digit, whose related block is $ \boldsymbol{\eta}_i^{\mu}=(\eta_i^{\mu,1},\eta_i^{\mu,2},\hdots,\eta_i^{\mu,M})$, the error probability for any
single entry is $\mathcal{P}(\xi^{\mu}_i \neq 0)\mathcal{P}(\eta^{\mu,a}_i \neq \xi_i^\mu) = (1-d)(1 - r_\mu)/2$ and, by applying the majority rule on the block, we get $\mathcal{P}(\xi^{\mu}_i \neq 0)\mathcal P(\mathrm{sign}(\sum\limits_a \eta^{\mu,a}_i)\xi_i^\mu = -1) \underset{M\gg 1}{\approx}\frac{(1-d)}{2}\left[1 -\mathrm{erf}\left(1/\sqrt{2\rho}\right)\right]$  thus, by computing the conditional entropy $H_d(\xi_i^\mu|\bm \eta_i^\mu)$, that quantifies the amount
of information needed to describe the original message $\xi_i^\mu$ given the related block $\bm\eta_i^\mu$, we get
\begin{equation}
\begin{array}{lll}
     H_d(\xi_i^\mu|\bm \eta_i^\mu) &=& -\left[\dfrac{1+d}{2} +\dfrac{1-d}{2}\mathrm{erf}\left(\dfrac{1}{\sqrt{2\rho}}\right)\right]\;\,\log \left[\dfrac{1+d}{2} +\dfrac{1-d}{2}\mathrm{erf}\left(\dfrac{1}{\sqrt{2\rho}}\right)\right]
     \\\\
     &&-\left[\dfrac{1-d}{2} -\dfrac{1-d}{2}\mathrm{erf}\left(\dfrac{1}{\sqrt{2\rho}}\right)\right]\;\,\log\left[\dfrac{1-d}{2} -\dfrac{1-d}{2}\mathrm{erf}\left(\dfrac{1}{\sqrt{2\rho}}\right)\right]
\end{array}
     \label{eq:entropy_standard}
 \end{equation}
which is  monotonically increasing with $\rho$. Therefore, with a slight abuse of language, in the following $\rho$ shall be referred to as {\em data-set entropy}.
\newline
The available information is allocated directly in the synaptic coupling  among neurons (as in the standard Hebbian storing), as specified by the following supervised and unsupervised generalization of the multitasking Hebbian network:

\begin{definition}
Given $N$ binary neurons $\sigma_i = \pm 1$, with $i \in (1,...,N)$, the cost function (or \textit{Hamiltonian}) of the multitasking Hebbian neural network in the supervised regime is 
\begin{equation}
   \mathcal{H}^{(sup)}_{N,K, d,M,r}(\boldsymbol{\sigma} \vert \bm \eta) = -\dfrac{1}{2N}\dfrac{1}{ (1-d)(1+\rho)}\SOMMA{\mu=1}{K}\SOMMA{i,j=1}{N,N}\left(\dfrac{1}{M r}\SOMMA{a=1}{M}\eta_i^{\mu,a}\right)\left(\dfrac{1}{M r}\SOMMA{b=1}{M}\eta_j^{\mu,b}\right)\sigma_i\sigma_j.
    \label{def:H_dil_sup}
\end{equation}
\end{definition}
\begin{definition}
Given $N$ binary neurons $\sigma_i = \pm 1$, with $i \in (1,...,N)$, the cost function (or \textit{Hamiltonian}) of the multitasking Hebbian neural network in the unsupervised regime is 
\begin{equation}
   \mathcal{H}^{(unsup)}_{N,K, d,M,r}(\boldsymbol{\sigma} \vert \bm \eta) = -\dfrac{1}{2N}\dfrac{1}{ (1-d)(1+\rho)}\SOMMA{\mu=1}{K}\SOMMA{i,j=1}{N,N} \left(\frac{1}{Mr^2} \SOMMA{a=1}{M}\eta_i^{\mu,a}\eta_j^{\mu,a} \right) \sigma_i\sigma_j.
    \label{def:H_dil_unsup2}
\end{equation}
\end{definition}

\begin{remark}
The factor $(1-d)(1+\rho)$ appearing in \eqref{def:H_dil_sup} corresponds to 
$\mathbb E_{\xi},\mathbb E_{(\eta|\xi)} \left[\sum\limits_a \eta_i^{\mu,a}/(Mr)\right]^2$ and it plays as a normalization factor. A similar factor is also inserted in \eqref{def:H_dil_unsup2}.
\end{remark}

\begin{remark}
By direct comparison between \eqref{def:H_dil_sup} and \eqref{def:H_dil_unsup2}, the role of the ``teacher'' in the supervised setting is evident: in the unsupervised scenario, the network has to handle all the available examples regardless of their archetype label, while in the supervised counterpart a teacher has previously grouped examples belonging to the same archetype together (whence the double sum on $a=(1,...,M)$ and on $b=(1,...,M)$ appearing in eq. (\ref{def:H_dil_sup}), that is missing in eq. (\ref{def:H_dil_unsup2})).
\end{remark}

We investigate the model within a canonical framework: we introduce the Boltzmann-Gibbs measure
\begin{equation}
\mathcal{P}^{(sup,unsup)}_{N,K, \beta, d,M,r}(\bm \sigma  \vert \bm \eta) := \frac{\exp [- \beta \mathcal{H}^{(sup,unsup)}_{N,K, d,M,r}(\bm \sigma  \vert \bm \eta)]}{ \mathcal{Z}^{(sup,unsup)}_{N,K, \beta, d,M,r} (\bm \eta)},
\end{equation}
where 
\begin{equation} \label{eq:ppff}
\mathcal{Z}^{(sup,unsup)}_{N,K, \beta, d,M,r} (\bm \eta) := \sum_{\bm \sigma} \exp \left[ -\beta \mathcal H^{(sup,unsup)}_{N,K, d,M,r}(\bm \sigma \vert \bm \eta)\right]
\end{equation}
is the normalization factor, also referred to as partition function,  
and the parameter $\beta \in \mathbb{R}^+$, rules the broadness of the distribution in such a way that for $\beta \to 0$ (infinite noise limit) all the $2^N$ neural configurations are equally likely, while for $\beta \to \infty$ the distribution is delta-peaked at the configurations corresponding to the minima of the Cost function.

The average performed over the Boltzmann-Gibbs measure is denoted as
\begin{equation} \label{omegaNKM}
\omega_{N,K, \beta, d,M,r}^{(sup, unsup)}[\cdot]:=  \sum_{\boldsymbol \sigma}^{2^N} ~ \cdot ~ \mathcal{P}^{(sup,unsup)}_{N,K, \beta, d,M,r}(\bm \sigma  \vert \bm \eta).
\end{equation}
Beyond this average, we shall also take the so-called \emph{quenched} average, that is the average over the realizations of archetypes and examples, namely over the distributions \eqref{eq:quenched_xi} and \eqref{eq:quenched_chi}, and this is denoted as
\begin{equation}
\mathbb{E}[\cdot]=\mathbb{E}_{\xi}\mathbb{E}_{(\eta|\xi)} [\cdot].
\end{equation}
     
\begin{definition}    
The quenched statistical pressure of the network at finite network size $N$ reads as 
\begin{align}
    \mathcal A_{N,K, \beta, d,M,r}^{(sup,unsup)} = \frac{1}{N}\mathbb{E}\log  \mathcal{Z}^{(sup,unsup)}_{N,K, \beta, d,M,r} (\bm \eta).
    \label{PressureDef_sup}
\end{align}
In the thermodynamic limit we pose
\begin{equation}
\mathcal A^{(sup,unsup)}_{K, \beta, d,M,r} = \lim_{N \to \infty} \mathcal A^{(sup,unsup)}_{N,K, \beta, d,M,r}.
\end{equation} 
We recall that the statistical pressure equals the free energy times $-\beta$ (hence they convey the same information content). 
\end{definition}


\begin{definition} 
The network capabilities can be quantified by introducing the following order parameters,   for $\mu=1, \hdots , K$,
\begin{equation}
    \begin{array}{lll}
    \label{eq:orderparameters}
         m_\mu& :=\dfrac{1}{N}\SOMMA{i=1}{N}\xi_i^{\mu}\sigma_i,
            \\
         n_{\mu,a}& :=\dfrac{1}{(1+\rho)r}\dfrac{1}{N}\SOMMA{i=1}{N}\eta_i^{\mu,a}\sigma_i,
             \\
         n_\mu& :=\dfrac{1}{M}\SOMMA{a=1}{M}n_{\mu,a}=\dfrac{1}{(1+\rho)r}\dfrac{1}{N M}\SOMMA{i,a=1}{N,M}\eta_i^{\mu,a}\sigma_i, \\
    \end{array}
\end{equation}
\end{definition}
We stress that, beyond the fairly standard $K$ Mattis magnetizations $m_{\mu}$, which assess the alignment of the neural configuration $\boldsymbol \sigma$ with the archetype $\boldsymbol \xi^{\mu}$, we need to introduce also $K$ empirical Mattis magnetizations $n_{\mu}$, which compare  the alignment of the neural configuration with the average of the examples labelled with $\mu$, as well as $K \times M$ single-example Mattis magnetizations $n_{\mu,a}$, which measure the proximity between the neural configuration and a specific example.
An intuitive way to see the suitability of the $n_{\mu}$'s and of the $n_{\mu,a}$'s is by noticing that the cost functions $\mathcal H^{(sup)}$ and $\mathcal H^{(unsup)}$ can be written as a quadratic form in, respectively, $n_{\mu}$ and $n_{\mu,a}$; on the other hand, the $m_{\mu}$'s do not appear therein explicitly as the archetypes are unknowns in principle. 
\newline
Finally, notice that no spin-glass order parameters is needed here  (since we are working in the low-storage regime \cite{Amit,Coolen}).

\section{Parallel Learning: the picture by statistical mechanics}\label{sec:RS_Guerra_sup}
\subsection{Study of the Cost function and its related Statistical Pressure}\label{SezioneTreUno}
To inspect the emergent capabilities of these networks, we need to estimate the order parameters introduced in Equations \eqref{eq:orderparameters} and analyze their behavior versus the control parameters $K, \beta, d,M, r$. To this task we need an explicit expression of the statistical pressure in terms of these order parameters so to extremize the former over the latter. In this Section we carry on this investigation in the thermodynamic limit and in the low storage scenario by relying upon Guerra's interpolating techniques (see e.g., \cite{Barbier,Fachechi1,Guerra1,Guerra2}): the underlying idea is to introduce an interpolating statistical pressure whose extrema are the original model (which is the target of our investigation but we can be not able to address it directly)  and a simple one (which is usually a one-body model that we can solve exactly). We then start by evaluating the solution of the latter and next we propagate the obtained solution back to the original model by the fundamental theorem of calculus, integrating on the interpolating variable. Usually, in this last passage, one assumes replica symmetry, namely that the order-parameter fluctuations are negligible in the thermodynamic limit as this makes the integral propagating the solution analytical. In the low-load scenario replica symmetry holds exactly, making the following calculation rigorous. 
In fact, as long as $K/N \to 0$ while $N \to \infty$, the order parameters self-average around their means \cite{Bovier,Tala}, that will be denoted by a bar, that is
\begin{eqnarray}\label{PdiEmme}
\lim_{N \to \infty} \mathcal P_{N,K, \beta, d,M,r}(m_{\mu}) &=& \delta\left(m_{\mu}-\bar{m}_{\mu} \right),\ \ \ \forall \mu \in (1,...,K), \\  \label{PdiEmme}  
\lim_{N \to \infty} \mathcal P_{N,K, \beta, d,M,r}(n_{\mu}) &=& \delta\left(n_{\mu}-\bar{n}_{\mu} \right),\ \ \ \ \  \forall \mu \in (1,...,K),
\end{eqnarray}
where $\mathcal P_{N,K, \beta, d,M,r}$ denotes the  Boltzmann-Gibbs probability distribution for the observables considered. We anticipate that the mean values of these distributions are independent of the training (either supervised or unsupervised) underlying the Hebbian kernel.

Before proceeding, we slightly revise the partition functions \eqref{eq:ppff} by inserting an extra term in their exponents because it allows to apply the functional generator technique to evaluate the Mattis magnetizations. This implies the following modification, respectively in the supervised and unsupervised settings, of the partition function 
\begin{definition}
Given the interpolating parameter $t \in [0,1]$, the auxiliary field $J$ and the constants $\{\psi_\mu\}_{\mu=1,\hdots,K} \in \mathbb{R}$  to be set a posteriori, Guerra's interpolating partition function for the supervised and unsupervised multitasking Hebbian networks is given, respectively, by
\begin{equation}\small
\begin{array}{lll}
     \mathcal{Z}^{(sup)}_{N,K, \beta, d,M,r}( \bm \eta ;J, t)=&\SOMMA{\lbrace\boldsymbol{\sigma}\rbrace}{} \displaystyle\int\,d\mu(z_{\mu}) \exp{\Bigg[J \sum_{\mu,i}  \xi_i^\mu \si+\dfrac{t\beta N (1+\rho)}{2(1-d)}\sum_\mu n_\mu^2(\boldsymbol{\sigma})}+(1-t)\dfrac{N}{2}\sum_\mu\psi_\mu \, n_\mu(\boldsymbol{\sigma})\Bigg].
\end{array}
\label{def:partfunct_GuerraRS}
\end{equation}
\normalsize
\begin{equation}\small
\begin{array}{lll}
\mathcal{Z}^{(unsup)}_{N,K, \beta, d,M,r}( \bm \eta ;J, t)=\SOMMA{\lbrace\boldsymbol{\sigma}\rbrace}{} \displaystyle\int\,d\mu(z_{\mu}) \exp{\Bigg[J \sum_{\mu,i} \xi_i^\mu \si+ \dfrac{t\beta N (1+\rho)}{2 (1-d)M}\SOMMA{\mu=1}{K}\SOMMA{a=1}{M} n_{\mu,a}^2(\boldsymbol{\sigma})+(1-t)N\sum_{\mu,a}\psi_{\mu} \, n_{\mu,a}(\boldsymbol{\sigma})\Bigg]}.
\end{array}
\label{def:partfunct_GuerraRS_unsup}
\end{equation}
\normalsize
\end{definition}
More precisely, we added the term $J \sum_\mu\sum_i \xi_i^\mu \si$ that allows us to to ``generate'' the expectation of the Mattis magnetization $m_\mu$ by evaluating the derivative w.r.t. $J$ of the quenched statistical pressure at $J=0$.
This operation is not necessary for {\em Hebbian storing}, where the Mattis magnetization is  a natural order parameter (the Hopfield Hamiltonian can be written as a quadratic form in $m_{\mu}$, as standard in AGS theory \cite{Amit}), while for {\em Hebbian learning}  (whose cost function can be written as a quadratic form in $n_{\mu}$, not in $m_{\mu}$ as the network does not experience directly the archetypes) we need such a term for otherwise the expectation of the Mattis magnetization would not be accessible. This operation gets redundant in the $M \to \infty$ limit, where $m_{\mu}$ and $n_{\mu}$ become proportional by a standard Central Limit Theorem (CLT) argument (see also Sec.~\ref{sec:datalimit_sup} and \cite{Aquaro-EPL}).
Clearly, $\mathcal{Z}^{(sup,unsup)}_{N,K, \beta, d,M,r} (\bm \eta) = \lim_{J \rightarrow 0} \mathcal{Z}^{(sup,unsup)}_{N,K, \beta, d,M,r}( \bm \eta ;J)$ and these generalized interpolating partition functions, provided in eq.s (\ref{def:partfunct_GuerraRS}) and (\ref{def:partfunct_GuerraRS_unsup}) respectively, recover the original models when $t=1$, while they return a simple one-body model  at $t=0$.
\newline
Conversely, the role of the $\psi_\mu$'s is instead that of mimicking, as close as possible, the true post-synaptic field perceived by the neurons.
%
%

These partition functions can be used to define a generalized measure and a generalized Boltzmann-Gibbs average that we indicate by $\omega^{(sup,unsup)}_t[\cdot]$.
Of course, when $t=1$ the standard Boltzmann-Gibbs measure and related averages are recovered.

Analogously, we can also introduce a generalized interpolating quenched statistical pressures as 

\begin{definition} The interpolating  statistical pressure for the multitasking Hebbian neural network is introduced as
\begin{eqnarray}
\mathcal{A}^{(sup,unsup)}_{N,K\beta, d,M,r}(J, t) &\coloneqq& \frac{1}{N} \mathbb{E} \left[  \ln\mathcal{Z}^{(sup,unsup)}_{N,K, \beta, d,M,r}(\bm \eta; J, t)  \right],
\label{hop_GuerraAction}
\end{eqnarray}
and, in the thermodynamic limit,
\begin{equation}
\mathcal{A}^{(sup,unsup)}_{K,\beta, d,M,r}( J, t) \coloneqq \lim_{N \to \infty} \mathcal{A}^{(sup,unsup)}_{N,K, \beta, d,M,r}(J,t).
\label{hop_GuerraAction_TDL}
\end{equation}
Obviously, by setting $t=1$ in the interpolating pressures we recover the original ones, namely $\mathcal A^{(sup,unsup)}_{K, \beta, d,M,r} (J) = \mathcal{A}^{(sup,unsup)}_{K, \beta, d,M,r}(  J, t=1 )$, which we finally evaluate at $J=0$. 
%
%
%
\end{definition}


%
%
We are now ready to state the next 
\begin{theorem}
\label{prop:highnoise}
In the thermodynamic limit ($N\to\infty$) and in the low-storage regime ($K/N \to 0$), the quenched statistical pressure of the multitasking Hebbian network -- trained under supervised or unsupervised learning -- reads as 
\begin{equation}
\begin{array}{lll}
\mathcal{A}^{(sup,unsup)}_{K,\beta, d,M,r}(J)
 &=&\mathbb{E}\left\{\ln{}\Bigg[2\cosh\left(J\SOMMA{\mu=1}{K}\xi^\mu +\dfrac{\beta}{1-d}\SOMMA{\mu=1}{K}\n_\mu\hat{\eta}^\mu\right)\Bigg]\right\}-\dfrac{\beta}{1-d}(1+\rho)\SOMMA{\mu=1}{K}\n_\mu^2.
\end{array}
\end{equation}
\normalsize
where $\mathbb{E}=\mathbb{E}_\xi\mathbb{E}_{(\eta|\xi)}$, $\hat{\eta}^\mu=\dfrac{1}{Mr}\SOMMA{a=1}{M}\eta_i^{\mu,a}$, and the values $\bar{n}_\mu$  must fulfill the following self-consistent equations
\begin{equation}
    \begin{array}{lll}
        \n_\mu=\dfrac{1}{(1+\rho)}\mathbb{E}\left\{\tanh\left[\dfrac{\beta}{(1-d)}\SOMMA{\nu=1}{K}\n_\nu\hat{\eta}^\nu\right]\hat{\eta}^\mu\right\},\ \ \ \forall \mu \in (1,...,K),
        \label{eq:n_selfRS_sup} 
    \end{array}
\end{equation}
as these values of the order parameters are extremal for the statistical pressure $\mathcal{A}^{(sup,unsup)}_{K,\beta, d,M,r}(J=0)$.
\normalsize
\end{theorem}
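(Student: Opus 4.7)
The plan is to exploit Guerra's mechanical analogy already set up through the interpolating partition functions \eqref{def:partfunct_GuerraRS} and \eqref{def:partfunct_GuerraRS_unsup}. By the fundamental theorem of calculus we write
\begin{equation*}
\mathcal{A}^{(sup,unsup)}_{N,K,\beta,d,M,r}(J,t=1)=\mathcal{A}^{(sup,unsup)}_{N,K,\beta,d,M,r}(J,t=0)+\int_0^1 \partial_s \mathcal{A}^{(sup,unsup)}_{N,K,\beta,d,M,r}(J,s)\, ds,
\end{equation*}
and then send $N\to\infty$. The left-hand side is the target pressure; the right-hand side splits into a tractable one-body piece at $t=0$ and a source correction produced by the interpolation. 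The auxiliary constants $\{\psi_\mu\}$ will be tuned at the end so that the transport integral closes on itself.

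First I would compute $\mathcal{A}(J,0)$. At $t=0$ the exponent is linear in each $\sigma_i$, so the sum over spin configurations factorizes site by site and produces a single $\log[2\cosh(\cdot)]$ with an effective field built from $J\sum_\mu \xi_i^\mu$ and a contribution proportional to $\sum_\mu \psi_\mu\, \hat\eta_i^\mu$ (and the analogous single-example variant in the unsupervised case). Next, I would take the streaming derivative: in the supervised case one obtains
\begin{equation*}
\partial_t \mathcal{A}^{(sup)}_{N,K,\beta,d,M,r}(J,t)=\frac{\beta(1+\rho)}{2(1-d)}\,\mathbb{E}\,\omega^{(sup)}_t\!\Big[\sum_\mu n_\mu^2\Big]-\frac{1}{2}\,\mathbb{E}\,\omega^{(sup)}_t\!\Big[\sum_\mu \psi_\mu n_\mu\Big],
\end{equation*}
with the strictly analogous expression for the unsupervised model written in the $n_{\mu,a}$'s.

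The central step is to invoke replica symmetry, which is rigorous in the low-storage regime $K/N\to 0$ because, as recalled around \eqref{PdiEmme}, the order parameters concentrate on their means $\bar n_\mu$ in the thermodynamic limit. I would then expand $n_\mu^2=\bar n_\mu(2n_\mu-\bar n_\mu)+(n_\mu-\bar n_\mu)^2$ and drop the last term under $\mathbb E\,\omega_t$. This linearization suggests the choice $\psi_\mu=\dfrac{2\beta(1+\rho)}{(1-d)}\bar n_\mu$, which cancels the net linear dependence of $\partial_t\mathcal A$ on $n_\mu$ and leaves only the constant $-\dfrac{\beta(1+\rho)}{(1-d)}\sum_\mu \bar n_\mu^2$. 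After integrating from $0$ to $1$, adding the one-body $t=0$ piece, and noticing that with this choice the effective field at $t=0$ reproduces exactly $\dfrac{\beta}{1-d}\sum_\mu \bar n_\mu \hat\eta^\mu$, I recover the expression stated in the theorem. For the unsupervised model the identical answer emerges because the concentration of $n_{\mu,a}$ on the same $\bar n_\mu$ yields $M^{-1}\sum_a n_{\mu,a}^2\to\bar n_\mu^2$, so the quadratic transport term collapses to the same constant.

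Finally, the self-consistency equations \eqref{eq:n_selfRS_sup} follow by imposing $\partial_{\bar n_\mu}\mathcal A=0$: differentiating the $\log\cosh$ generates the $\tanh$ and the prefactor $\hat\eta^\mu$, while the extra $-\tfrac{\beta}{1-d}(1+\rho)\bar n_\mu^2$ supplies the matching $(1+\rho)$ normalization in the denominator. The main obstacles I anticipate are two: verifying rigorously that the replica-symmetric substitution $\mathbb{E}\,\omega_t[(n_\mu-\bar n_\mu)^2]\to 0$ holds uniformly in $t\in[0,1]$ (which follows from the Bovier--Talagrand type concentration arguments cited after \eqref{omegaNKM}), and carrying out simultaneously the double quenched average $\mathbb E_\xi\mathbb E_{\eta|\xi}$ so that the single object $\hat\eta^\mu$ (with the right prefactor $(1+\rho)$ coming from $\mathbb E[(\hat\eta^\mu)^2]$) appears cleanly in both the one-body and the correction terms; the equivalence of supervised and unsupervised answers then reduces to the observation that the difference of their transport terms is a sum of vanishing fluctuation contributions.
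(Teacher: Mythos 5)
Your proposal follows essentially the same route as the paper's own proof: Guerra interpolation, the sum rule via the fundamental theorem of calculus, the factorized one-body computation at $t=0$, linearization of $\mathbb{E}\,\omega_t[n_\mu^2]$ through the vanishing of $\mathbb{E}\,\omega_t[(n_\mu-\bar n_\mu)^2]$ in the low-load limit, the tuning of $\psi_\mu$ to make the transport term a constant, and extremization to obtain \eqref{eq:n_selfRS_sup}. The only discrepancy is a factor-of-two bookkeeping slip: with your choice $\psi_\mu=\tfrac{2\beta(1+\rho)}{1-d}\bar n_\mu$ (which is the correct one given the $(1-t)\tfrac{N}{2}\sum_\mu\psi_\mu n_\mu$ source in \eqref{def:partfunct_GuerraRS}), the residual constant is $-\tfrac{\beta(1+\rho)}{2(1-d)}\sum_\mu\bar n_\mu^2$ rather than the $-\tfrac{\beta(1+\rho)}{1-d}\sum_\mu\bar n_\mu^2$ you report; the former is the value that, upon extremization, reproduces \eqref{eq:n_selfRS_sup} with the prefactor $\tfrac{1}{1+\rho}$ (and agrees with the appendix version of the theorem), so the mismatch you inherit is really a typo in the main-text statement rather than a flaw in your argument.
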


\begin{corollary} \label{C1}
By considering the auxiliary field $J$ coupled to $m_{\mu}$ and recalling that $\lim_{N \to \infty}m_{\mu}=\bar{m}_{\mu}$, we can write down a self-consistent equation also for the Mattis magnetization as $\bar{m}_\mu=  \partial_J \mathcal{A}^{(sup,unsup)}_{K,\beta, d,M,r}(J)\vert_{J=0}$, thus we have
\begin{equation}
    \begin{array}{lll}
    \m_\mu = \mathbb{E} \left\{\tanh\left[\dfrac{\beta}{(1-d)}\SOMMA{\nu=1}{K}\n_\nu\hat{\eta}^\nu \right] \xi^\mu\right\},\ \ \  \forall \mu \in (1,...,K).
    \end{array}
    \label{eq:m_selfRS_sup}
\end{equation}
\normalsize
\end{corollary}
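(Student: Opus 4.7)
The strategy is Guerra's interpolation on the parameter $t\in[0,1]$ appearing in \eqref{def:partfunct_GuerraRS} and \eqref{def:partfunct_GuerraRS_unsup}: evaluate $\mathcal{A}^{(sup,unsup)}(J,t)$ at the solvable endpoint $t=0$, where the partition function factorizes over sites into a one-body problem, and recover the target $\mathcal{A}^{(sup,unsup)}(J,1)$ by the fundamental theorem of calculus,
\begin{equation}
\mathcal{A}(J,1)=\mathcal{A}(J,0)+\int_0^1\partial_s\mathcal{A}(J,s)\,ds,
\end{equation}
reserving the interpolating sources $\{\psi_\mu\}$ for a judicious \emph{a posteriori} choice that will cancel the fluctuation terms.

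At $t=0$ the spin-dependent exponent collapses to $\sum_i\sigma_i h_i$ with an effective field $h_i$ linear in $\xi^\mu_i$ and $\hat\eta^\mu_i=(Mr)^{-1}\sum_a\eta^{\mu,a}_i$. Performing the spin trace and the quenched average yields, in the supervised case,
\begin{equation}
\mathcal{A}^{(sup)}(J,0)=\mathbb{E}\ln\Bigl[2\cosh\Bigl(J\sum_\mu\xi^\mu+\tfrac{1}{2(1+\rho)}\sum_\mu\psi_\mu\hat\eta^\mu\Bigr)\Bigr],
\end{equation}
while in the unsupervised setting the extra sum over $a=1,\dots,M$ in the interpolating source simply promotes the prefactor $\tfrac{1}{2(1+\rho)}\mapsto\tfrac{M}{1+\rho}$.

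Next I would differentiate in $t$: this brings down $\tfrac{\beta(1+\rho)}{2(1-d)}\omega_t(n_\mu^2)-\tfrac{\psi_\mu}{2}\omega_t(n_\mu)$ per class in the supervised case (and the analogous expression involving $n_{\mu,a}$ in the unsupervised one). Using the decomposition $\omega_t(n_\mu^2)=\omega_t((n_\mu-\bar n_\mu)^2)+2\bar n_\mu\omega_t(n_\mu)-\bar n_\mu^2$ together with the replica-symmetric self-averaging \eqref{PdiEmme} valid in the low-storage regime -- which forces $\mathbb{E}[\omega_t((n_\mu-\bar n_\mu)^2)]\to 0$ uniformly in $t$ -- I fix $\psi_\mu^{(sup)}=\tfrac{2\beta(1+\rho)}{1-d}\bar n_\mu$ and $\psi_\mu^{(unsup)}=\tfrac{\beta(1+\rho)}{(1-d)M}\bar n_\mu$ so that the linear-in-$\omega_t(n_\mu)$ term vanishes. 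A key point is that after substitution the two different $\psi_\mu$'s produce the \emph{same} coefficient $\tfrac{\beta\bar n_\mu}{1-d}$ in front of $\hat\eta^\mu$ inside the $\cosh$ at $t=0$: this structural coincidence is precisely why the supervised and unsupervised protocols yield identical quenched pressures. The residual $\partial_t\mathcal{A}$ then collapses to the $t$-independent constant $-\tfrac{\beta(1+\rho)}{2(1-d)}\sum_\mu\bar n_\mu^2$, whose integration over $[0,1]$ assembles the claimed closed form. The self-consistency \eqref{eq:n_selfRS_sup} is obtained by extremizing $\mathcal{A}(J=0)$ in $\bar n_\mu$, and Corollary \ref{C1} follows from $\bar m_\mu=\partial_J\mathcal{A}(J)|_{J=0}$ computed directly on the closed form.

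The main technical obstacle is the uniform-in-$t$ control of the self-averaging $\mathbb{E}[\omega_t((n_\mu-\bar n_\mu)^2)]\to 0$: one must verify that the perturbed Gibbs measure, although carrying the $\psi_\mu$-dependent source, still defines a bona fide one-body model with uniformly bounded random fields, so that the standard low-storage concentration estimates à la Bovier--Gayrard and Talagrand \cite{Bovier,Tala} apply and commute with the $t$-integration. A subsidiary but easy check is that the quenched expectations factorize correctly across the $K$ classes thanks to the independence of the archetypes $\{\boldsymbol\xi^\mu\}$ and the conditional independence of the examples given their archetype encoded in \eqref{eq:quenched_xi}--\eqref{eq:quenched_chi}.
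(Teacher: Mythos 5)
Your proposal follows essentially the same route as the paper's Appendix \ref{proof:highnoise}: Guerra interpolation in $t$, the fundamental theorem of calculus, the decomposition $\omega_t(n_\mu^2)=\omega_t((n_\mu-\bar n_\mu)^2)+2\bar n_\mu\omega_t(n_\mu)-\bar n_\mu^2$ with low-storage self-averaging killing the fluctuation term, the choice of $\psi_\mu\propto\bar n_\mu$ to linearize, and finally $\bar m_\mu=\partial_J\mathcal{A}|_{J=0}$ on the closed form. The only (harmless) discrepancy is your $\psi_\mu^{(sup)}=\tfrac{2\beta(1+\rho)}{1-d}\bar n_\mu$ versus the paper's $\tfrac{\beta(1+\rho)}{1-d}\bar n_\mu$, which merely reflects the factor-$\tfrac{N}{2}$ convention in \eqref{def:partfunct_GuerraRS} and leaves the final expressions unchanged.
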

For the proof of Proposition \ref{prop:highnoise} and of Corollary \ref{C1} we refer to Appendix \ref{proof:highnoise}.


We highlighted that the expressions of the quenched statistical pressure for a network trained with or without the supervision of a teacher do actually coincide: intuitively, this happens because we are considering only a few archetypes (i.e. we work at low load), consequently, the minima of the cost function are well separated and there is only a negligible role of the teacher in shaping the landscape to avoid overlaps in their basins of attractions. Clearly, this is expected to be no longer true in the high load setting and, indeed, it is proven not to hold for non-diluted patterns, where supervised and unsupervised protocols give rise to different outcomes \cite{Agliari-Emergence,Aquaro-EPL}. 
By a mathematical perspective the fact that, whatever the learning procedure, the expression of the quenched statistical pressure is always the same, is a consequence of standard concentration of measure arguments \cite{Barbier,Talagrand}  as, in the $N\to\infty$ limit,  
beyond eq. (\ref{PdiEmme}), it is also  $\mathcal P(n_{\mu,a}) \to \delta(n_{\mu,a} - \bar{n}_{\mu})$. 
\newline
\newline
The self-consistent equations \eqref{eq:m_selfRS_sup} have been solved numerically for several values of parameters and results for K=2 and for $K=3$ are shown in  Fig.~\ref{fig:loss1} (where also the values of the cost function is reported)  and Fig.~\ref{fig:my_label} respectively. We also checked the validity of these results by comparing them with the outcomes of Monte Carlo simulations, finding an excellent asymptotic agreement; further, in the large $M$ limit, the magnetizations eventually converge to the values predicted by the theory developed in the storing framework, see eq. \eqref{eq:parallel_ans}.
Therefore, in both the scenarios, the hierarchical or parallel organization of the magnetization's amplitudes are recovered: beyond the numerical evidence just mentioned, in Appendix \ref{app:calc}  an analytical proof is provided.

\begin{figure}
    \centering
    \includegraphics[width=15.5cm]{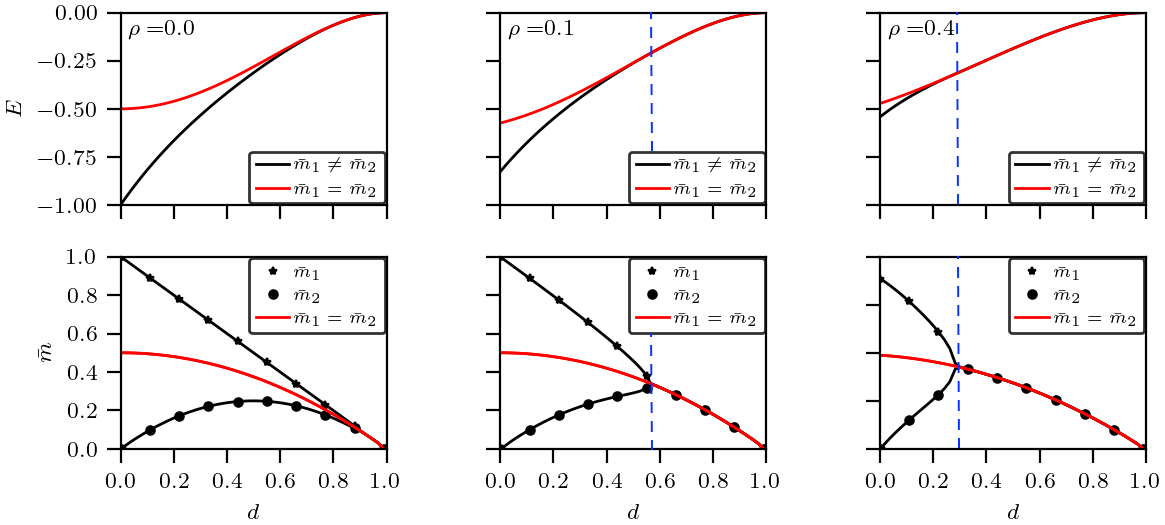}
    \caption{Snapshots of cost function (upper plots) -where we use the label $E$ for {\em energy}- and magnetizations (lower plots) for data-sets generated by $K=2$ archetypes and at different entropies,  in the noiseless limit $\beta \to \infty$. Starting by $\rho=0.0$ we see that the hierarchical regime (black lines) dominates at relatively mild dilution values (i.e., the energy pertaining to this configuration is lower w.r.t. the parallel regime), while for $d \to 1$ the hierarchical ordering naturally collapse to  the parallel regime (red lines), where all the magnetizations acquire the same values. Further note how, by increasing the entropy in the data-set (e.g. for $\rho=0.1$ and $\rho=0.4$), the domain of validity of the parallel regime enlarges (much as increasing $\beta$ in the network, see Fig. \ref{fig:UnoTaccitua}). The vertical blue lines mark the transitions  between these two regimes as captured by Statistical Mechanics: it corresponds to switching from the white to the green regions of the phase diagrams of Fig. \ref{fig:stabilità}. \label{fig:loss1}}
\end{figure}

\begin{figure}
    \centering
    \includegraphics[width=15cm]{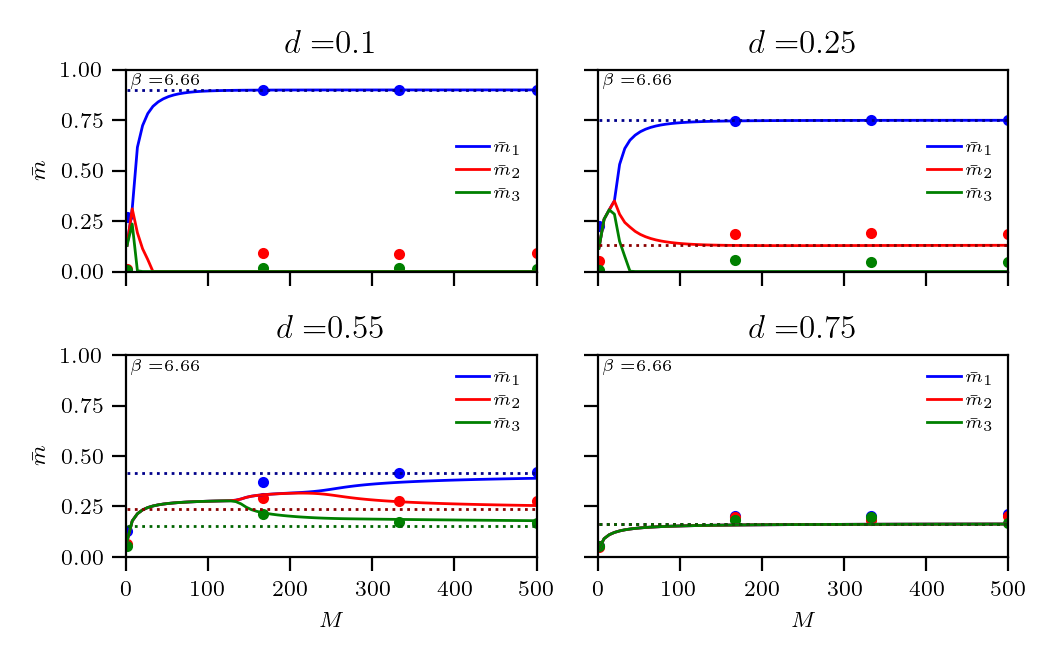}
    \caption{{\em Behaviour of the Mattis magnetizations as more and more examples are supplied to the network.}  Monte Carlo numerical checks (colored dots, $N=6000$) for a diluted network with $r=0.1$ and $K=3$ are
 in plain agreement with the theory: solutions of the self-consistent equation for the Mattis magnetizations reported in the Corollary \ref{C1} are shown as solid lines. As dilution increases, the network behavior departs from a Hopfield-like retrieval ($d=0.1$) where just the blue magnetization is raised (serial pattern recognition) to the hierarchical regime  ($d=0.25$ and $d=0.55$) where multiple patterns are simultaneously retrieved with different amplitudes, while for higher values of dilution the network naturally evolves toward the parallel regime ($d=0.75$) where all the magnetizations are raised and with the same strength.
 Note also the asymptotic agreement with the dotted lines, whose values are those predicted by the multitasking Hebbian storage \cite{Barra-PRL2012}. \label{fig:my_label}
}
\end{figure}


\subsubsection{Low-entropy data-sets: the {\em} Big-Data limit}
\label{sec:datalimit_sup}

As discussed in Sec.~\ref{Gramsci}, the parameter $\rho$ quantifies the amount of information needed to describe the original message
$\boldsymbol\xi^\mu$ given the set of related examples $\{\boldsymbol\eta^{\mu,a}\}_{a=1,...,M}$. In this section we focus on the case $\rho \ll 1$ that corresponds to a highly-informative data-set; we recall that in the limit $\rho \to 0$ we get a data-set where either the items ($r \to 1$) or their empirical average ($M \to \infty$, $r$ finite) coincide with the archetypes, in such a way that the theory collapses to the standard Hopfield reference. 
\newline
As explained in Appendix \ref{proof:prop_m_n}, we start from the self-consistent equations \eqref{eq:n_selfRS_sup}-\eqref{eq:m_selfRS_sup} and we exploit the Central Limit Theorem to write $\hat{\eta}^\mu\sim \xi^\mu\left(1+\lambda_\mu\sqrt{\rho}\right)$, where $\lambda_\mu\sim\mathcal{N}(0,1)$. In this way we reach the simpler expressions given by the next
\begin{proposition} \label{prop:m_n}
In the low-entropy data-set scenario, preserving the low storage and thermodynamic limit assumptions, the two sets of order parameters of the theory, $\bar{m}_{\mu}$ and $\bar{n}_{\mu}$ become related by the following equations
\begin{eqnarray}
   \n_\mu&=&\dfrac{\m_\mu}{(1+\rho)}+\b\dfrac{\rho\,\n_\mu}{(1+\rho)}\mathbb{E}_{\xi,Z}\left\{\left[1-\tanh^2\left(g(\beta,  Z, \bar{\bm n})\right)\right]\left(\xi^\mu\right)^2\right\},\label{eq:n_Mgrande}
    \\
    \notag
    \\
     \m_\mu &= &\mathbb{E}_{\xi,Z} \left\{\tanh\left[g(\beta, \bm \xi,  Z, \bar{\bm n})\right] \xi^\mu\right\},\label{eq:m_Mgrande}
\end{eqnarray}
where
\begin{align}
    &g(\beta,\bm \xi,  Z, \bar{\bm n})=\b\SOMMA{\nu=1}{K}\n_\nu\xi^\nu +\b\,Z\sqrt{\rho\,\SOMMA{\nu=1}{K}\n_\nu^2\left(\xi^\nu\right)^2}  \;
    \label{eq:g_of_super_n}
\end{align}
and $Z \sim \mathcal{N}(0,1)$ is a standard Gaussian variable. Furthermore, to lighten the notation and assuming $d \neq 1$ with no loss of generality, we posed
\begin{equation}
\b = \dfrac{\beta}{1-d}.
\label{eq:tildebeta}
\end{equation}
\end{proposition}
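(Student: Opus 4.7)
The plan is to derive both self-consistent equations from those in Proposition \ref{prop:highnoise}, using two ingredients: the Central Limit Theorem to approximate the empirical field $\hat\eta^\mu$ at finite $\rho$, and Gaussian integration by parts (Stein's lemma) to handle the one surviving $\lambda_\mu$ factor that cannot be absorbed into a single Gaussian.

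First, I would apply the CLT to $\hat\eta^\mu = (Mr)^{-1}\sum_{a=1}^M \eta^{\mu,a}$. Conditional on $\xi^\mu$, the summands $\eta^{\mu,a}$ are i.i.d. with conditional mean $r\xi^\mu$ and conditional variance $(1-r^2)(\xi^\mu)^2$, so as $M$ grows $\hat\eta^\mu$ is Gaussian with mean $\xi^\mu$ and variance $\rho(\xi^\mu)^2$; this is precisely the representation $\hat\eta^\mu \simeq \xi^\mu(1+\lambda_\mu\sqrt{\rho})$ with $\lambda_\mu \sim \mathcal N(0,1)$ i.i.d. and independent of the archetypes. Plugging this into the argument of the hyperbolic tangent appearing in \eqref{eq:n_selfRS_sup}-\eqref{eq:m_selfRS_sup} produces
\[
\tilde\beta\sum_\nu \bar n_\nu \hat\eta^\nu \;\simeq\; \tilde\beta\sum_\nu \bar n_\nu\xi^\nu \;+\; \tilde\beta\sqrt{\rho}\sum_\nu \bar n_\nu\xi^\nu\lambda_\nu,
\]
and the fluctuation piece, being a Gaussian linear combination, is equidistributed (conditionally on $\bm\xi$) with $\tilde\beta\, Z \sqrt{\rho\sum_\nu \bar n_\nu^2 (\xi^\nu)^2}$ for a single standard Gaussian $Z$. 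This is exactly $g(\beta,\bm\xi,Z,\bar{\bm n})$ of \eqref{eq:g_of_super_n}, and substituting in \eqref{eq:m_selfRS_sup} yields \eqref{eq:m_Mgrande} immediately because the prefactor $\xi^\mu$ there does not involve $\bm\lambda$.

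The equation \eqref{eq:n_Mgrande} is subtler because the prefactor $\hat\eta^\mu$ outside the tanh in \eqref{eq:n_selfRS_sup} carries the variable $\lambda_\mu$, which appears also inside the tanh and therefore cannot simply be absorbed into the collective Gaussian $Z$. I would split $\hat\eta^\mu = \xi^\mu + \sqrt{\rho}\,\xi^\mu\lambda_\mu$, so that
\[
\bar n_\mu = \tfrac{1}{1+\rho}\,\E\!\left[\tanh(F)\,\xi^\mu\right] + \tfrac{\sqrt{\rho}}{1+\rho}\,\E\!\left[\tanh(F)\,\xi^\mu\lambda_\mu\right],
\]
where $F$ denotes the argument of the tanh. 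The first term equals $\bar m_\mu/(1+\rho)$ after recognising the integrand as the one defining $\bar m_\mu$ (and replacing the Gaussian linear combination in $F$ by $Z$ as above). For the second term I would apply Stein's lemma to the factor $\lambda_\mu$, giving $\E[\lambda_\mu\tanh(F)] = \E[\partial_{\lambda_\mu}\tanh(F)] = \tilde\beta\sqrt{\rho}\,\bar n_\mu\,\E[(1-\tanh^2 F)\,\xi^\mu]$. Multiplying by $\sqrt{\rho}\,\xi^\mu/(1+\rho)$ and re-expressing $F$ in terms of the collective Gaussian $g$ produces exactly \eqref{eq:n_Mgrande}.

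The main technical obstacle I anticipate is controlling the CLT approximation rigorously: one must show that the $\mathcal O(M^{-1/2})$ corrections to the Gaussian law of $\hat\eta^\mu$ do not spoil the identities, which requires uniform integrability of the bounded functions $\tanh$ and $1-\tanh^2$ against polynomial test functions $\xi^\mu$ and $\lambda_\mu$. A related delicate point is the legitimacy of replacing the collective fluctuation $\sum_\nu \bar n_\nu \xi^\nu\lambda_\nu$ by $Z\sqrt{\sum_\nu \bar n_\nu^2(\xi^\nu)^2}$ while keeping $\lambda_\mu$ as an independent variable in the Stein step: I would handle this by conditioning on $\bm\xi$ and performing the Gaussian integration by parts directly in the full $(\lambda_1,\dots,\lambda_K)$ variables, only afterwards collapsing the remaining sum into the single variable $Z$ inside the expectation, which is legitimate because $\tanh^2$ depends on $\bm\lambda$ only through that linear combination.
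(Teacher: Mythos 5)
Your proposal is correct and follows essentially the same route as the paper's own proof in Appendix \ref{proof:prop_m_n}: the CLT representation $\hat\eta^\mu \simeq \xi^\mu(1+\lambda_\mu\sqrt{\rho})$, Stein's lemma applied to the $\lambda_\mu$ factor multiplying the $\tanh$, and the collapse of the linear combination of independent Gaussians into a single $Z$ with variance $\rho\sum_\nu \bar n_\nu^2(\xi^\nu)^2$. Your write-up is in fact more explicit than the paper's about the order of operations (integrating by parts in the full $(\lambda_1,\dots,\lambda_K)$ variables before collapsing to $Z$), which is exactly the point the paper's terse proof glosses over.
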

The regime $\rho \ll 1$, beyond being an interesting one (e.g., it can be seen as a {\em big data} $M \to \infty$ limit of the theory), offers a crucial advantage because of the above emerging proportionality relation between $\bar n$ and $\bar m$ (see eq. \ref{eq:n_Mgrande}). In fact, the model is supplied only with examples -- upon which the $n_{\mu}$'s are defined -- while it is not aware of archetypes  -- upon which the $m_{\mu}$'s are defined -- yet we can use this relation to recast  the self-consistent equation for $\bar n$  into a self-consistent equation for $\bar m$ such that its numerical solution in the space of the control parameters allows us to get the phase diagram of such a neural network more straightforwardly. 
\newline
Further, we can find out explicitly the thresholds for learning, namely the minimal amount of examples (given the level of noise $r$, the amount of archetype to handle $K$, etc.) that guarantee that the network can safely infer the archetype from the supplied data-set. To obtain  these thresholds we have to deepen the ground state structure of the network,  that is, we now  handle the Eqs.~\eqref{eq:n_Mgrande}-\eqref{eq:m_Mgrande} to compute their zero fast-noise limit ($\beta \to \infty$). 
%
%
%
%
As detailed in the Appendix \ref{proof:prop_m_n} (see Corollary \ref{Corollario3}), by taking the limit $\beta \to \infty$ in eqs.~\eqref{eq:n_Mgrande}-\eqref{eq:m_Mgrande} we get
\begin{equation}
    \begin{array}{lll}
        \m_\mu &=&   \mathbb{E}_{\xi} \left\{\mathrm{erf}\left[\left(\sum\limits_{\nu=1}^{K}\m_\nu\xi^\nu\right)\left(2\rho\sum\limits_{\nu=1}^{K}\m_\nu^2\left(\xi^\nu\right)^2  \right)^{-1/2} \right] \xi^\mu\right\}\,.
    \end{array}
    \label{eq:noiseless}
\end{equation}

Once reached a relatively simple expression for $\bar m_{\mu}$, we can further manipulate it and try to get information about the existence of a lower-bound value for $M$, denoted with $M_{\otimes}$, which ensures that the network has been supplied with sufficient information to learn and retrieve the archetypes.
\newline
Setting $\beta \to \infty$, we expect that the magnetizations fulfill the hierarchical organization, namely $(\m_1,\m_2,\hdots) = (1-d)(1,d,\hdots)$ and \eqref{eq:noiseless} becomes
\begin{equation}
    \begin{array}{lll}
         \m_\mu\sim\dfrac{1-d}{2}\mathbb{E}_{_{\xi^{\nu\neq\mu}}} \left\{\mathrm{erf}\left[\frac{d^{\mu}+\sum\limits_{\nu\neq\mu}^{K}d^{\nu}\xi^\nu}{\sqrt{2\rho}\sqrt{d^{2{\mu}}+\sum\limits_{\nu\neq\mu}^{K}d^{2\nu}\left(\xi^\nu\right)^2}} \right]+\mathrm{erf}\left[\frac{d^{\mu}-\sum\limits_{\nu\neq\mu}^{K}d^{\nu}\xi^\nu}{\sqrt{2\rho}\sqrt{d^{2\mu}+\sum\limits_{\nu\neq\mu}^{K}d^{2\nu}\left(\xi^\nu\right)^2}} \right] \right\}\,,
    \end{array}
\end{equation}
where we highlighted that the expectation is over all the archetypes but the $\mu$-th one under inspection. 

Next, we introduce a confidence interval, ruled by $\Theta$, and we require that 
\begin{equation}
    \m_\mu >(1-d)d^{\mu-1} \mathrm{erf}\left[\Theta\right].
    \label{eq:inequality}
\end{equation}
In order to quantify the critical number of examples $M_\otimes^{\mu}$ needed for a successful learning of the archetype $\mu$ we can exploit the relation
\begin{equation}
    \begin{array}{lll}
         \mathbb{E}_{_{\xi^{\nu\neq\mu}}} \Big\{\mathrm{erf}\Big[f(\xi)\Big] \Big\}\geq\underset{\xi^{\nu\neq\mu}}{\mathrm{min}} \Big\{\mathrm{erf}\Big[f(\xi)\Big] \Big\}\,,
    \end{array}
\end{equation}
where in our case
\begin{equation}
\footnotesize
    \begin{array}{lll}
        \underset{\xi^{\nu\neq\mu}}{\mathrm{min}} \Big\{\mathrm{erf}\Big[f(\xi)\Big] \Big\}&=&\underset{\xi^{\nu\neq\mu}}{\mathrm{min}} \left\{\mathrm{erf}\left[\dfrac{d^{\mu}+\sum\limits_{\nu\neq\mu}^{K}d^{\nu}\xi^\nu}{\sqrt{2\rho}\sqrt{d^{2{\mu}}+\sum\limits_{\nu\neq\mu}^{K}d^{2\nu}\left(\xi^\nu\right)^2}} \right]+\mathrm{erf}\left[\dfrac{d^{\mu}-\sum\limits_{\nu\neq\mu}^{K}d^{\nu}\xi^\nu}{\sqrt{2\rho}\sqrt{d^{2\mu}+\sum\limits_{\nu\neq\mu}^{K}d^{2\nu}\left(\xi^\nu\right)^2}} \right] \right\}
        \\\\
        &=&2\;\mathrm{erf}\left[\left(d^{\mu}-\sum\limits_{\nu\neq\mu}^{K}d^{\nu}\right)\left(2\rho\sum\limits_{\nu=1}^{K}d^{2\nu}\right)^{-1/2}\right].
    \end{array}
\end{equation}
Thus, using the previous relation in  \eqref{eq:inequality}, the following inequality must hold
\begin{equation}
    \begin{array}{lll}
         \mathrm{erf}\left[\left(d^{\mu}-\sum\limits_{\nu\neq\mu}^{K}d^{\nu}\right)\left(2\rho\sum\limits_{\nu=1}^{K}d^{2\nu}\right)^{-1/2}\right]=\mathrm{erf}\left[\sqrt{\dfrac{1+d}{2\rho(1-d)}}\dfrac{2d^{\mu-1}-1-2d^\mu+d^K}{\sqrt{1-d^{2K}}} \right] > d^{\mu-1}\mathrm{erf}\left[\Theta\right] 
    \end{array}
\end{equation}
and we can write the next
\begin{proposition}
In the noiseless limit $\beta \to \infty$, the critical threshold for learning $M_\otimes$ (in the number of required examples) depends on the data-set noise $r$, the dilution $d$, the amount of archetypes to handle $K$ (and of course on the amplitude of the chosen confidence interval $\Theta$) and reads as  
\begin{equation}
    \begin{array}{lll}
     M_\otimes^\mu(\Theta, r, d, K)  > 2\left(\mathrm{erf}^{-1}\left[d^{\mu-1}\mathrm{erf}\left[\Theta\right]\right]\right)^2\left(\dfrac{1-r^2}{r^2}\right)\dfrac{(1-d)(1-d^{2K})}{(1+d)(2d^{\mu-1}-1-2d^{\mu}+d^K)^2}\,
     \label{eq:critical_M}
    \end{array}
\end{equation}
and in the plots (see Figure \ref{fig:M_otimes}) we use $\Theta = 1/\sqrt{2}$ as this choice corresponds to the fairly standard condition $\mathbb{E}_{\xi}\mathbb{E}_{(\eta|\xi)}[\xi_i^1h_i^{(1)}(\bm\xi^1)]>\sqrt{\mathrm{Var}[\xi_i^1h_i^{(1)}(\bm\xi^1)]}$ when $\mu=1$.
\end{proposition}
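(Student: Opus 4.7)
The plan is to assemble the chain of manipulations already sketched in the text into a formal derivation. I would start from the noiseless self-consistency (\ref{eq:noiseless}) and substitute the hierarchical ansatz $\bar m_\nu = (1-d)d^{\nu-1}$, which is the ground-state configuration of the multitasking network in the retrieval regime, as recalled in Sec.~\ref{Sezione2.1}. Because $\xi^\mu$ takes value $\pm 1$ with probability $(1-d)/2$ and $0$ with probability $d$, and because the argument of $\mathrm{erf}$ is odd under $\xi^\mu\to-\xi^\mu$, explicitly summing out the expectation over $\xi^\mu$ reduces the self-consistency to the two-$\mathrm{erf}$ form $\bar m_\mu\sim\tfrac{1-d}{2}\,\mathbb E_{\xi^{\nu\neq\mu}}\{\mathrm{erf}[\cdots]+\mathrm{erf}[\cdots]\}$ displayed right after (\ref{eq:noiseless}), with the prefactor $(1-d)/2$ coming from the non-blank probability of $\xi^\mu$.

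The next step is to enforce the confidence criterion (\ref{eq:inequality}), cancel the common prefactor $(1-d)d^{\mu-1}$, and apply the trivial pointwise lower bound $\mathbb E\{\mathrm{erf}[f]\}\geq\min\{\mathrm{erf}[f]\}$. Identifying this minimum is the crux of the plan: the claim is that the minimizing configuration has $|\xi^\nu|=1$ for every $\nu\neq\mu$, which saturates the denominator to $C=\sqrt{\sum_{\nu=1}^K d^{2\nu}}$ while driving the numerator into the extremal pair $d^\mu\pm\sum_{\nu\neq\mu}d^\nu$. Using the geometric identities $\sum_{\nu=1}^K d^\nu=d(1-d^K)/(1-d)$ and $\sum_{\nu=1}^K d^{2\nu}=d^2(1-d^{2K})/(1-d^2)$, the argument of the minimizing $\mathrm{erf}$ then collapses into $\sqrt{(1+d)/(2\rho(1-d))}\,(2d^{\mu-1}-1-2d^\mu+d^K)/\sqrt{1-d^{2K}}$, exactly as stated just before the inequality leading to (\ref{eq:critical_M}).

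The closing step is algebraic: inverting the monotone $\mathrm{erf}$, squaring both sides to expose $\rho$, and substituting the defining relation $\rho=(1-r^2)/(Mr^2)$ yields the claimed lower bound on $M$. The single delicate point I expect to require care is the identification of the extremal configuration in the $\min$ step, because the numerator $B=\sum_{\nu\neq\mu}d^\nu\xi^\nu$ and the denominator $C$ depend on the same $\xi^{\nu\neq\mu}\in\{-1,0,+1\}$, so the comparison across the $3^{K-1}$ admissible configurations is not immediate. A clean way to close this is to observe that, for fixed $A,s>0$, the map $B\mapsto\mathrm{erf}[(A+B)/s]+\mathrm{erf}[(A-B)/s]$ is strictly concave on any interval symmetric about the origin (its second derivative at $B=0$ computes explicitly from the Gaussian kernel of $\mathrm{erf}'$ and is negative), so the minimum on $[-B_{\max},B_{\max}]$ is attained at an endpoint; a direct check then shows that saturating $|\xi^\nu|=1$ for every $\nu\neq\mu$ is the overall extremal configuration. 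Everything else is bookkeeping.
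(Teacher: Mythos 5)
Your route is the paper's route: substitute the hierarchical ansatz into \eqref{eq:noiseless}, sum out $\xi^\mu$ to obtain the two-$\mathrm{erf}$ form, impose \eqref{eq:inequality}, lower-bound the expectation by an extremal configuration, and invert $\mathrm{erf}$ using $\rho=(1-r^2)/(Mr^2)$. The one step you try to prove that the paper merely asserts --- that the extremum is attained at $|\xi^\nu|=1$ for all $\nu\neq\mu$ --- is exactly where your argument does not hold up. Writing $A=d^{\mu}$, $s=\sqrt{2\rho}\sqrt{d^{2\mu}+\Sigma}$ and $\phi(B)=\mathrm{erf}[(A+B)/s]+\mathrm{erf}[(A-B)/s]$, one finds
\begin{equation*}
\phi''(B)=-\frac{4}{\sqrt{\pi}\,s^{3}}\Bigl[(A+B)\,e^{-(A+B)^{2}/s^{2}}+(A-B)\,e^{-(A-B)^{2}/s^{2}}\Bigr],
\end{equation*}
and since $x\mapsto x\,e^{-x^{2}/s^{2}}$ is not monotone, the bracket can change sign once $|B|>A$ (e.g.\ $s=1$, $A+B=3$, $A-B=-0.7$ gives $\phi''>0$). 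So $\phi$ is \emph{not} concave on all of $[-B_{\max},B_{\max}]$ --- and $B_{\max}=\sum_{\nu\neq\mu}d^{\nu}$ does exceed $A=d^{\mu}$ for $d$ large enough --- while checking $\phi''(0)<0$ says nothing about where the minimum sits. The endpoint conclusion is nevertheless true, by a simpler observation you should use instead: $\phi$ is even in $B$ and $\phi'(B)=\frac{2}{\sqrt{\pi}s}\bigl[e^{-(A+B)^{2}/s^{2}}-e^{-(A-B)^{2}/s^{2}}\bigr]<0$ for $B>0$, because $(A+B)^{2}-(A-B)^{2}=4AB>0$; hence $\phi$ is strictly decreasing in $|B|$ and its minimum over the admissible signs is at $|B|=B_{\max}$.

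A second point, which you inherit from the paper but should make explicit since you yourself single out the $\min$ as ``the crux'': the denominator $s$ also varies with the configuration through $\Sigma=\sum_{\nu\neq\mu}d^{2\nu}(\xi^{\nu})^{2}$, and saturating $|\xi^{\nu}|=1$ \emph{maximizes} $s$. Replacing $s$ by $s_{\max}$ lowers $\mathrm{erf}[(A-B_{\max})/s]$ only if $A-B_{\max}>0$, i.e.\ only when $2d^{\mu-1}-1-2d^{\mu}+d^{K}>0$, which is precisely the sub-critical dilution regime in which the hierarchical ansatz (and hence the whole derivation) applies. Within that regime the chain $\mathbb{E}\{\phi\}\geq 2\,\mathrm{erf}[(A-B_{\max})/s]\geq 2\,\mathrm{erf}[(A-B_{\max})/s_{\max}]$ is valid and \eqref{eq:critical_M} follows by your closing algebra; outside it the last inequality reverses and the ``direct check'' you defer would fail. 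With the monotonicity fix and this regime restriction stated, your proof is complete and coincides with the paper's.
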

To quantify these thresholds for learning, in Fig. \ref{fig:M_otimes} we report the required number of examples to learn the first archetype (out of $K=2,3,4,50$ as shown in the various panels) as a function of the dilution of the network.

\begin{figure}
    \centering
    \includegraphics[width=15cm]{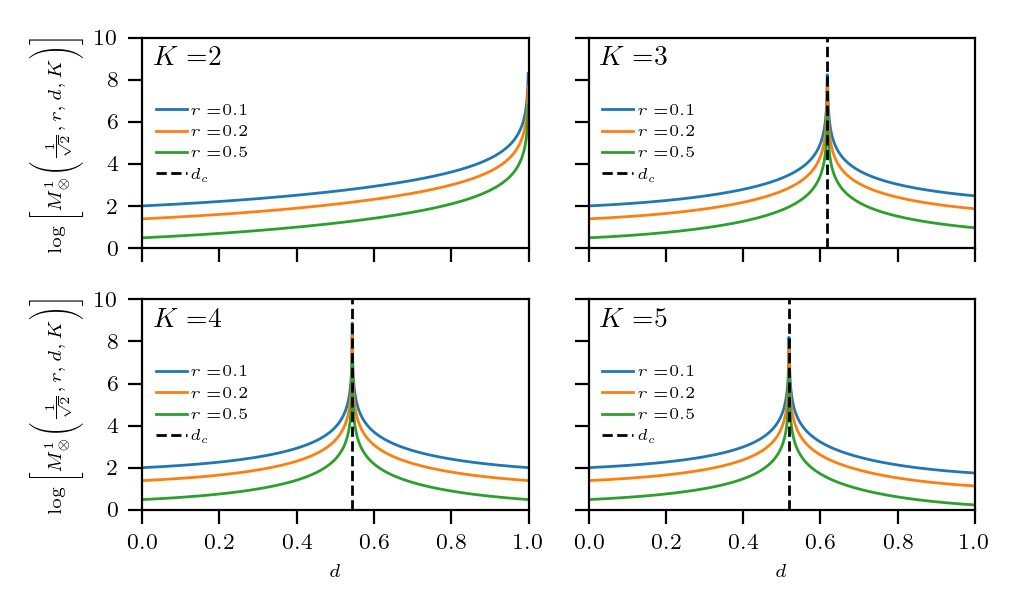}
    \caption{We plot the logarithm of the critical number of example (required to raise the first magnetization)  $M_\otimes^1$ at different loads $K=2,3,4,5$ and as a function of the dilution of the networks, for different noise values of the data-set (as shown in the legend). Note the divergent behavior of $M_\otimes^1$ when approaching the critical dilution level $d_c(K)=d_1$, as predicted by the parallel Hebbian storage limit \cite{Barra-PRL2012,ElenaMult1}: this is the crossover between the two multi-tasking regimes, hierarchical vs parallel, hence, solely at the value of dilution $d_1$, there is no sharp behavior to infer and, correctly, the network can not accomplish learning. This shines by looking at \eqref{eq:critical_M} where the critical amount of examples to correctly infer the archetype is reported: its denominator reduces to $1-2d+d^K$ and, for $\mu=1$, it becomes zero when $d \to d_1$.}
    \label{fig:M_otimes}
\end{figure}
 


\subsubsection{Ergodicity breaking: the critical phase transition}\label{cazzopeppe}

The main interest in the statistical mechanical 
approach to neural networks lies in inspecting their emerging capabilities, that typically appear once ergodicity gets broken: as a consequence, finding the boundaries of validity of ergodicity is a classical starting point to deepen these aspects. 
\newline
To this task, hereafter we provide a systematic fluctuation analysis of the order parameters: the underlying idea is to check when, starting from the high noise limit ($\beta \to 0$, where everything is uncorrelated and simple Probability arguments apply straightforwardly), these  fluctuations diverge, as that  defines the onset of ergodicity breaking as stated in the next
\begin{theorem}\label{Fluttuo}
The ergodic region, in the space of the control parameters $(\beta, d, \rho)$ is confined to the half-plane defined by the critical line 
\begin{equation}
\beta_c =\frac{1}{1-d}, \end{equation}
whatever the entropy of the data-set $\rho$.
\end{theorem}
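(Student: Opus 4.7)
The approach is a linear (replica-symmetric / mean-field) stability analysis of the trivial paramagnetic fixed point $\bar n_\mu = 0$ for every $\mu$ of the self-consistent equations established in Theorem~\ref{prop:highnoise}. The onset of ergodicity breaking coincides precisely with the bifurcation at which this high-temperature solution ceases to be stable.

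First, I would linearize \eqref{eq:n_selfRS_sup} around $\bar n_\mu = 0$ by expanding $\tanh(x) \simeq x$, which yields the linear map
$$
\bar n_\mu \;\simeq\; \frac{\beta}{(1-d)(1+\rho)}\sum_{\nu=1}^{K} C_{\mu\nu}\,\bar n_\nu, \qquad C_{\mu\nu} := \mathbb{E}\bigl[\hat\eta^\mu\hat\eta^\nu\bigr].
$$
Next, I would evaluate the $K\times K$ covariance matrix $C$ using the distributions \eqref{eq:quenched_xi} and \eqref{eq:quenched_chi}: since distinct archetypes are independent and zero-mean, $C_{\mu\nu}=0$ for $\mu\neq\nu$, while $\mathrm{Var}(\hat\eta^\mu)$ is precisely the normalization constant identified in the Remark right after \eqref{def:H_dil_unsup2}, namely $C_{\mu\mu}=(1-d)(1+\rho)$. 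The linear system therefore decouples mode by mode, the ergodic fixed point remains stable as long as the resulting scalar multiplier is strictly less than one, and equality determines the critical line in the statement. The algebraic mechanism responsible for the $\rho$-independence is transparent: the factor $(1+\rho)$ produced by $\mathrm{Var}(\hat\eta^\mu)$ cancels exactly against the $(1+\rho)^{-1}$ sitting in the Hamiltonian normalization $[(1-d)(1+\rho)]^{-1}$, so the data-set entropy drops out of the final bifurcation condition.

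As a cross-check, I would repeat the same bifurcation analysis on the Mattis magnetizations by combining Corollary~\ref{C1} with the low-entropy relation between $\bar n_\mu$ and $\bar m_\mu$ provided by Proposition~\ref{prop:m_n}: both routes must return the same critical line, and this provides a non-trivial consistency check on the many normalization factors. Equivalently, the same threshold can be recovered by computing the high-temperature susceptibility matrix $\chi_{\mu\nu}=\partial\bar n_\mu/\partial h_\nu$ from a sourced version of \eqref{eq:n_selfRS_sup} and locating the value of $\beta$ at which its diagonal entries first diverge.

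The main obstacle is to ensure that this mean-field linearization genuinely controls the leading fluctuations of the order parameters at the transition. In the low-storage regime this is secured by the standard self-averaging statements \eqref{PdiEmme} together with the mutual independence of distinct archetypes, which forces $C$ to be diagonal and rules out dangerous off-diagonal inter-pattern modes; nevertheless one must verify that no sub-leading term in the Guerra interpolation can spoil the exact $(1+\rho)$-cancellation responsible for the $\rho$-independence stated in the theorem.
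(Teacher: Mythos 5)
Your route is genuinely different from the paper's: the paper does not linearize the self-consistency equations at all, but applies Guerra's interpolation directly to the rescaled fluctuations $\tilde n_\mu=\sqrt{N}(n_\mu-\bar n_\mu)$, derives the streaming equation $\frac{d}{dt}\langle\tilde n_\mu^2\rangle=\beta(1+\rho)\langle\tilde n_\mu^2\rangle^2$, solves this Riccati-type ODE with the Cauchy datum $\langle\tilde n_\mu^2\rangle_{t=0}=\frac{1-d}{1+\rho}$ (at $\bar n_\mu=0$), and reads off the critical line as the pole of the resulting meromorphic function at $t=1$. Your bifurcation/susceptibility analysis of the paramagnetic fixed point is a legitimate alternative and, for this class of models, locates the same instability; it is arguably more elementary, while the paper's route has the merit of exhibiting the actual divergence of the finite-size fluctuations rather than only the loss of stability of the mean-field solution.

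However, there is a concrete gap in the decisive step, which you assert rather than carry out. Your own cancellation argument proves too much: since $C_{\mu\mu}=\mathbb{E}[(\hat\eta^\mu)^2]=(1-d)(1+\rho)$ and the prefactor in \eqref{eq:n_selfRS_sup} is $\frac{\beta}{(1-d)(1+\rho)}$, \emph{both} factors cancel, the linear multiplier is exactly $\beta$, and your bifurcation condition reads $\beta_c=1$ — not $\beta_c=\frac{1}{1-d}$ as claimed in the statement. The same happens with your proposed cross-check: linearizing \eqref{eq:n_Mgrande}--\eqref{eq:m_Mgrande} gives $\bar m_\mu\simeq\frac{\beta}{1-d}(1-d)\bar n_\mu=\beta\bar n_\mu$ and $(1+\rho-\beta\rho)\bar n_\mu=\bar m_\mu$, whose compatibility again forces $\beta=1$. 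The stated line $\beta_c=\frac{1}{1-d}$ is obtained only if one works with the \emph{unnormalized} coupling, i.e.\ drops the $\frac{1}{1-d}$ from the exponent — which is precisely what the paper's own proof silently does (its interpolating $Z(t)$ carries $\frac{\beta}{2}tN(1+\rho)\sum_\mu n_\mu^2$ instead of the $\frac{\beta}{2(1-d)}tN(1+\rho)\sum_\mu n_\mu^2$ of \eqref{def:partfunct_GuerraRS}), and is consistent with the undiluted-normalization storing model of Eqs.~\eqref{ammerda1}--\eqref{ammerda2}. So as written your proposal does not reproduce the theorem: you must either adopt that convention explicitly or explain why the $(1-d)$ in the Hamiltonian normalization should not be cancelled, and until you do, the step ``equality determines the critical line in the statement'' fails.
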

\begin{proof}
The idea of the proof is the same we used so far, namely Guerra interpolation but on the rescaled fluctuations rather that directly on the statistical pressure.
\newline
The rescaled fluctuations  $\tilde{n}^2_{\nu}$ of the magnetizations are defined as  
\begin{equation}
\tilde n_{\nu}= \sqrt{N}(n_\nu-\bar{n}_{\nu}).
\end{equation}
We remind that the interpolating framework we are using, for $t\in (0,1)$, is defined via
\begin{equation}
    Z(t) = \SOMMA{\{\bm\sigma\}}{} \exp\left[\dfrac{\beta}{2}tN(1+\rho) \SOMMA{\mu=1}{K}n_{\mu}^2+ N(1-t)\beta(1+\rho) N_\mu n_{\mu}\right],
\end{equation}
and it is a trivial exercise to show that, for any smooth function $F(\sigma)$ the following relation holds:
\begin{equation}
   \dfrac{d \l F \r}{dt} = \dfrac{\beta}{2}(1+\rho)\left(\l F \SOMMA{\nu}{}\tilde n_{\nu}^2\r  -\l F\r\l \SOMMA{\nu}{}\tilde n_{\nu}^2\r \right),
\end{equation}
such that by choosing $F=\tilde n^2_\mu$ we can write
\begin{equation}
\begin{array}{lll}
     \dfrac{d \l \tilde n_\mu^2 \r}{dt} &= \dfrac{\beta}{2}(1+\rho)\left(\l \tilde n_\mu^2 \SOMMA{\nu}{}\tilde n_{\nu}^2\r  -\l \tilde n_\mu^2\r\l \SOMMA{\nu}{} \tilde n_{\nu}^2\r \right)
     \\\\
     &=\dfrac{\beta}{2}(1+\rho)\left(\l
 \tilde n_\mu^4\r+\l \n_\mu^2 \SOMMA{\nu\neq\mu}{}\tilde n_{\nu}^2\r  -\l \tilde n_\mu^2\r^2-\l \tilde n_\mu^2\r\l \SOMMA{\nu\neq\mu}{}\tilde n_{\nu}^2\r \right)
      \\\\
     &=\beta(1+\rho)\l \tilde n_\mu^2\r^2
\end{array}
\end{equation}
thus we have
\begin{equation}
\begin{array}{lll}
      \l \tilde n_\mu^2 \r_t &= \dfrac{\l
 \tilde n_\mu^2 \r_{t=0}}{1-t\beta(1+\rho)\l \tilde n_\mu^2 \r_{t=0}}
\end{array}
\end{equation}
where the Cauchy condition $\l \tilde n_\mu^2 \r_{t=0}$ reads
\begin{equation}\small
\begin{array}{lll}
      \l \tilde n_\mu^2 \r_{t=0}&=& N \mathbb E_{\xi} \mathbb E_{(\eta|\xi)}\dfrac{\SOMMA{\{\bm\sigma\}}{}\left(\frac{1}{N^2(1+\rho)^2}\sum_{i,j}\tilde\eta_i^\mu\tilde\eta_j^\mu\sigma_i\sigma_j+\tilde n_\mu^2-2\frac{1}{N(1+\rho)}\sum_i \hat\eta_i^\mu\sigma_i \tilde n_\mu\right)\exp\left[ \beta\sum_\nu \tilde n_\nu\sum_i\tilde\eta_i^\nu\sigma_i\right]}{\SOMMA{\{\bm\sigma\}}{}\exp\left[ \beta\sum_\nu \tilde n_\nu\sum_i\tilde\eta_i^\nu\sigma_i\right]}
    \\\\
    &=&  \dfrac{1-d}{(1+\rho)}-N_\mu^2.
\end{array}
\end{equation}
Evaluating $\l \tilde n_\mu^2 \r_{t}$ for $t=1$, that is when the interpolation scheme collapses to the Statistical Mechanics, we finally get
\begin{equation}
\begin{array}{lll}
      \l \tilde n_\mu^2 \r_{t=1 }&= \dfrac{1-d-(1+\rho)N_\mu^2}{[1-\beta\left(1-d-(1+\rho)N_\mu^2\right)]}
\end{array}
\end{equation}
namely the rescaled fluctuations are described by a meromorphic function whose pole is
\begin{equation}
    \beta=\dfrac{1}{\left(1-d-(1+\rho)N_\mu^2\right)}\xrightarrow[]{N_\mu=0}\beta_{_C}=\dfrac{1}{1-d},
\end{equation}
that is the critical line reported in the statement of the theorem.
\end{proof}

\subsection{Stability analysis via standard Hessian: the phase diagram}\label{cazzarola}

The set of solutions for the self-consistent equations for the order parameters \eqref{eq:n_Mgrande} describes as candidate solutions a plethora of states whose stability must be investigated to understand which solution is preferred as the control parameters are made to vary: this procedure results in picturing the phase diagrams of the network, namely plots in the space of the control parameters where different regions pertain to different macroscopic computational capabilities. 
\newline
Remembering that $A_{K,\beta,d,M,r}(\bar{\bm n}) =
-\beta f_{K,\beta,d,M,r}(\bar{\bm n})$ (where $f_{K,\beta,d,M,r}(\bar{\bm n})$ is the free energy of the model), in order to evaluate the stability of these solutions, we need to check the sign of the second derivatives of the free energy. More precisely, we  need to build up the Hessian,  a matrix $\bm A$ whose elements are
\begin{equation}
    \dfrac{\partial^2 f (\bar{\bm n})}{\partial n^{\mu}\partial n^\nu}=A^{\mu\nu}\,.
\end{equation}
Then, we evaluate and diagonalize $\bm A$ at a point $\tilde{\bm n}$, representing a particular solution of the self-consistency equation \eqref{eq:n_Mgrande}: the numerical results are reported in the phase diagrams provided in Fig.\ref{fig:stabilità}.

We find straightforwardly
\begin{equation}\label{Fhesso}
    A^{\mu\nu} =  (1+\rho)\left[[1-\beta(1-d)]+\rho\beta\mathbb{E}\left\{\mathcal{T}_{K\beta,\rho}^2(\bm{\n},z)(\xi^\mu)^2\right\}\right]\delta^{\mu\nu} + Q^{\mu\nu}
\end{equation}
where we set $\mathcal{T}_{K\beta,\rho}(\bm{\n},z)=\tanh\left(\beta\sum_{\lambda=1}^{K}\n_\lambda\xi^\lambda+z\beta\sqrt{\rho\sum_{\lambda=1}^{K}(\n_\lambda\xi^\lambda)^2}\right)$ and
\begin{equation}
\footnotesize
    \begin{array}{lll}
        Q^{\mu\nu}&=&\beta\mathbb{E}\left\{\left[\mathcal{T}_{K\beta,\rho}^2(\bm{\n},z)\right]\xi^\mu\xi^\nu\right\}(1-\delta^{\mu\nu})+2\rho\beta^2\mathbb{E}\left\{\left[\mathcal{T}_{K\beta,\rho}(\bm{\n},z)\right]\left[1-\mathcal{T}_{K\beta,\rho}^2(\bm{\n},z)\right]\left[\n_\nu\xi^\nu+\n_\mu\xi^\mu\right]\xi^\mu\xi^\nu\right\}
        \\\\
    &&+2\rho^2\beta^3\n_\mu\n_\nu\mathbb{E}\left\{\left[1-3\mathcal{T}_{K\beta,\rho}^2(\bm{\n},z)\right]\left[1-\mathcal{T}_{K\beta,\rho}^2(\bm{\n},z)\right](\xi^\mu\xi^\nu)^2\right\}
    \end{array}
\end{equation}
In order to lighten the notation we will use $\mathcal{T}_{K\beta,\rho}(\bm{\n},z)=\mathcal{T}_K$. 
\begin{figure}
    \centering
    \includegraphics[width=13cm]{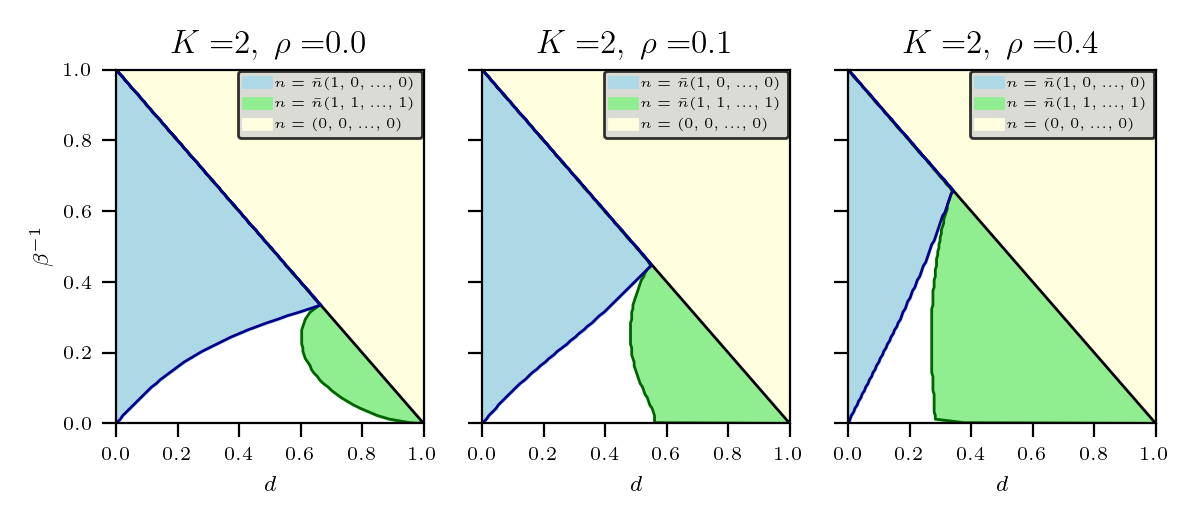}
    \includegraphics[width=13cm]{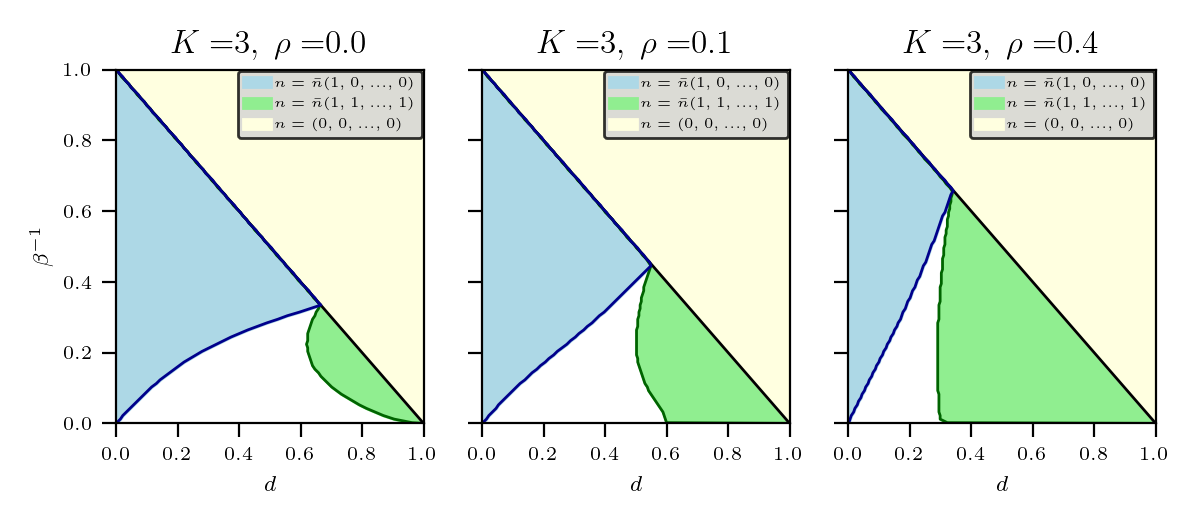}
    \caption{Phase diagram in  the dilution-noise $(d-\beta^{-1})$ plane for different values of $K$ and $\rho$. We highlight that different regions --marked with different colors-- represent different operational behavior of the network: in yellow the ergodic solution, in light-blue the pure state solution (that is, solely one magnetization different from zero), in white the hierarchical regime (that is, several magnetizations differ from zero and they all assume different values)  and in  light-green the parallel regime (several magnetization differ from zero but their amplitude is the same for all). }
    \label{fig:stabilità}
\end{figure}
We can now inspect the domain of stability of each possible solution of the self-consistency equation by plugging the structure of the candidate solution in \eqref{Fhesso}.

\subsubsection{Ergodic state: $ \bar{\bm n}=\n_{d,\rho,\beta}( 0,  \hdots, 0)$}
In this case the structure of the solution has the form $\bm \n =\bm \m = \bm 0$ thus the Hessian matrix is diagonal and it reads as 
\begin{equation}
    A^{\mu\nu}= \delta^{\mu\nu}(1+\rho)[1-\beta(1-d)].
\end{equation}
As all its eigenvalues are equal to $(1+\rho)[1-\beta(1-d)]$, if we require the matrix to be positively definite we must have
\begin{equation}
    d>\dfrac{\beta-1}{\beta}.
\end{equation}
Therefore, as $d>1-\beta^{-1}$, the ergodic solution is stable: this scenario is reported in the phase diagrams provided in Fig. \ref{fig:stabilità} as the yellow region.
\newline
We stress that this result on the ergodic region is in plain agreement with the inspection on ergodicity breaking provided in Theorem \ref{Fluttuo}.

\subsubsection{Pure state: $\bar{\bm n} = \n_{d,\rho,\beta}(1, 0 , \hdots, 0)$}
In this case  the structure of the solution has the form $\m_{\mu}=\n_{\mu}=0$ for $\mu>1$, thus the only self-consistency equation different from zero is
\footnotesize
\begin{eqnarray}
   &\n=\dfrac{\mathbb{E}_{\xi,Z}\left\{\left[\mathcal{T}_K\right]\left(\xi^\mu\right)\right\}}{(1+\rho)}+\beta\dfrac{\rho\,\n}{(1+\rho)}\mathbb{E}_{\xi,Z}\left\{\left[1-\mathcal{T}_K^2\right]\left(\xi^\mu\right)^2\right\}\,,
\end{eqnarray}
\normalsize
where $\mathcal{T}=\tanh\left[\beta\n\xi^\mu(1+z\sqrt{\rho})\right]$. It is easy to check that 
$\bm A$ becomes diagonal, with
\begin{equation}
\small
\begin{array}{lll}
A^{\mu\mu}&=&(1+\rho)\Big[1-\beta(1-d)+\beta(1-d)\,\mathbb{E}\left\{\mathcal{T}^2\right\}\Big]
    \\\\
&& +4\beta^2 \rho \n (1-d)\mathbb{E}\left\{\mathcal{T}\Big[1-\mathcal{T}^2\Big]\right\}+2\beta^3 \rho^2 \n^2 (1-d)\mathbb{E}\left\{\Big[1-3\mathcal{T}^2\Big]\Big[1-\mathcal{T}^2\Big]\right\}\,,
\\\\
A^{\nu\nu\neq \mu}&=&(1+\rho)\Big[1-\beta(1-d)+\beta(1-d)^2\,\mathbb{E}\left\{\mathcal{T}^2\right\}\Big] \,.
\end{array}
\end{equation}
Notice that these eigenvalues do not depend on $K$ since $\mathcal{T}$ does not depend on $K$. Requiring
the positivity for all the eigenvalues, we get the region in the plane $(d , \beta^{-1})$, where the pure state is stable: this correspond the blue region in the phase diagrams reported in Fig. \ref{fig:stabilità}.
\newline
We stress that these pure state solutions, namely the standard Hopfield-type ones, in the ground state ($\beta^{-1} \to 0$) are never stable whenever $d \neq 0$ as the multi-tasking setting prevails. Solely at positive values of $\beta$, this single-pattern retrieval state is possible as the role of the noise is to destabilize the weakest magnetization of the hierarchical displacement, {\em vide infra}).

\subsubsection{Parallel state: $\bar{\bm n} =\n_{d,\rho,\beta} (1, \hdots, 1)$}
In this case  the structure of the solution has the form of a symmetric mixture state corresponding to the unique self consistency equation for all $\mu=1,\hdots K$, namely
\begin{eqnarray}
\n&=&\dfrac{\mathbb{E}_{\xi,Z} \left\{\tanh\left[g(\beta, \bm \xi,  Z, \n)\right] \xi^\mu\right\}}{(1+\rho)}+\beta\dfrac{\rho\,\n}{(1+\rho)}\mathbb{E}_{\xi,Z}\left\{\left[1-\tanh^2\left(g(\beta,\bm\xi  Z, \n)\right)\right]\left(\xi^\mu\right)^2\right\},
\end{eqnarray}
where
\begin{align}
    &g(\beta,\bm \xi,  Z, \n)=\beta\n\left[\SOMMA{\lambda=1}{K}\xi^\lambda +\beta\,Z\sqrt{\rho\,\SOMMA{\lambda=1}{K}\left(\xi^\lambda\right)^2}\right]  \;.
\end{align}
In this case, the diagonal terms of $\bm A$ are
\begin{equation}
\small
\begin{array}{lll}
a=A^{\mu\mu}&=&\Big[1-\beta(1-d)+\beta\,\mathbb{E}\left\{\left[\mathcal{T}^2\right](\xi^\mu)^2\right\}\Big](1+\rho) 
    \\\\
&& +4\beta^2\rho\n\mathbb{E}\left\{\mathcal{T}\Big[1-\mathcal{T}^2\Big]\xi^\mu\right\}+2\beta^3\rho^2\n^2\mathbb{E}\left\{\Big[1-3\mathcal{T}^2\Big]\Big[1-\mathcal{T}^2\Big](\xi^\mu)^2\right\}\,,
\end{array}
\label{eq:a_stab}
\end{equation}
instead the off-diagonal ones are
\begin{equation}
\small
\begin{array}{lll}
b=A^{\mu\nu\neq\mu}&=&\beta\mathbb{E}\left\{\left[\mathcal{T}^2\right]\xi^\mu\xi^\nu\right\}+2\rho^2\beta^3\n^2\mathbb{E}\left\{\left[1-3\mathcal{T}^2\right]\left[1-\mathcal{T}^2\right](\xi^\mu\xi^\nu)^2\right\}\,.
\end{array}
\label{eq:b_stab}
\end{equation}
In general the following relationship holds
\begin{equation}
    \bm A =  \begin{pmatrix} a & b &\cdots &b& b
    \\
    b & a &  \cdots    &b & b
    \\
    \vdots& \vdots&\ddots &\vdots   &\vdots
    \\
    b & b &  \cdots     &a& b
    \\
    b & b & \cdots & b&a
    \end{pmatrix}
\end{equation}
This matrix has always only two kinds of eigenvalues, namely $a-b$ and $a+(K-1)b$, thus, for the stability of the parallel  state,  after computing \eqref{eq:a_stab} and \eqref{eq:b_stab}, we have only to check for which point in the $(d-\beta^{-1})$ plane both $a-b$ and $a+(K-1)b$ are positive. The region, in the phase diagrams of Fig. \ref{fig:stabilità}, where the parallel regime is stable is depicted in green.

\subsubsection{Hierarhical state: $\bar{\bm n} =\n_{d,\rho,\beta} ((1-d),d(1-d),d^2(1-d),...)$}

In this case  the structure of the solution has the hierarchical form $\bar{\bm n} =\n_{d,\rho,\beta} ((1-d),d(1-d),d^2(1-d),...)$ and the region left untreated so far in the phase diagram, namely the white region in the plots of Fig. \ref{fig:stabilità}, is the room left to such hierarchical regime.

\subsection{From the Cost function to the Loss function}\label{SezioneTreTre}
\begin{figure}[!]
    \centering
    \includegraphics[width=7.5cm]{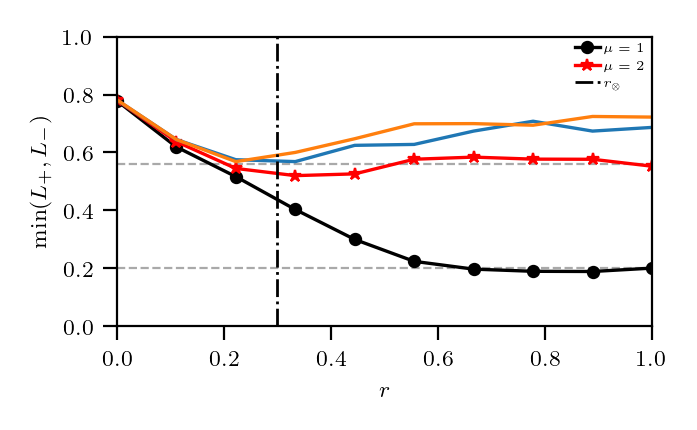}
   \includegraphics[width=7.5cm]{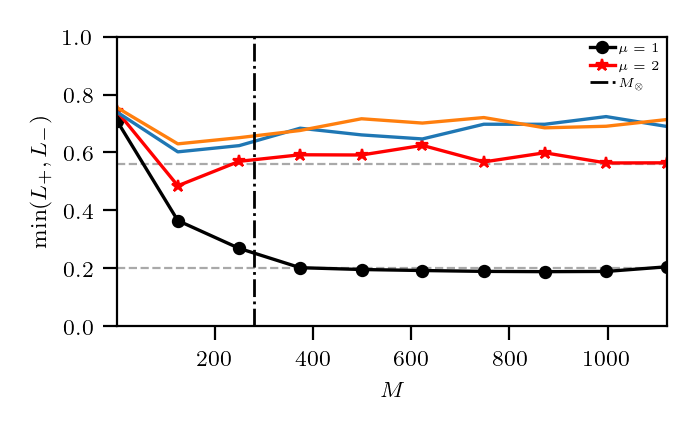} 
   \caption{Left: Parallel minimization of several (mean square-error) Loss functions $L_{\pm} = ||\xi^{\mu}\pm\sigma||^2$ (each pertaining to a different archetype) as the noise in the data-set $r$ is varied. Here: $M=25$, $N=10000$. 
   The horizontal gray dashed lines are the saturation level of the Loss functions, namely $1 - \frac{d}{2} - (1 - d)  d^{\mu-1}$. We get $r_\otimes$ (the vertical black line) by the inversion of \eqref{eq:critical_M}.  Right: Parallel minimization of several (mean square-error) Loss function $L_{\pm} = ||\xi^{\mu}\pm\sigma||^2$ (each pertaining to a different archetype) as the data-set size $M$ is varied: as M grows the simultaneous minimization of more than one Loss functions takes place, at difference with learning via standard Hebbian mechanisms where one Loss function -dedicate to a single archetype- is minimized per time. Orange and blue lines pertain to Loss functions of other patterns that, at these levels of dilution and noise, can not be minimized at once with the previous ones. \label{fig:loss2}}
\end{figure}
We finally comment on the persistence -in the present approach- of quantifiers related to the evaluation of the pattern recognition capabilities of neural networks, i.e. the Mattis magnetization, also as quantifiers of a good learning process. The standard Cost functions used in Statistical Mechanics of neural networks (e.g., the Hamiltonians) can be related one-to-one to standard Loss functions used in Machine Learning (i.e. the squared sum error functions), namely, once introduced the two Loss functions $L_{\mu}^+ := (1/2N)||\xi^{\mu} - \sigma||^2 = 1 - m_{\mu}$ and  $L_{\mu}^- = (1/2N)||\xi^{\mu} + \sigma||^2= 1 + m_{\mu}$\footnote{Note that in the last passage we naturally highlighted the presence of  the Mattis magnetization in these Loss functions.}, it is immediate to show that
$$
H(\boldsymbol{\sigma}|\boldsymbol{\xi}) = \frac{-1}{2N}\sum_{i,j}^{N,N} \sum_{\mu}^K \xi_i^{\mu}\xi_j^{\mu} \sigma_i \sigma_j \equiv - N \sum_{\mu}^K \left( 	1 - L_{\mu}^+ \cdot L_{\mu}^-\right),
$$ 
thus minimizing the former implies minimizing the latter such that, if we are extremizing w.r.t. the neurons we are performing machine retrieval (i.e. pattern recognition), while if we extremize w.r.t. the weights we perform machine leaning: indeed, at least in this setting, learning and retrieval are two faces of the same coin (clearly the task here, from a machine learning perspective, is rather simple as the network is just asked to correctly classify the examples and possibly generalize).
 
In Fig. \ref{fig:loss2} we inspect what happens to these Loss functions -pertaining to the various archetypes- as the Cost function gets minimized: we see that, at difference with the standard Hopfield model (where solely one Loss function per time diminishes its value), in this parallel learning setting several Loss functions (related to different archetypes) get simultaneously lowered, as expected by a parallel learning machine.

\section{Conclusions} \label{SezioneConclusiva}
Since the AGS milestones on Hebbian learning dated 1985 \cite{AGS1,AGS2}, namely the first comprehensive statistical mechanical theory of the Hopfield model for pattern recognition and associative memory, attractor neural networks have experienced an 
unprecedented growth and the bulk of techniques developed for spin glasses in these four decades  (e.g. replica trick, cavity method, message passage, interpolation) acts now as a prosperous cornucopia for explaining the emergent information processing capabilities that these networks show as their control parameters are made to vary.
\newline
In these regards, it is important to stress how nowadays it is mandatory to optimize AI protocols (as machine learning for complex structured data-sets is still prohibitively expensive in terms of energy consumption \cite{MIT_CO2}) and, en route toward a Sustainable AI (SAI), statistical mechanics may still pave a main theoretical strand: in particular, we highlight how the knowledge of the {\em phase diagrams} related to a given neural architecture (that is the ultimate output product of the statistical mechanical approach)  allows to set "a-priori" the machine in the optimal working regime for a given task thus unveiling a pivotal role of such a methodology even for a conscious usage of AI (e.g. it is useless to force a standard Hopfield model beyond its critical storage capacity)\footnote{Further, still searching for optimization of resources, such a theoretical approach can also be used to inspect key computational shortcuts (as, e.g. early stopping criteria faced in Appendix \ref{subsec:S2N_sup} or the role of  the size of the mini-batch used for training \cite{FedeNew} or of the flat minima emerging after training \cite{Riccardo}.}.
\newline
Focusing on Hebbian learning, however, while the original AGS theory remains a solid pillar and a paradigmatic reference in the field, several extensions are required to keep it up to date to deal with modern challenges: the first generalization we need is to move from a setting where the machine stores already defined patterns (as in the standard Hopfield model) toward a more realistic learning procedure where these patterns are unknown and have to be inferred from examples: the Hebbian storage rule of AGS theory quite naturally generalizes toward both supervised and an unsupervised learning prescriptions \cite{Agliari-Emergence,Aquaro-EPL}. This enlarges the space of the control parameters from $\alpha, \beta$ (or $K$, $N$, $\beta$) of the standard Hopfield model toward $\alpha, \beta, \rho$ (or $K$, $N$, $\beta$, $M$, $r$) as we now deal also with a data-set where we have $M$ examples of mean quality r for each pattern (archetype) or, equivalently, we speak of a data-set produced at given entropy $\rho$.
\newline
Once this is accomplished, the second strong limitation of the original AGS theory that must be relaxed is that patterns share the same length and, in particular, this equals the size of the network (namely in the standard Hopfield model there are N neurons to handle patterns, whose length is exactly N for all of them): a more general scenario is provided by dealing with patterns that contain different amounts of information, that is patterns are diluted. Retrieval capabilities of the Hebbian setting at work with diluted patterns have been extensively investigated in the last decade \cite{Barra-PRL2012,Barra-PRL2014,Barra-PRL2015,Coolen-Medium,Coolen-High,Remi1,Remi2,AuroRev,Huang} and it has been understood how, dealing with patterns containing sparse entries, the network automatically is able to handle several of them in parallel (a key property of neural networks that is not captured by standard AGS theory). However the study of the parallel learning of diluted patterns was not addressed in the Literature and in this paper we face this problem, confining this first study to the low storage regime, that is when the number of patterns scales at most logarithmically with the size of the network. Note that this further enlarges the space of the control parameters by introducing the dilution $d$: we have several control parameters because the network information processing capabilities are enriched w.r.t. the bare Hopfield reference\footnote{However we clarify how it could be inappropriate to speak about structural differences among the standard Hopfield model and the present multitasking counterpart: ultimately these huge behavioral differences are just consequences of the different nature of the the data-sets provided to the same Hebbian network during training.}.
\newline
We have shown here that if we supply to the network a data-set equipped with dilution, namely a sparse data-set whose patterns contain -on average- a fraction $d$ of blank entries (whose value is 0) and, thus, a fraction $(1-d)$ of informative entries (whole values can be $\pm 1$), then the network spontaneously undergoes parallel learning and behaves as a multitasking associative memory able to learn, store and retrieve multiple patterns in parallel. Further, focusing on neurons, the Hamiltonian of the model plays as the Cost function for neural dynamics, however, moving the attention to synapses, we have shown how the latter is one-to-one related to the standard (mean square error) Loss function in Machine Learning and this resulted crucial to prove that, by experiencing a diluted data-set, the network lowers in parallel several Loss functions (one for each pattern that is learning from the experienced examples). 
\newline
For mild values of dilution, the most favourite displacement of the Mattis magnetizations is a {\em hierarchical ordering}, namely the intensities of these signals scale as power laws w.r.t. their information content $m_K \sim d^K \cdot (1 - d)$, while at high values of dilution a {\em parallel ordering}, where all these amplitudes collapse to the same value, prevails: the phase diagrams of these networks properly capture these different working regions. 
\newline
Remarkably, confined to the low storage regime (where glassy phenomena can be neglected), the presence (or its lacking) of a teacher does not alter the above scenario and the threshold for a secure learning, namely the minimal required amount of examples (given the constraints, that is the noise in the data-set r, the amount of different archetypes $K$ to cope with, etc.) $M_\otimes$ that guarantees that the network is able to infer the archetype and thus generalize, is the same for supervised and unsupervised protocols and its value has been explicitly calculated: this is another key point toward sustainable AI.
\newline
Clearly there is still a long way to go before a full statistical mechanical theory of extensive parallel processing will be ready, yet this paper acts as a first step in this direction and we plan to report further steps in a near future.

\newpage

\appendix

\section{A more general sampling scenario} \label{app:sampling}
The way in which we add noise over the archetypes to generate the data-set in the main text (see eq. (\ref{eq:quenched_chi})) is a rather peculiar one as, in each example, it preserves the number but also the positions of lacun\ae ~already present in the related archetype. This implies that the noise can not affect the amplitudes of the original signal, i.e. $\sum_i (\eta_i^{\mu,a})^2 = \sum_i (\xi_i^{\mu})^2$ holds for any $a$ and $\mu$, while we do expect that with more general kinds of noise the dilution, this property is not preserved sharply. 
\newline
Here we consider the case where the number of blank entries present in $\boldsymbol \xi^{\mu}$ is preserved on average in the related sample $\{\eta^{\mu,a}\}_{a=1,...,M}$ but lacun\ae ~can move along the examples: this more realistic kind of noise gives rise to cumbersome calculations (still analytically treatable) but should not affect heavily the capabilities of learning, storing and retrieving of these networks (as we now prove). 

Specifically here we define the new kind of examples $\Tilde{\eta}_{i}^{\mu,a}$ (that we can identify from the previous ones $\eta_{i}^{\mu,a}$ by labeling them with a tilde) in the following way 
\begin{definition}
    Given $K$ random patterns $\boldsymbol \xi^{\mu}$ ($\mu=1,...,K$), each of length $N$, whose entries are i.i.d. from
\begin{equation}
\mathbb  P(\xi_i^{\mu}) = \frac{(1-d)}{2}\delta_{\xi_i^{\mu},-1} +  \frac{(1-d)}{2}\delta_{\xi_i^{\mu},+1} + d  \delta_{\xi_i^{\mu},0},   
\end{equation}
we use these archetypes to generate $M\times K$ different examples $\{\tilde\eta_i^{\mu,a}\}^{a=1,\hdots, M}$ whose entries are depicted following
\begin{equation}
\label{eq:new_eta}
\begin{array}{lll}
     \mathbb P(\tilde\eta_i^{\mu,a}|\xi_i^\mu=\pm 1)=A_\pm(r,s)\delta_{\tilde\eta_i^{\mu,a},\xi_i^{\mu}}+ B_\pm(r,s)\delta_{\tilde\eta_i^{\mu,a},-\xi_i^\mu}+ C_\pm(r,s)\delta_{\tilde\eta_i^{\mu,a},0}
\\\\
    \mathbb P(\tilde\eta_i^{\mu,a}|\xi_i^\mu=0)=A_0(r,s)\delta_{\tilde\eta_i^{\mu,a},\xi_i^\mu}+ B_0(r,s)\delta_{\tilde\eta_i^{\mu,a},+1}+ C_0(r,s)\delta_{\tilde\eta_i^{\mu,a},-1}
\end{array}
\end{equation}
for $i=1,\hdots, N$ and $\mu=1,\hdots, K$, where we pose
\begin{equation}\label{Dd}
\begin{array}{lll}
     A_\pm(r,s)=\dfrac{1+r}{2}\left[1-\dfrac{d}{1-d}(1-s)\right]+\dfrac{d(1-s)(1- r)}{4(1-d)}\,,&\;\;&\;\;A_0(r,s)=\dfrac{1+s}{2}\,,
     \\\\
     B_\pm(r,s)= \dfrac{1-r}{2}\left[1-\dfrac{d}{1-d}(1-s)\right]+\dfrac{d(1-s)(1+r)}{4(1-d)}\,,&\;\;&\;\;B_0(r,s)=\dfrac{1-s}{4}\,,
     \\\\
     C_\pm(r,s)=\dfrac{d}{2(1-d)}(1-s)\,,&\;\;&\;\;C_0(r,s)=\dfrac{1-s}{4}\,,
\end{array}
\end{equation}
with $r,s\in[0;1]$ (whose meaning we specify soon, {\em vide infra}). 
\end{definition}
Equation \eqref{eq:new_eta}  codes for the new noise, the values of the coefficients presented in \eqref{Dd} have been chosen in order that all the examples contain  on average the same fraction $d$ of null entries as the original archetypes. To see this it is enough to check that the following relation holds for each $a=1,\hdots, M$, $i=1,\hdots,N$ and $\mu=1,\hdots,K$
\begin{equation}
\begin{array}{lll}
\mathbb P(\tilde\eta_i^{\mu,a}=0)=\SOMMA{x\in\{-1,0,1\}}{}\hspace*{-0.3cm}\mathbb P(\tilde\eta_i^{\mu,a}=0|\xi_i^{\mu}=x)\mathbb P(\xi_i^{\mu}=x)= C_\pm(r,s) (1-d)+ A_0(r,s) d=d\,.
\end{array}
\end{equation}
Defined the data-set, the cost function follows straightforwardly in Hebbian settings as
\begin{definition}
Once introduced $N$ Ising neurons $\sigma_i = \pm 1$ ($i=1,...,N$) and the data-set considered in the definition above, the Cost function of the multitasking Hebbian network equipped with not-preserving-dilution noise reads as
\begin{equation}\label{NewCost1}
\mathcal{H}^{(sup)}_{N,K,M, r,s, d}(\boldsymbol{\sigma} \vert\tilde{\boldsymbol{\eta}}) = -\dfrac{1}{N}\dfrac{1}{ (1-d)(1+\tilde\rho)}\SOMMA{\mu=1}{K}\SOMMA{i,j=1}{N,N}\left(\dfrac{1}{\tilde{r}M }\SOMMA{a=1}{M}\tilde\eta_i^{\mu,a}\right)\left(\dfrac{1}{\tilde{r} M }\SOMMA{b=1}{M}\tilde\eta_j^{\mu,b}\right)\sigma_i\sigma_j,
\end{equation}
where 
\begin{equation}
    \tilde r = \dfrac{r}{(1-d)}\left[1-\dfrac{d}{2} (5-3 s)\right]
\end{equation}
and $\tilde\rho$ is the generalization of the data-set entropy, defined as:
\begin{equation}
    \begin{array}{lll}
        \tilde\rho= \dfrac{1-\tilde{r}^2}{M \tilde{r}^2}\,.
    \end{array}
\end{equation}
\end{definition}


%
\begin{definition}
The suitably re-normalized example's magnetizations $n_{\mu}$ read as
\begin{align}
    n_\mu& :=\dfrac{1}{(1+\tilde\rho)}\dfrac{1}{N}\SOMMA{i=1}{N} \left(\dfrac{1}{\tilde r M}\SOMMA{a=1}{M}\tilde\eta_i^{\mu,a}\right)\sigma_i\,.
\end{align}
\end{definition}
En route toward the statistical pressure, still preserving Guerra's interpolation as the underlying technique, we give the next
\begin{definition}
Once introduced the noise $\beta \in \mathbb{R}^+$, an interpolating parameter $t \in (0,1)$, the $K+1$ auxiliary fields $J$ and $\psi_{\mu}$ ($ \mu \in (1,...,K)$), the interpolating partition function related to the model defined by the Cost function \eqref{NewCost1} reads as
\begin{equation}
\label{NewZeta}
\begin{array}{lll}\small
\mathcal{Z}^{(sup)}_{\beta, N,K,M,r,s,d}( \bm \xi, \tilde{\bm \eta} ;J, t)=\SOMMA{\lbrace\boldsymbol{\sigma}\rbrace}{} \exp{}\Bigg[&J \SOMMA{\mu,i=1}{K,N} \xi_i^\mu \si+t\beta N \dfrac{(1+\rhoD)}{2(1-d)}\SOMMA{\mu=1}{K} n_{\mu}^2(\boldsymbol{\sigma})+(1-t)N\SOMMA{\mu=1}{K}\psi_{\mu} \, n_{\mu}(\boldsymbol{\sigma})\Bigg].
\end{array}
\end{equation}
and the interpolating statistical pressure $\mathcal A_{\beta, K,M, r,s,d}= \lim\limits_{N \to \infty} A_{\beta, N,K,M, r,s,d}$  induced by the partition function (\ref{NewZeta}) reads as
\begin{equation}
 A_{\beta, N,K,M, r,s,d}(J, t) = \frac{1}{N} \mathbb{E}\Big[\ln \mathcal{Z}^{(sup)}_{\beta, N,K,M, r,s,d}( \bm \xi, \tilde{\bm \eta} ;J, t)\Big]
\end{equation}
where $\mathbb{E}=\mathbb{E}_{\xi}\mathbb{E}_{(\tilde\eta|\xi)}$.
\end{definition}
\begin{remark}
Of course, as in the model studied in the main text still with Guerra's interpolation technique, we aim to find an explicit expression (in terms of the control and order parameters of the theory) of the interpolating statistical pressure evaluated at $t=1$ and $J=0$.
\end{remark}
We thus perform the computations following the same steps of the previous investigation: the $t$ derivative of interpolating pressure is given by 
\begin{equation}
\begin{array}{lll}
     \dfrac{d\mathcal A_{\beta,K,M,r,s,d}( J, t)}{dt}=&\dfrac{\beta}{2(1-d)}(1+\rhoD)\SOMMA{\mu=1}{K}\l n^2_\mu \r_t-\SOMMA{\mu=1}{K}\psi_\mu\l n_\mu\r_t.
\end{array}
\end{equation}
fixing 
\begin{equation}
    \psi_\mu =\dfrac{\beta}{1-d}(1+\rhoD)\n_{\mu}
\end{equation}
and computing the one-body term 
\begin{equation}
\begin{array}{lll}
      \mathcal{A}_{\beta, K,M,r,s,d}(J, t=0)&=\mathbb{E}\ln{}\left[{2\cosh\left(\SOMMA{\mu=1}{K}\psi_\mu \dfrac{1}{(1+\rhoD)}\dfrac{1}{\tilde r M}\SOMMA{a=1}{M}\tilde\eta_i^{\mu,a}+J\SOMMA{\mu=1}{K}\xi^\mu\right)}\right]
      \\\\
      &=\mathbb{E}\ln{}\left\{{2\cosh\left[\dfrac{\beta}{1-d}\SOMMA{\mu=1}{K}\n_{\mu} \left(\dfrac{1}{\tilde r M}\SOMMA{a=1}{M}\tilde\eta_i^{\mu,a}\right)+J\SOMMA{\mu=1}{K}\xi^\mu\right]}\right\}.
\end{array}
\end{equation}
We get the final expression as $N \to \infty$ such that we can state the next 
\begin{theorem}
In the thermodynamic limit $(N \to \infty)$ and in the low load regime $(K/N \to 0)$, the quenched statistical pressure of the multitasking Hebbian network equipped with not-preserving-dilution noise, whatever the presence of a teacher,  reads as   	
\begin{equation}
\begin{array}{lll}
\mathcal{A}_{\beta, K,M, r,s,d}(J)
 &=&\mathbb{E}\left\{\ln{}\Bigg[2\cosh\left(\b\SOMMA{\mu=1}{K}\n_{\mu} \hat\eta^\mu+J\SOMMA{\mu=1}{K}\xi^\mu\right)\Bigg]\right\}-\dfrac{\b}{2}(1+\rhoD)\SOMMA{\mu=1}{K}\n_\mu^2.
\end{array}
\end{equation}
\normalsize
where $\b=\beta/(1-d)$, $\mathbb{E}=\mathbb{E}_{\xi}\mathbb{E}_{(\tilde\eta|\xi)}$ and $\hat{\eta}^\mu=\dfrac{1}{\tilde r M}\SOMMA{a=1}{M}\tilde\eta_i^{\mu,a} $and the values $\bar{n}_\mu$  must fulfill the following self-consistent equations
\begin{equation}
    \begin{array}{lll}
        \n_\mu=\dfrac{1}{(1+\rhoD)}\mathbb{E}\left\{\Bigg[\tanh\left(\b\SOMMA{\nu=1}{K}\n_{\nu} \hat\eta^\nu\right)\Bigg]\hat\eta^\mu\right\} &\;\;&\mathrm{for}\;\;\mu=1,\hdots, K\,,
    \end{array}
\end{equation}
that extremize the statistical pressure $\mathcal{A}_{\beta, K,M, r,s,d}(J=0)$ w.r.t. them.
\end{theorem}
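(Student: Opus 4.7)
The plan is to mimic the Guerra-interpolation argument that established Proposition~\ref{prop:highnoise} in Section~\ref{SezioneTreUno}, merely updating the noise moments to those induced by the new sampling law \eqref{eq:new_eta}--\eqref{Dd}. First I would differentiate $\mathcal{A}_{\beta,N,K,M,r,s,d}(J,t)$ with respect to the interpolating parameter, obtaining the sum rule
\begin{equation}
\frac{d\mathcal{A}_{\beta,N,K,M,r,s,d}(J,t)}{dt} = \frac{\beta(1+\tilde\rho)}{2(1-d)}\sum_{\mu=1}^K \langle n_\mu^2\rangle_t - \sum_{\mu=1}^K \psi_\mu \langle n_\mu\rangle_t.
\end{equation}
Then I would fix the free constants as $\psi_\mu = \frac{\beta(1+\tilde\rho)}{1-d}\bar n_\mu$, so that completing the square turns the right-hand side into $\frac{\beta(1+\tilde\rho)}{2(1-d)}\langle (n_\mu-\bar n_\mu)^2\rangle_t - \frac{1}{2}\b(1+\tilde\rho)\sum_\mu \bar n_\mu^2$, with $\b=\beta/(1-d)$.

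Next, I would invoke the low-storage self-averaging of $n_\mu$ (which, as in \cite{Bovier,Tala} and eq.~\eqref{PdiEmme}, only requires that the examples be i.i.d.\ conditionally on the archetypes, a property preserved by the new noise model) to drop the fluctuation $\langle(n_\mu-\bar n_\mu)^2\rangle_t$ in the thermodynamic limit. The $t$-derivative is then $t$-independent and the fundamental theorem of calculus, $\mathcal{A}(1) = \mathcal{A}(0)+\int_0^1 dt\,\frac{d\mathcal{A}}{dt}$, reassembles the pressure once the Cauchy datum at $t=0$ is computed. The latter decouples across the sites into the advertised one-body form, and together with the quadratic term it yields precisely the statement. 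The self-consistency equations then follow by imposing $\partial_{\bar n_\mu}\mathcal{A}=0$ at the saddle point.

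The main obstacle is essentially bookkeeping in two directions. On the one hand, the moments of $\hat\eta^\mu$ under the new noise \eqref{Dd} are more delicate than in the main text: one must verify in particular that $\mathbb{E}[(\hat\eta^\mu)^2]=(1-d)(1+\tilde\rho)$, which is exactly the identity that justifies the normalization chosen in the cost function \eqref{NewCost1} and makes the interpolation well-posed. This is a direct but tedious computation from the coefficients $A_\pm,B_\pm,C_\pm,A_0,B_0,C_0$ together with the rescaling $\tilde r=r\big[1-\tfrac{d}{2}(5-3s)\big]/(1-d)$, and it is where the parameter $s$ enters the calculation in a nontrivial way. On the other hand, one has to remark that the argument never uses the distinction between supervised and unsupervised couplings: in the low-storage regime $n_{\mu,a}\to\bar n_\mu$ in probability, so in the $N\to\infty$ limit the square of the sum over $a$ coincides with the sum of the squares; hence the same expression for the pressure is obtained regardless of the presence of a teacher, justifying the phrase ``whatever the presence of a teacher'' in the statement.
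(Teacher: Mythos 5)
Your proposal follows essentially the same route as the paper: Guerra's interpolation on the partition function \eqref{NewZeta}, the sum rule for $\partial_t\mathcal{A}$, the choice $\psi_\mu=\b(1+\tilde\rho)\bar n_\mu$, suppression of the fluctuations $\langle(n_\mu-\bar n_\mu)^2\rangle_t$ by low-storage self-averaging, and reassembly via the fundamental theorem of calculus with the one-body Cauchy datum at $t=0$. Your explicit remark that one must verify $\mathbb{E}[(\hat\eta^\mu)^2]=(1-d)(1+\tilde\rho)$ under the new noise \eqref{Dd} is a useful piece of bookkeeping that the paper leaves implicit, but it does not change the argument.
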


Furthermore, the simplest path to obtain a self-consistent equation also for the Mattis magnetization $m_{\mu}$ is by considering the auxiliary field $J$ coupled to $m_{\mu}$ namely $\bar{m}_\mu=  \nabla_J \mathcal{A}_{\beta, K,M, r,s,d}(J)\vert_{J=0}$ to get
\begin{equation}
    \begin{array}{lll}
    \m_\mu = \mathbb{E} \left\{\tanh\left[\b\SOMMA{\nu=1}{K}\n_{\nu} \hat\eta^\nu \right] \xi^\mu\right\}&\;\;&\mathrm{for}\;\;\mu=1,\hdots, K\,.
    \end{array}
\end{equation}
 \begin{figure}
     \centering
     \includegraphics[scale=1.2]{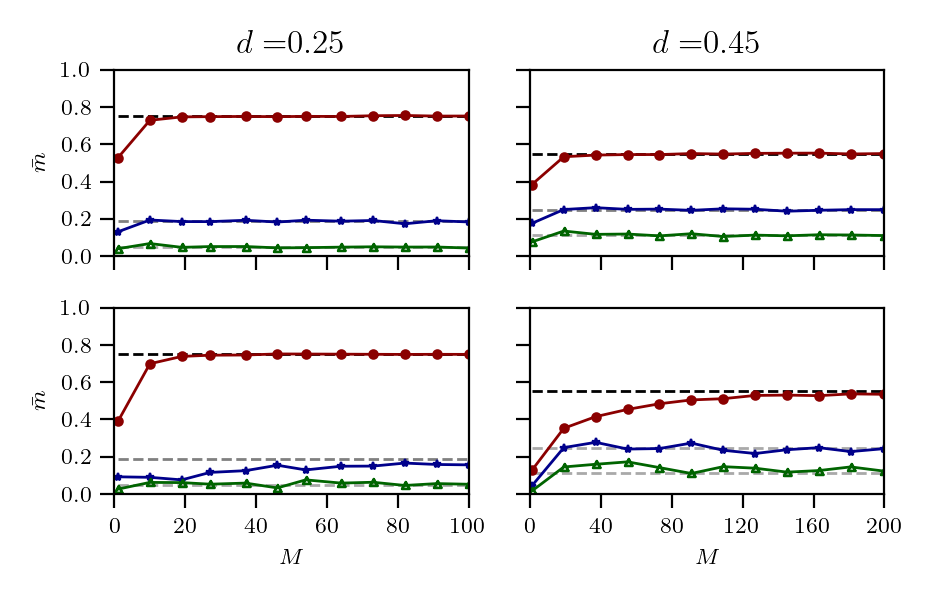}
     \caption{Comparison  of the numerical solution of the self consistency equations related to the Mattis magnetization in the two models: upper panel is due to the first model (reported in the main text), lower panel reports on the second model (deepened here). Beyond a different transient at small $M$ the two models behave essentially in the same way.}
 \end{figure}
We do not plot these new self-consistency equations as, in the large $M$ limit, there are no differences w.r.t. those obtained in the main text.






\section{On the data-set entropy $\rho$}\label{AppEntropy}

In this appendix, focusing on a single generic bit, we  deepen the relation between the conditional entropy  $H(\xi^\mu_i|\bm \eta^{\mu}_i)$ of a given pixel $i$ regarding archetype $\mu$ and the information provided by the data-set regarding such a pixel, namely the block $ \left(\eta^{\mu,1}_i, \eta^{\mu,2}_i, \hdots, \eta^{\mu,M}_i\right)$ to justify why we called $\rho$ the data-set entropy in the main text. As calculations are slightly different among the two analyzed models (the one preserving dilution position provided in the main text and the generalized one given in the previous appendix) we repeat them model by model  for the sake of transparency.

\subsection{I: multitasking Hebbian network equipped with not-affecting-dilution noise}
Let us focus on the $\mu$-th pattern and the $i$-th digit, whose related block is 
\begin{equation}
    \eta^{\mu}_i = \left(\eta^{\mu,1}_i, \eta^{\mu,2}_i, \hdots, \eta^{\mu,M}_i\right);
\end{equation} 
the error probability for any single entry is 
\begin{equation}
    \mathbb{P}(\xi^{\mu}_i \neq 0)\mathbb{P}(\eta^{\mu,a}_i \neq \xi_i^\mu) = (1-d)(1 - r)/2
\end{equation}
 and, by applying the majority rule on the block, it is reduced to
 \begin{equation}
     \mathbb{P}(\xi^{\mu}_i \neq 0)\mathbb P\left(\mathrm{sign}\Big(\sum\limits_a \eta^{\mu,a}_i\xi_i^\mu\Big) = -1\right) \underset{M\gg 1}{\approx}\dfrac{(1-d)}{2}\left[1 -\mathrm{erf}\left(\dfrac{1}{\sqrt{2\rho}}\right)\right].
 \end{equation}
 Thus
 \begin{equation}
     H_{d,r,M}(\bm\xi^\mu|\bm \eta^{\mu}) = -\left[x(d,r,M)\log_2 x(d,r,M)+y(d,r,M)\log_2 y(d,r,M)\right]
     \label{eq:entropy_standard}
 \end{equation}
 where
 \begin{equation}
 \begin{array}{lll}
      x(d,r,M) = \dfrac{(1-d)}{2}\left[1 -\mathrm{erf}\left(\dfrac{1}{\sqrt{2\rho}}\right)\right]\;,
      &&
      y(d,r,M) = 1- x(d,r,M)\,.
 \end{array}
 \end{equation}
 

\subsection{II: multitasking Hebbian network equipped with not-preserving-dilution noise }

Let us focus on the $\mu$-th pattern and the $i$-th digit, whose related block is 
\begin{equation}
    \tilde\eta^{\mu}_i = \left(\tilde\eta^{\mu,1}_i, \tilde\eta^{\mu,2}_i, \hdots, \tilde\eta^{\mu,M}_i\right);
\end{equation} 
the error probability for any single entry is 
\begin{equation}
    \mathbb{P}(\xi^{\mu}_i \neq 0)\mathbb{P}(\tilde\eta^{\mu,a}_i \xi^{\mu}_i \neq +1|\xi^{\mu}_i \neq 0)+\mathbb{P}(\xi_i^\mu=0)\mathbb{P}(\tilde\eta^{\mu,a}_i \neq 0|\xi_i^\mu=0) = d(1-s)\,.
\end{equation}
By applying the majority rule on the block, it is reduced to
 \begin{equation}
 \begin{array}{lll}
       \mathbb{P}(\xi^{\mu}_i \neq 0)\left[1-\mathbb{P}\Big(\mathrm{sign}(\hat\eta^{\mu}_i\xi_i^\mu) = +1\Big|\xi^{\mu}_i \neq 0\Big)\right]+\mathbb{P}(\xi^{\mu}_i = 0)\mathbb{P}\Big(\mathrm{sign}[|\hat\eta^{\mu}_i|] = +1\Big|\xi_i^\mu=0\Big)
       \\\\
       \underset{M\gg 1}{\approx} \dfrac{(1-d)}{2}\left\{1-\mathrm{erf}\left[\left(2\tilde\rho-\dfrac{d (1-s)}{(1- d)M \tilde r^2}\right)^{-1/2}\right]\right\}+\dfrac{d}{2}\;\left\{1-\mathrm{erf}\left[\left(\dfrac{1-s}{M\tilde r^2}\right)^{-1/2}\right]\right\}\,. 
 \end{array}
 \end{equation}
Thus
 \begin{equation}
H_{d,r,s,M}(\xi^\mu_i|\tilde{\bm \eta}^{\mu}_i) = -\left[x(d,r,s,M)\log_2 x(d,r,s,M)+y(d,r,s,M)\log _2 y(d,r,s,M)\right]
     \label{eq:entropy_standard}
 \end{equation}
 where
 \begin{equation}
 \begin{array}{lll}
      x(d,r,s,M) = \dfrac{(1-d)}{2}\left\{1-\mathrm{erf}\left[\left(2\tilde\rho-\dfrac{d (1-s)}{(1- d)M \tilde r^2}\right)^{-1/2}\right]\right\}+\dfrac{d}{2}\;\left\{1-\mathrm{erf}\left[\left(\dfrac{1-s}{M\tilde r^2}\right)^{-1/2}\right]\right\}
      \\\\
      y(d,r,s,M) = 1-x(d,\rhoD)
 \end{array}
 \end{equation}
Whatever the model, the conditional entropies $H_{d,r,M}(\xi^\mu_i|\bm \eta^{\mu}_i)$ and $H_{d,r,s,M}(\xi^\mu_i|\tilde{\bm \eta}^{\mu}_i)$ are a monotonic increasing functions of $\rho$ and $\tilde \rho$ respectively, hence the reason to calling $\rho$ and $\tilde \rho$ the entropy of the data-set.

\section{Stability analysis: an alternative approach} \label{AppendiceC}
\subsection{Stability analysis via signal-to-noise technique}
\label{subsec:S2N_sup}
The standard signal-to-noise technique \cite{Amit} is a powerful method to investigate the stability of a given neural configuration in the noiseless limit $\beta \to \infty$: by requiring that each neuron is aligned to its field (the post-synaptic potential that it is experiencing, i.e. $h_i \sigma_i \geq 0 \ \ \forall i \in (1,...,N)$) this analysis allows to correctly classify which solution (stemming from the self-consistent equations for the order parameters) is preferred as the control parameters are made to vary and thus it can play as an alternative route w.r.t the standard study of the Hessian of the statistical pressure reported in the main text (see Sec. \ref{cazzarola}).
\newline
In particular, recently, a revised version of the signal-to-noise technique has been developed \cite{Pedreschi1,Pedreschi2} and in this new formulation it is possible to obtain the self-consistency equations for the order parameters explicitly so we can compare directly outcomes from signal-to-noise to outcomes from statistical mechanics. By comparison of the two routes that lead to the same picture, that is statistical mechanics and revised signal-to-noise technique, we can better comprehend the working criteria of these neural networks.
\newline
\newline
We  suppose that the network is in the  hierarchical configuration prescribed by eq.  (\ref{eq:parallel_ans}), that we denote as $\bm \sigma = \bm \sigma^*$, and we must evaluate the local field $h_i(\bm \sigma^*)$ acting on the generic neuron $\sigma_i$ in this configuration to check that $h_i(\bm \sigma^*)\si^* >0$ is satisfied for any $i=1, \hdots, N$: should this be the case, the configuration would be stable, vice versa unstable.
\newline
Focusing on the supervised setting with no loss of generality (as we already discussed that the teacher essentially plays no role in the low storage regime) and selecting (arbitrarily) the hierarchical ordering as a test case to be studied, we start by re-writing the Hamiltonian \eqref{def:H_dil_sup} as 
\begin{equation}
    -\mathcal{H}_{N,K,M, r}(\boldsymbol{\sigma} \vert \bm \eta)=\SOMMA{i=1}{N}h_i(\boldsymbol \sigma)\sigma_i \, ,
\end{equation}
where the local fields $h_i$ appear explicitly and are given by 
\begin{align}
    h_i(\boldsymbol{\sigma})=& \dfrac{1}{2N\,r^2M^2 (1-d)(1+\rho)}\SOMMA{\mu=1}{K}\SOMMA{j\neq i}{N}\SOMMA{a,b}{M_\mu,M_\mu}\eta^{\mu, a}_{i}\eta^{\mu,b}_{j}\sigma_{j} \, .
\end{align}
The updating rule for the neural dynamics reads as
\begin{equation}    \sigma^{(n+1)}_i=\sigma^{(n)}_i\mathrm{sign}\left(\tanh{\left[\beta\sigma_i^{(n)}h_i^{(n)}(\boldsymbol{\sigma}^{(n)})\right]}+\Gamma_i\right) \;\;\mathrm{with}\;\;\Gamma_i\sim \mathcal{U}[-1;+1] \, ,
\end{equation}
that, in the zero fast-noise limit $\beta\to +\infty$, reduces to
\begin{equation}    \sigma^{(n+1)}_i=\sigma^{(n)}_i\mathrm{sign}\left(\sigma_i^{(n)}h_i^{(n)}(\boldsymbol{\sigma}^{(n)})\right).
\end{equation}
To inspect the stability of the hierarchical parallel configuration, we initialize the network in such configuration, i.e., $\boldsymbol \sigma^{(1)} = \bm \sigma^*$, then, following Hinton's prescription \cite{Hinton1MC,Hinton2MC}\footnote{The {\em early stopping prescription} given by Hinton and coworkers became soon very popular, yet it has been criticized in some circumstances (in particular where glassy features are expected to be strong and may keep the network out of equilibrium for very long times, see e.g. \cite{McKay-2001,AuroBea1,AuroBea2}): we stress that, in the present section, we are assuming the network has already reached the equilibrium further, confined to the low storage inspection, spin glass bottlenecks in thermalization should not be prohibitive.} the one-step n-iteration $\bm \sigma^{(2)}$ leads to an expression of the magnetization that reads as
\begin{equation}
    m_\mu^{(2)}=\dfrac{1}{N}\SOMMA{i=1}{N}\xi_i^\mu\sigma_i^{(2)}=\dfrac{1}{N}\SOMMA{i=1}{N}\xi_i^\mu\left[\sigma^*_i\mathrm{sign}\left(\sigma_i^*\,h_i^{(1)}(\boldsymbol{\sigma}^{*})\right)\right];
    \label{eq:m_1_step}
\end{equation}
Next, using the explicit expression of the hierarchical parallel configuration \eqref{eq:parallel_ans}, we get
\begin{equation}
\small
\begin{array}{lll}
     m_1^{(2)}=\dfrac{1}{N}\SOMMA{i=1}{N}\left(\xi_i^1\right)^2\mathrm{sign}\left(\sigma_i^*\,h_i^{(1)}(\boldsymbol{\sigma}^{*})\right);
     \\\\
     m_{\mu>1}^{(2)}=\dfrac{1}{N}\SOMMA{i=1}{N}\left(\xi_i^\mu\right)^2\prod\limits_{\rho=1}^{\mu-1}\delta\left(\xi_i^{\rho}\right)\mathrm{sign}\left(\sigma_i^*\,h_i^{(1)}(\boldsymbol{\sigma}^{*})\right);
\end{array}
\end{equation}
by applying the central limit theorem to estimate the sums appearing in the definition of $h_i^{(1)}$ for $i=1, \hdots, N$, we are able to split, mimicking the standard signal-to-noise technique, a signal contribution $(\kappa_{1,\mu}^{(1)})$ and a noise contribution $(\kappa_{2,\mu}^{(1)})$ as presented in the following 
\begin{equation}
\begin{array}{lll}
     \sigma^*_i h_i^{(1)} (\boldsymbol{\sigma}^{*})\sim \kappa_{1,\mu}^{(1)} + z_i \sqrt{\kappa_{2,\mu}^{(1)}}\;\;\;\;\;\;\mathrm{with}\;\;z_i\sim\mathcal{N}(0,1)
\end{array}
\end{equation}
where
\begin{eqnarray}
\kappa_{1,\mu}^{(1)}\coloneqq \mathbb{E}_{\xi}\mathbb{E}_{(\eta|\xi)} \left[\sigma^*_i\,h_i^{(1)}(\boldsymbol{\sigma}^{*})\right] &\;\;\;\; \kappa_{2,\mu}^{(1)}\coloneqq \mathbb{E}_{\xi}\mathbb{E}_{(\eta|\xi)} \left[\sigma^*_i\,h_i^{(1)}(\boldsymbol{\sigma}^{*})\right] ^2
\end{eqnarray}
Thus, Eq.~\eqref{eq:m_1_step} becomes
\begin{equation}
\begin{array}{lll}
        m_\mu^{(2)}&=&\left[\dfrac{1}{N}\SOMMA{i=1}{N}\left(\xi_i^\mu\right)^2\prod\limits_{\rho=1}^{\mu-1}\delta\left(\xi_i^{\rho}\right)\mathrm{sign}\left(\kappa_{1,\mu}^{(1)} +z_i\sqrt{\kappa_{2,\mu}^{(1)}}\right)\right]\;\;\;\;\;\mathrm{with}\;\;\mu=1,2,\hdots,K\,.
\end{array}
    \label{eq:m_1_step_sign_1}
\end{equation}
\begin{figure}
    \centering
    \includegraphics[width=15cm]{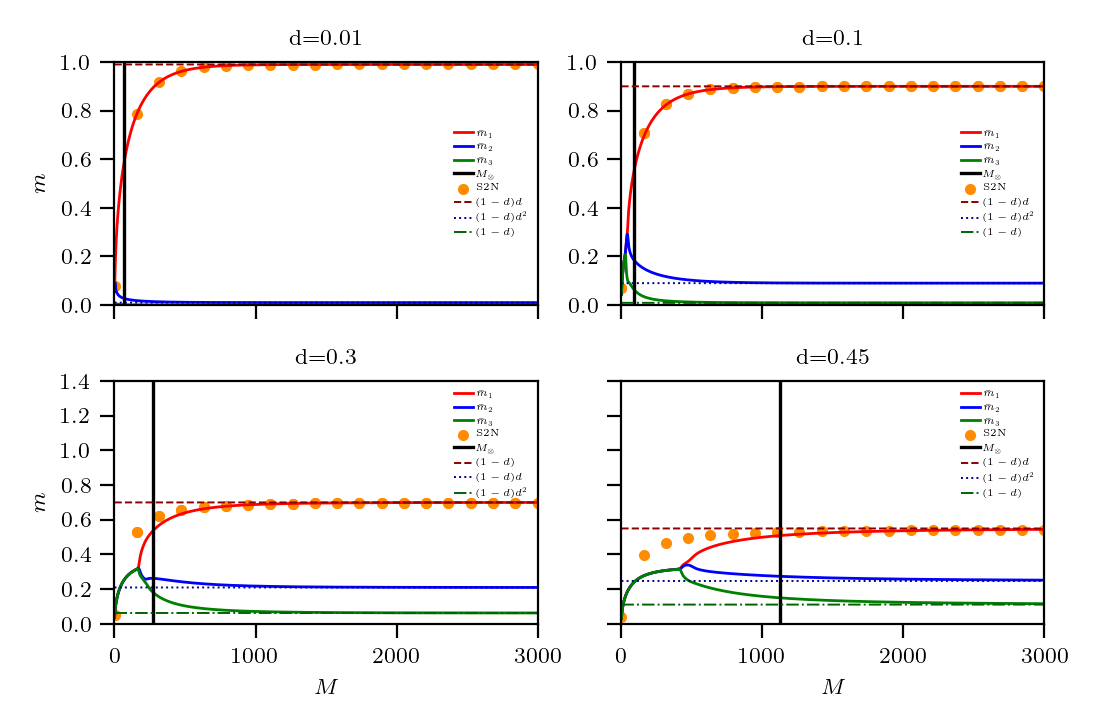}
    \caption{Signal to Noise numerical inspection of the Mattis magnetizations for a diluted network with $r=0.1$ and $K=3$ in the hierarchical regime (at  levels of pattern's dilution $d<d_c$ as reported in their titles): we highlight the agreement, as the saturation level is reached, among Signal to Noise analysis (orange dots) and the value of the magnetization of the first pattern found by the mechanical statistical approach (reported as solid red line). The dashed lines represent the Hebbian storing prescriptions at which the values of the magnetizations converge. The vertical black line depicts the critical amount of examples $M_\otimes$ that must be experienced by the network to properly depict the archetypes: note that this value is systematically above, in M, that the point where all the bifurcations happened, hence all the magnetizations stabilized on their hierarchical displacements.}
    \label{fig:s2n_3}
\end{figure}
For large values of $N$, the arithmetic mean coincides with the theoretical expectation, thus
\begin{equation}
    \dfrac{1}{N}\SOMMA{i=1}{N} g(\xi_i,z_i) \;\; \xrightarrow[N\to\infty]{} \mathbb{E}_{\xi,z}[g(\xi,z)]=\mathbb{E}_{\xi}\left[\displaystyle\int \dfrac{dz}{\sqrt{2\pi}}e^{-\frac{z^2}{2}} g(\xi,z)\right]\,.
\end{equation}
therefore, we can rewrite Eq. \eqref{eq:m_1_step_sign_1} as
\begin{equation}
\begin{array}{lll}
        m_\mu^{(2)}&=& (1-d)d^{\mu-1}\mathrm{erf}\left[\dfrac{\kappa_{1,\mu}^{(1)}}{\sqrt{2\Big(\kappa_{2,\mu}^{(1)}-\left(\kappa_{1,\mu}^{(1)}\right)^2\Big)}}\right]\;\;\;\;\;\mathrm{with}\;\;\mu=1,2,\hdots,K\,.
\end{array}
\end{equation}

While we carry out the computations of $\kappa_{1,\mu}^{(1)}$ and $\kappa_{2,\mu}^{(1)}$ in Appendix \ref{app:momenta}, here we report only their values, which are
\begin{equation}
\begin{array}{lll}
     \kappa_{1,\mu}^{(1)}= \dfrac{1}{2}\dfrac{1}{(1+\rho)}d^{\mu-1}\,,\;\;
    &&
     \kappa_{2,\mu}^{(1)}= \dfrac{1}{4}\dfrac{1}{1+\rho}d^{\mu-1}\left[\dfrac{1+d+d^2-d^{2K-2\mu+2}}{1+d}\right].
\end{array}
\end{equation}
So we get
\begin{equation}
\begin{array}{lll}
        m_\mu^{(2)} (d,K,\rho)       &=&(1-d)d^{\mu-1}\mathrm{erf}\left[\dfrac{1}{\sqrt{2}}\dfrac{\sqrt{(1+d)d^{\mu-1}}}{\sqrt{(1+\rho)(1+d+d^2-d^{2K-2\mu+2})-d^\mu-d^{\mu-1}}}\right]
\end{array}
\label{eq:m_1_step_sign}
\end{equation}
As shown in Fig.~\ref{fig:s2n_3}, as a critical amount of perceived examples is collected, this expression is in very good agreement with the estimate stemming from the numerical solution of the self-consistent equations and indeed we can finally state the last 
\begin{theorem}
In the zero fast-noise limit ($\beta\to+\infty$), if the neural configuration
\begin{equation}
    \tilde{\bm\sigma}=\tilde{\bm\sigma}(\bm\xi)
\end{equation}
is a fixed point of the dynamics described by the sequential spin upload rule 
\begin{equation} \sigma_i^{(n+1)}=\sigma^{(n)}_i\mathrm{sign}\left[\beta\sigma_i^{(n)}h_i^{(n)}(\boldsymbol{\sigma}^{(n)})\right] 
\end{equation}
where 
\begin{align}
    h_i^{(n)}(\boldsymbol{\sigma})=& \dfrac{1}{N\,M^2(1+\rho)r^2}\SOMMA{\mu=1}{K}\SOMMA{j\neq i}{N}\SOMMA{a,b}{M_\mu,M_\mu}\eta^{\mu, a}_{i}\eta^{\mu,b}_{j}\sigma_{j}^{(n)}\, ,
\end{align}
then the order parameters $n_\mu(\bm\sigma)=[NM(1+\rho)r]^{-1}\sum\limits_i^N\sum\limits_a^M\eta_i^{\mu,a}\sigma_i$ must satisfy the following self equations
\begin{equation}    
\begin{array}{lll}
     n_{\mu}&=&\dfrac{1}{(1+\rho)}\mathbb{E}_{\bm\xi}\mathbb{E}_{(\bm\eta|\bm\xi)}\Bigg\{\hat\eta^\mu\tilde{\sigma}(\bm\xi)\,\mathrm{sign}\left[\SOMMA{\nu=1}{K}n_{\nu}\hat\eta^\nu\tilde{\bm\sigma}(\bm\xi)\right]\Bigg\}\,,
\end{array}
\label{eq:self_theorem}
\end{equation}
where we set $\hat\eta^\mu=(Mr)^{-1}\sum\limits_a^M\eta^{\mu,a}$.
\end{theorem}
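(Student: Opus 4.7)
The plan is to rewrite the local field at a configuration in terms of the order parameters, use the fixed-point condition to replace $\tilde\sigma_i$ by a sign of a sum over patterns, and then invoke self-averaging to pass from the empirical definition of $n_\mu$ to a quenched expectation.

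First I would express $h_i$ in terms of the block averages $\hat\eta_i^\mu := (Mr)^{-1}\sum_a \eta_i^{\mu,a}$. A short rearrangement gives
\begin{equation}
h_i^{(n)}(\bm\sigma) \;=\; \frac{1}{N(1+\rho)}\sum_{\mu=1}^{K}\hat\eta_i^\mu\sum_{j\neq i}\hat\eta_j^\mu\sigma_j \;=\; \sum_{\mu=1}^K \hat\eta_i^\mu\, n_\mu(\bm\sigma) \;-\; \frac{\sigma_i}{N(1+\rho)}\sum_{\mu=1}^{K} (\hat\eta_i^\mu)^2,
\end{equation}
where the last (diagonal) term is $O(K/N)$ and vanishes in the low-storage thermodynamic limit. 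The fixed-point condition $\sigma_i^{(n+1)}=\sigma_i^{(n)}=\tilde\sigma_i$ for the sequential rule is equivalent to $\tilde\sigma_i = \tilde\sigma_i\,\mathrm{sign}(\tilde\sigma_i\,h_i(\tilde{\bm\sigma}))$, so, inserting the expression above, I obtain
\begin{equation}
\tilde\sigma_i \;=\; \tilde\sigma_i\,\mathrm{sign}\!\left[\tilde\sigma_i\sum_{\nu=1}^K n_\nu\,\hat\eta_i^\nu\right] + o(1),
\end{equation}
which is precisely the combination that appears inside the expectation on the r.h.s.\ of~(\ref{eq:self_theorem}).

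Next I would substitute this rewriting into the identity $n_\mu = [N(1+\rho)]^{-1}\sum_i\hat\eta_i^\mu\,\tilde\sigma_i$, yielding
\begin{equation}
n_\mu \;=\; \frac{1}{N(1+\rho)}\sum_{i=1}^N \hat\eta_i^\mu\,\tilde\sigma_i\,\mathrm{sign}\!\left[\sum_{\nu=1}^K n_\nu\,\tilde\sigma_i\,\hat\eta_i^\nu\right] \;+\; o(1).
\end{equation}
The summand now depends on site $i$ only through the local data $(\xi_i^\nu,\{\eta_i^{\nu,a}\}_{\nu,a})$, with the collective parameters $\{n_\nu\}$ playing the role of frozen constants. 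Since the site data are i.i.d.\ across $i\in\{1,\dots,N\}$ and, in the low-storage regime $K/N\to 0$, the order parameters concentrate on their thermodynamic values (cf.\ eq.~(\ref{PdiEmme})), the law of large numbers turns the empirical average into the quenched expectation, delivering the claimed self-consistency
\begin{equation}
n_\mu \;=\; \frac{1}{1+\rho}\,\mathbb{E}_{\bm\xi}\mathbb{E}_{(\bm\eta|\bm\xi)}\!\left\{\hat\eta^\mu\,\tilde\sigma(\bm\xi)\,\mathrm{sign}\!\left[\sum_{\nu=1}^K n_\nu\,\hat\eta^\nu\,\tilde\sigma(\bm\xi)\right]\right\}.
\end{equation}

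The main obstacle is the closure of the self-averaging step: $\tilde\sigma_i$ is itself a nonlinear function of the same quenched variables and depends implicitly on the global observables $n_\nu$. The standard way to handle this is a bootstrap/conditional argument---freeze $\{n_\nu\}$ at their putative thermodynamic values so that the summand becomes an i.i.d.\ function of the site data and LLN applies, then verify the loop by checking that the fixed-point values satisfy the resulting equation, using the concentration estimate~(\ref{PdiEmme}) to justify the freezing to leading order. The $O(K/N)$ diagonal correction to $h_i$ that was dropped can be controlled uniformly because $|\hat\eta_i^\mu|\leq 1/r$ is bounded, so the discarded term contributes a vanishing error as $N\to\infty$ with $K/N\to 0$.
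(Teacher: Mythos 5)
Your proposal is correct and follows essentially the same route as the paper's proof: rewrite $h_i$ as $\sum_\nu n_\nu \hat\eta_i^\nu$, impose the fixed-point condition on the update rule, substitute into the empirical definition of $n_\mu$, and pass to the quenched expectation via the law of large numbers as $N\to\infty$. The only differences are refinements on your side --- you explicitly track the $O(K/N)$ diagonal term coming from the $j\neq i$ restriction and flag the closure issue in the self-averaging step, both of which the paper glosses over.
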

\begin{remark}
The empirical evidence that, via early stopping criterion, we still obtain the correct solution proves a posteriori the validity of Hinton's recipe in the present setting and it tacitly candidate statistical mechanics as a reference also to inspect computational shortcuts.
\end{remark}

\begin{proof}
The local fields $h_i$ can be rewrite using the definition of $n_\mu$ as
\begin{align}
    h_i^{(n)}(\boldsymbol{\sigma})=\SOMMA{\mu=1}{K}n_{\mu}^{(n)}(\bm\sigma)\hat\eta^{\mu}_{i} \, ,
\end{align}
in this way the upload rule can be recast as
\begin{equation}    
\sigma^{(n+1)}_i=\sigma^{(n)}_i\mathrm{sign}\left[\SOMMA{\mu=1}{K}n_{\mu}^{(n)}(\bm \sigma)\hat\eta^{\mu}_{i}\sigma_i^{(n)}\right].
\end{equation}
Computing the value of the $n_\mu$-order parameters at the $(n+1)$ step of uploading process we get
\begin{equation}    
\begin{array}{lll}
     n_{\mu}^{(n+1)}(\bm \sigma)&=&\dfrac{1}{N(1+\rho)}\SOMMA{i}{N}\hat\eta^{\mu}_{i}\sigma^{(n+1)}_i
     \\\\
     &=&\dfrac{1}{N(1+\rho)}\SOMMA{i}{N}\hat\eta^{\mu}_{i}\sigma^{(n)}_i\mathrm{sign}\left[\:\SOMMA{\nu=1}{K}n_{\nu}^{(n)}(\bm \sigma)\hat\eta^{\nu}_{i} \sigma_i^{(n)}\right]\,.
\end{array}
\label{eq:semi_self}
\end{equation}
If $\tilde{\bm\sigma}(\bm\xi)$ is a fixed point of our dynamics, we must have
$\tilde{\bm\sigma}^{(n+1)}\equiv \tilde{\bm\sigma}^{(n)}$ and $n^{(n+1)}(\bm\sigma)\equiv n^{(n)}(\bm\sigma)$, thus \eqref{eq:semi_self} becomes
\begin{equation}    
\begin{array}{lll}
     n_{\mu}(\bm \sigma)&=&\dfrac{1}{N(1+\rho)}\SOMMA{i}{N}\hat\eta^{\mu}_{i}\tilde\sigma_i(\bm\xi)\,\mathrm{sign}\left[\SOMMA{\nu=1}{K}n_{\nu}(\bm \sigma)\hat\eta^{\nu}_{i} \tilde\sigma_i(\bm\xi)\right]. 
\end{array}
\label{eq:semi_self2}
\end{equation}
For large value of $N$, the arithmetic mean coincides with the theoretical expectation, thus
\begin{equation}
    \dfrac{1}{N}\SOMMA{i=1}{N}g(\eta_i) \xrightarrow[N\to\infty]{}\mathbb{E}_{\eta}\Big[g(\eta)\Big]
\end{equation}
therefore, \eqref{eq:semi_self2} reads as
\begin{equation}    
\begin{array}{lll}
     n_{\mu}&=&\dfrac{1}{(1+\rho)}\mathbb{E}_{\bm\xi}\mathbb{E}_{(\bm\eta|\bm\xi)}\left\{\hat\eta^{\mu}\tilde\sigma(\bm\xi)\,\mathrm{sign}\left[\:\SOMMA{\nu=1}{K}n_{\nu}\hat\eta^{\nu}\tilde\sigma(\bm\xi)\right]\right\}\,.
\end{array}
\label{eq:semi_self_finale}
\end{equation}
where we used $\mathbb{E}_{\bm\eta}=\mathbb{E}_{\bm\xi}\mathbb{E}_{(\bm\eta|\bm\xi)}$.
\end{proof}

\begin{corollary}
Under the hypothesis of the previous theorem, if the neural configuration coincides with the parallel configuration
\begin{equation}
    \tilde{\bm\sigma}(\bm\xi)=\bm\sigma^* = \bm \xi^1 +\SOMMA{\nu=2}{K}\bm \xi^\nu \prod\limits_{\rho=1}^{\nu-1}\delta\left(\bm \xi^\rho\right)
\end{equation}
then the order parameters $n_\mu(\bm\sigma)$ must satisfy the following self equation
\begin{equation}
\small
\begin{array}{lll}
     n_{\mu}&=&\dfrac{1}{(1+\rho)}\mathbb{E}_{\bm\xi}\mathbb{E}_{(\bm\eta|\bm\xi)}\Bigg\{\hat\eta^\mu\left(\xi^1 +\SOMMA{\lambda=2}{K}\xi^{\lambda} \prod\limits_{\rho=1}^{\lambda-1}\delta\left( \xi^\rho\right)\right)\,\mathrm{sign}\left[\SOMMA{\nu=1}{K}n_{\nu}\hat\eta^\nu\left(\xi^1 +\SOMMA{\lambda=2}{K}\xi^{\lambda} \prod\limits_{\rho=1}^{\lambda-1}\delta\left( \xi^\rho\right)\right)\right]\Bigg\}\,.
\end{array}
\label{eq:self_finale_star}
\end{equation}
\end{corollary}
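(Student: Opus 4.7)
The plan is to view this corollary as a direct specialization of the preceding theorem, obtained simply by substituting the explicit form of the parallel/hierarchical ansatz into the fixed-point identity \eqref{eq:self_theorem}. The theorem's conclusion is an identity for any fixed point $\tilde{\bm\sigma}(\bm\xi)$ of the zero-temperature sequential dynamics, so the only content of the corollary is to carry out that substitution.

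First I would write the parallel configuration in the compact scalar form $\sigma^*(\bm\xi) = \xi^1 + \sum_{\lambda=2}^{K}\xi^\lambda \prod_{\rho=1}^{\lambda-1}\delta(\xi^\rho)$, already introduced in \eqref{eq:parallel_ans}, and verify it is a legitimate $\pm 1$--valued configuration (the products of Kronecker deltas pick out exactly one nonzero archetype entry per site $i$, thanks to the disjoint support induced by the dilution structure). Next I would insert this expression in two places on the right-hand side of \eqref{eq:self_theorem}: once as the multiplicative factor $\tilde\sigma(\bm\xi)$ outside the sign, and once inside the sign argument multiplying $\sum_\nu n_\nu \hat\eta^\nu$. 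No algebraic simplification is required at this stage — the resulting expression is already \eqref{eq:self_finale_star} verbatim.

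The only nontrivial check is that the hypothesis of the theorem applies, namely that $\bm\sigma^*$ is actually a fixed point of the sequential update in the $\beta \to \infty$ limit. This is not a new computation: it is exactly what the signal-to-noise analysis in equations \eqref{eq:m_1_step}--\eqref{eq:m_1_step_sign} establishes, since the one-step iteration starting from $\bm\sigma^*$ returns Mattis magnetizations $m_\mu^{(2)} \to (1-d)d^{\mu-1}$ consistent with the hierarchical ansatz whenever the data-set size exceeds $M_\otimes$. Hence the hypothesis is satisfied in the regime of interest, and the corollary follows.

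The main obstacle, such as there is one, is purely notational: keeping track of the product of Kronecker deltas $\prod_{\rho=1}^{\lambda-1}\delta(\xi^\rho)$ inside the $\mathbb{E}_{\bm\xi}$ expectation so that the disjoint support of the $K$ contributions to $\sigma^*$ is respected. Once that bookkeeping is done, the corollary is essentially a one-line substitution and requires no further probabilistic or variational argument beyond what is already in the theorem's proof.
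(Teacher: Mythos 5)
Your proposal is correct and takes essentially the same route as the paper, whose proof of this corollary is literally the one-line substitution of the explicit form of $\bm\sigma^*$ into the fixed-point identity \eqref{eq:self_theorem}. The only remark is that your ``nontrivial check'' that $\bm\sigma^*$ is actually a fixed point of the dynamics is not needed: the corollary is stated conditionally, so the fixed-point property is part of the hypothesis rather than something to be established.
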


\begin{proof}
We only have to replace in \eqref{eq:self_theorem} the explicit form of $\bm\sigma^*$ and we get the proof.
\end{proof}

\subsection{Evaluation of momenta of the effective post-synaptic potential}
\label{app:momenta}

In this section we want to describe the computation of first and second momenta $\kappa_{1,\mu}^{(1)}$ and $\kappa_{2,\mu}^{(1)}$ in Sec. \ref{subsec:S2N_sup}, we will present only tha case of $\mu=1$.

Let us start from $\kappa_{1,\mu}^{(1)}$:
\begin{eqnarray}
\kappa_{1,\mu}^{(1)}\Big|_{\xi_i^1=\pm 1}\coloneqq \mathbb{E}_{\xi}\mathbb{E}_{(\eta|\xi)}\left[\sigma_i^{*}h_i^{(1)}(\boldsymbol{\sigma}^{*})\Big|_{\xi_i^1=\pm 1}\right]=\dfrac{1}{2}\SOMMA{j\neq i}{N}\dfrac{1}{r_1 M_1^2(1+\rho)}\mathbb{E}_{\xi}\mathbb{E}_{(\eta|\xi)}\left[\SOMMA{a,b}{M_1,M_1}\eta^{1}_{j}\left(\xi_j^1+\SOMMA{\nu=2}{\hat K}\xi_j^\nu\prod\limits_{\rho=1}^{\nu-1}\delta_{\xi_j^{\rho},0}\right)\right];\notag
\end{eqnarray}
\normalsize
since $\mathbb{E}_{\xi}[\xi_i^{\mu}]=0$ the only non-zero terms are the ones with $\mu=1$:
\small
\begin{eqnarray}
\kappa_{1,\mu}^{(1)}\Big|_{\xi_i^1=\pm 1}&\coloneqq& \dfrac{1}{2}\SOMMA{j\neq i}{N}\dfrac{1}{r_1 M_1^2(1+\rho)}\mathbb{E}_{\xi}\left[\SOMMA{a,b}{M_1,M_1}r_1\xi^{1}_{j}\left(\xi_j^1+\SOMMA{\nu=2}{\hat K}\xi_j^\nu\prod\limits_{\rho=1}^{\nu-1}\delta_{\xi_j^{\rho},0}\right)\right]
\\
&=&\dfrac{1}{2 N M_1^2r_1(1+\rho)}\SOMMA{j\neq 1}{N}\SOMMA{a,b}{M_1,M_1}r_1=\dfrac{1}{2 (1+\rho)}\notag
\end{eqnarray}
\normalsize
where we used $\mathbb{E}_{(\eta|\xi)}[\eta_i^{\mu,a}]=r\xi_i^\mu$. Moving on, we start the computation of $\kappa_{2,\mu}^{(1)}$, due to $\mathbb{E}_{\xi}[\xi_{i_1}^{\mu}\xi_{i_1}^{\nu}]=\delta^{\mu\nu}$, the only non-zero terms are:
\begin{align}
\footnotesize
\begin{array}{llll}
     \kappa_{2,\mu}^{(1)}\Big|_{\xi_i^1=\pm 1}&\coloneqq& \mathbb{E}_{\xi}\mathbb{E}_{(\eta|\xi)}\left[\lbrace \sigma_i^*h_i^{(1)}(\boldsymbol{\sigma}^{*})\rbrace ^2\Big|_{\xi_i^1=\pm 1}\right]
     \\
     &&\dfrac{1}{4N^2 (1-d)^2}\SOMMA{\mu=1}{K}\mathbb{E}_{\xi}\mathbb{E}_{(\eta|\xi)}\dfrac{1}{M_\mu^4r_\mu^4(1+\rho)^2}\SOMMA{k,j\neq i}{N,N}\left(\SOMMA{a_1,a_2,b_1,b_2}{M_\mu}\eta^{\mu, a_1}_{i}\eta^{\mu,a_2}_{i}\eta^{\mu, b_1}_{j}\eta^{\mu,b_2}_{k}\right)
\\     
&&\left(\xi_j^1+\SOMMA{\nu_1=2}{K}\xi_j^{\nu_1}\prod\limits_{\rho_1=1}^{\nu_1-1}\delta\left(\xi_j^{\rho_1}\right)\right)\left(\xi_k^1+\SOMMA{\nu_2=2}{K}\xi_k^{\nu_2}\prod\limits_{\rho_2=1}^{\nu_2-1}\delta\left(\xi_k^{\rho_2}\right)\right) = A_{\mu=1}+B_{\mu>1}
\end{array}
\end{align}
namely we will analyze separately the case for $\mu=1$ ($A_{\mu=1}$) and $\mu>1$ ($B_{\mu>1}$).
\begin{align}
\footnotesize
\begin{array}{llll}
    A_{\mu=1}&=&\dfrac{1}{4N^2 (1-d)^2}\mathbb{E}_{\xi}\mathbb{E}_{(\eta|\xi)}\dfrac{1}{M_1^4r_1^4(1+\rho)^2}\SOMMA{k,j\neq i}{N,N}\left(\SOMMA{a_1,a_2,b_1,b_2}{M_1}\eta^{1, a_1}_{i}\eta^{1,a_2}_{i}\eta^{1, b_1}_{j}\eta^{1,b_2}_{k}\right)\left(\xi_j^1\xi_k^1\right)
    \\\\
    &=&\dfrac{1}{4N^2(1-d)^2}\dfrac{1}{M_1^4r_1^4(1+\rho)^2}\mathbb{E}_{(\eta|\xi=\pm 1)}\left[\SOMMA{a_1,a_2}{M_1}\eta^{1, a_1}_{i}\eta^{1,a_2}_{i}\right]\SOMMA{k,j\neq i}{N,N}\mathbb{E}_{\xi}\mathbb{E}_{(\eta|\xi)}\left[\SOMMA{b_1,b_2}{M_1}\eta^{1, b_1}_{j}\eta^{1,b_2}_{k}\right]\left(\xi_j^1\xi_k^1\right)
    \\\\
    &=&\dfrac{1}{4}\dfrac{1}{(1+\rho)}\,;
\end{array}
\label{eq:amu}
\end{align}

\begin{align}
\footnotesize
\begin{array}{llll}
     B_{\mu>1}&\coloneqq& \dfrac{1}{4N^2 (1-d)^2}\SOMMA{\mu>1}{K}\mathbb{E}_{\xi}\mathbb{E}_{(\eta|\xi)}\dfrac{1}{M_\mu^4r_\mu^4(1+\rho)^2}\SOMMA{k,j\neq i}{N,N}\left(\SOMMA{a_1,a_2,b_1,b_2}{M_\mu}\eta^{\mu, a_1}_{i}\eta^{\mu,a_2}_{i}\eta^{\mu, b_1}_{j}\eta^{\mu,b_2}_{k}\right)
\\\\
&&\left(\prod\limits_{\rho_1=1}^{\mu-1}\delta\left(\xi_j^{\rho_1}\right)\right)\left(\prod\limits_{\rho_2=1}^{\mu-1}\delta\left(\xi_k^{\rho_2}\right)\right)
\\\\
&=& \dfrac{(1-d)^{-2}}{4N^2}\SOMMA{\mu=2}{K}\dfrac{1}{M_\mu^4r_\mu^4(1+\rho)^2}\mathbb{E}_{(\eta|\xi=\pm 1)}\left[\SOMMA{a_1,a_2}{M_\mu}\eta^{\mu, a_1}_{i}\eta^{\mu,a_2}_{i}\right]\SOMMA{k,j\neq i}{N,N}\mathbb{E}_{\xi}\mathbb{E}_{(\eta|\xi)}\left[\SOMMA{b_1,b_2}{M_\mu}\eta^{\mu, b_1}_{j}\eta^{\mu,b_2}_{k}\right] \left(\prod\limits_{\rho_1=1}^{\mu-1}\delta\left(\xi_j^{\rho_1}\right)\right)\left(\prod\limits_{\rho_2=1}^{\mu-1}\delta\left(\xi_k^{\rho_2}\right)\right)
\\\\
&=& \dfrac{(1-d)^{-2}}{4N^2}\SOMMA{\mu=2}{K}\dfrac{1}{M_\mu^2(1+\rho)}\SOMMA{k,j\neq i}{N,N}\SOMMA{b_1,b_2}{M_\mu}\mathbb{E}_{\xi}\left[\xi^{\mu}_{j}\xi^{\mu}_{k} \left(\prod\limits_{\rho_1=1}^{\mu-1}\delta\left(\xi_j^{\rho_1}\right)\right)\left(\prod\limits_{\rho_2=1}^{\mu-1}\delta\left(\xi_k^{\rho_2}\right)\right)\right]
\\\\
&=& \dfrac{1-d}{4(1+\rho)}\SOMMA{\mu=2}{K}d^{2(\mu-1)}= \dfrac{1-d}{4(1+\rho)}\dfrac{d^2-d^{2K}}{1-d^2}=\dfrac{1}{4(1+\rho)}\dfrac{d^2-d^{2K}}{1+d}\,.
\end{array}
\label{eq:bmu}
\end{align}
Putting together Eq. \eqref{eq:amu} and Eq. \eqref{eq:bmu} we reach the expression of $\kappa_{2,\mu}^{(1)}\Big|_{\xi_i^1=\pm 1}$.

\section{Explicit Calculations and Figures for the cases $K=2$ and $K=3$}
\label{app:calc}
In this appendix, we collect the  explicit expression of the self-consistent equations in \eqref{eq:n_Mgrande} and \eqref{eq:m_Mgrande} (focusing only on the cases of $K=2$ and $K=3$) and some Figures obtained from their numerical solution.
\subsection{$K=2$}
Fixing $K=2$ and explicitly perform the mean with respect to $\xi$, \eqref{eq:n_Mgrande} and \eqref{eq:m_Mgrande} read as
\begin{equation}
\footnotesize
    \begin{array}{lll}
    \n_1&=&\dfrac{\m_1}{(1+\rho)}+\dfrac{\b(1-d)\rho\,\n_1}{(1+\rho)}\Bigg[1-d\,\mathcal{S}_2(\n_1,0)-\dfrac{(1-d)}{2}\mathcal{S}_2(\n_1,-\n_2)-\dfrac{(1-d)}{2}\mathcal{S}_2(\n_1,\n_2)\Bigg]
    \\\\
    \n_2&=&\dfrac{\m_2}{(1+\rho)}+\dfrac{\b(1-d)\rho\,\n_2}{(1+\rho)}\Bigg[1-d\,\mathcal{S}_2(0,\n_2) -\dfrac{(1-d)}{2}\mathcal{S}_2(\n_1,-\n_2)-\dfrac{(1-d)}{2}\mathcal{S}_2(\n_1,\n_2)\Bigg]
    \\\\
     \m_1 &=& \dfrac{(1-d)^2}{2}\Bigg[ \mathcal{T}_2(\n_1,\n_2)+\mathcal{T}_2(\n_1,-\n_2)\Bigg]+d(1-d)\mathcal{T}_2(\n_1,0)
     \\\\
     \m_1 &=& \dfrac{(1-d)^2}{2}\Bigg[ \mathcal{T}_2(\n_1,\n_2)-\mathcal{T}_2(\n_1,-\n_2)\Bigg]+d(1-d)\mathcal{T}_2(0,\n_2)
    \end{array}
    \label{eq:equazioni_K2_beta}
\end{equation}
where we used
\begin{equation}
\footnotesize
    \begin{array}{lll}
         \mathcal{T}_2(x,y)=\mathbb{E}_\lambda \tanh\left[\b\left(x+y+\lambda\sqrt{\rho\Big(x^2+y^2\Big)}\right)\right],
         \\\\
         \mathcal{S}_2(x,y)=\mathbb{E}_\lambda \tanh^2\left[\b\left(x+y+\lambda\sqrt{\rho\Big(x^2+y^2\Big)}\right)\right].
    \end{array}
\end{equation}
Solving numerically this set of equations we construct the plots presented in Fig.\ref{fig:phase_diagram_K2}.

\begin{figure}
    \centering
    \includegraphics[width=13.5cm]{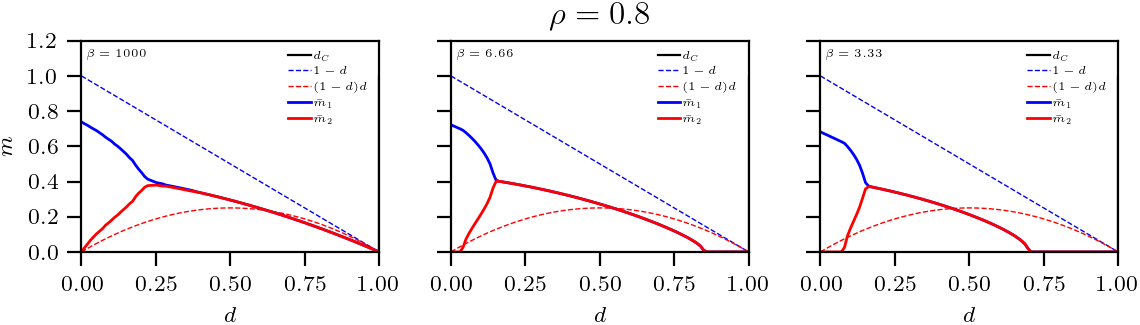}
    \includegraphics[width=13.5cm]{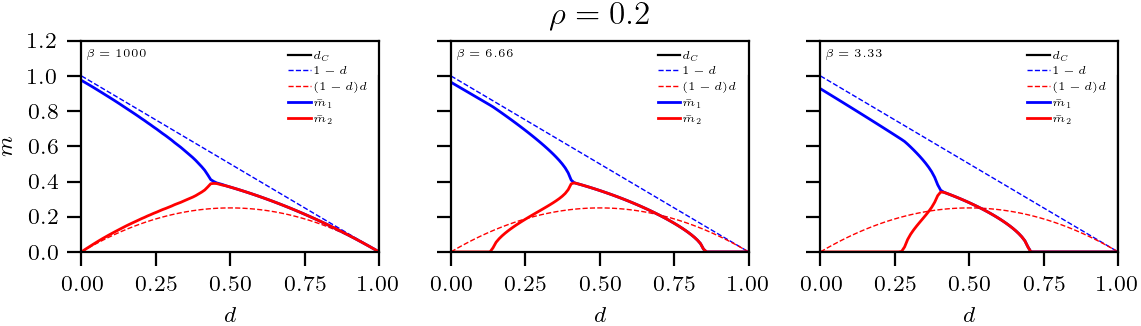}
    \includegraphics[width=13.5cm]{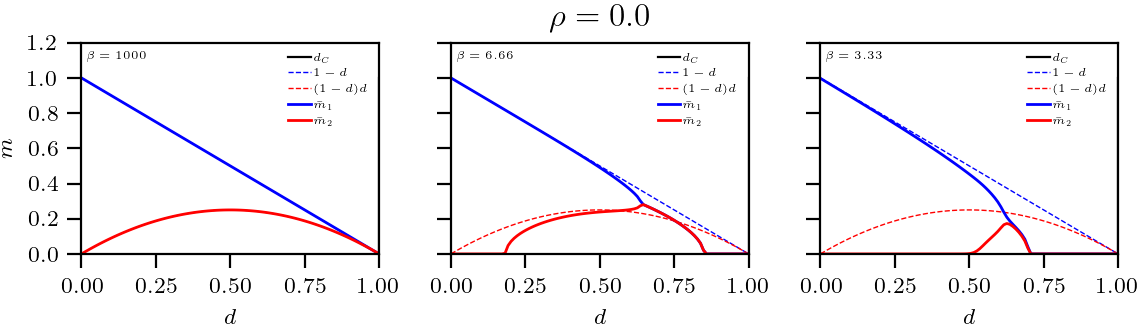}
    \caption{Numerical resolution of the system of equations \eqref{eq:equazioni_K2_beta} for $K=2$: we plot the behaviour of the magnetization $\m$ versus degree of dilution $d$ for fixed $r=0.2$ and different value of $\beta$ (from right to left $\beta=1000,6.66,3.33$) and $\rho$ (from top to bottom $\rho=0.8, 0.2, 0.0$). We stress that for $\rho=0.0$ we recover the standard diluted model presented in Fig.\ref{fig:UnoTaccitua}.}
    \label{fig:phase_diagram_K2}
\end{figure}

\subsection{$K=3$}
Moving on the case of $K=3$ by following the same steps of the previous subsection, we get
\begin{equation}
\footnotesize
    \begin{array}{lll}
    \n_1&=&\dfrac{\m_1}{(1+\rho)}+\dfrac{\b(1-d)\rho\,\n_1}{(1+\rho)}\Bigg\{ 1-d\dfrac{(1-d)}{2}\Bigg[ \mathcal{S}_3(\n_1,\n_2,0)+\mathcal{S}_3(\n_1,0,\n_3)+\mathcal{S}_3(\n_1,-\n_2,0)+\mathcal{S}_3(\n_1,0,-\n_3)\Bigg]
    \\\\
    &&-d^2\mathcal{S}_3(\n_1,0,0)-\dfrac{(1-d)^2}{4}\Bigg[ \mathcal{S}_3(\n_1,\n_2,\n_3)+\mathcal{S}_3(\n_1,\n_2,-\n_3)+\mathcal{S}_3(\n_1,-\n_2,\n_3)+\mathcal{S}_3(\n_1,-\n_2,-\n_3)\Bigg]\Bigg\}\,,
    \\\\
     \m_1 &=& \dfrac{(1-d)^3}{4}\Bigg[ \mathcal{T}_3(\n_1,\n_2,\n_3)+\mathcal{T}_3(\n_1,\n_2,-\n_3)+\mathcal{T}_3(\n_1,-\n_2,\n_3)+\mathcal{T}_3(\n_1,-\n_2,-\n_3)\Bigg]
     \\\\
     &&+d\dfrac{(1-d)^2}{2}\Bigg[ \mathcal{T}_3(\n_1,\n_2,0)+\mathcal{T}_3(\n_1,0,\n_3)+\mathcal{T}_3(\n_1,-\n_2)+\mathcal{T}_3(\n_1,0,-\n_3)\Bigg]+d^2(1-d)\mathcal{T}_3(\n_1,0,0)\,,
    \end{array}
    \label{eq:equazioni_K3_beta}
\end{equation}
where we used
\begin{equation}
\footnotesize
\begin{array}{lll}
\mathcal{T}_3(x,y,z)=\mathbb{E}_\lambda\tanh\left[\b\left(x+y+z+\lambda\sqrt{\rho\Big(x^2+y^2+z^2\Big)}\right)\right]\,,
\\\\
\mathcal{S}_3 (x,y,z)=\mathbb{E}_\lambda\tanh^2\left[\b\left(x+y+z+\lambda\sqrt{\rho\Big(x^2+y^2+z^2\Big)}\right)\right]\,.
\end{array}
\end{equation}
In order to lighten the presentation, we reported only the expression of $\m_1$ and $\n_1$, the related expressions of $\m_2$($\m_3$) and $\n_2$($\n_3$) can be obtained by making the simple substitutions $\m_1\longleftrightarrow \m_2(\m_3)$, and $\n_1\longleftrightarrow \n_2(\n_3)$ in \eqref{eq:equazioni_K3_beta}.
The numerical solution of the previous set of equations is depicted in Fig.\ref{fig:phase_diagram_K3}.

\begin{figure}
    \centering
    \includegraphics[width=15cm]{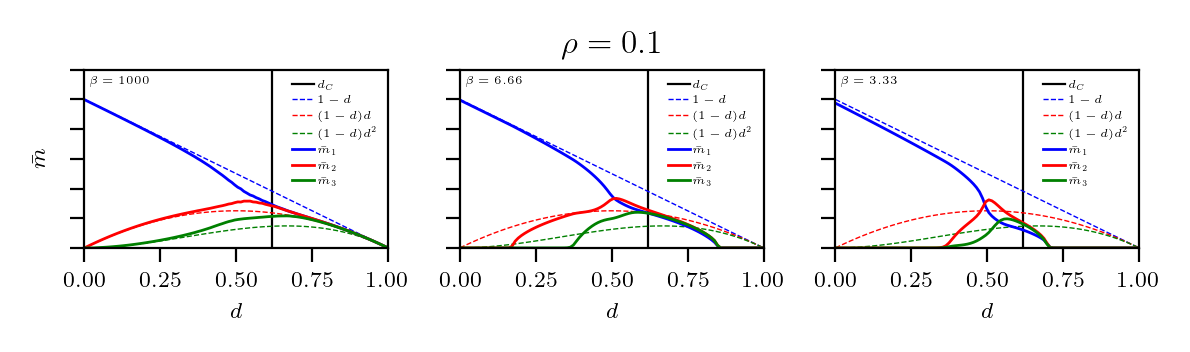}
    \includegraphics[width=15cm]{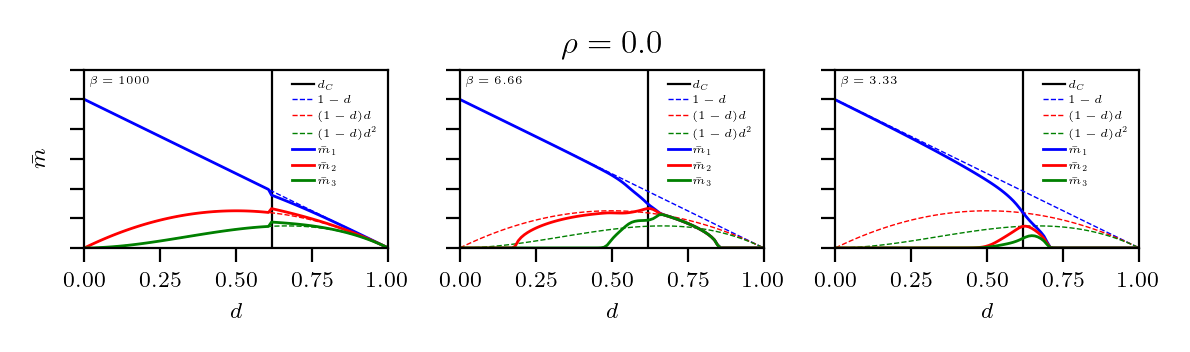}
    \caption{Numerical solution of the system of equations \eqref{eq:equazioni_K3_beta} for $K=3$: we plot the behavior of the magnetization $\m$ versus the degree of dilution $d$ for fixed $r=0.2$ and different value of $\beta$ (from left to right $\beta=1000,6.66,3.33$) and $\rho$ (from top to bottom $\rho=0.8, 0.2, 0.0$).}
    \label{fig:phase_diagram_K3}
\end{figure}

\section{Proofs}\label{app:proof}

\subsection{Proof of Theorem \ref{prop:highnoise}}
\label{proof:highnoise}
In this subsection we show the proof Proposition \ref{prop:highnoise}. 
In order to prove the aforementioned proposition, we put in front of it the following

\begin{lemma} 
The $t$ derivative of interpolating pressure is given by 
\begin{equation}
\begin{array}{lll}
     \dfrac{d\mathcal A^{(sup,unsup)}_{N,K,\beta, d,M,r}}{dt}=&\dfrac{\beta}{2(1-d)}(1+\rho)\SOMMA{\mu=1}{K}\mathbb{E} \,\omega_t[ n^2_\mu ]-\SOMMA{\mu=1}{K}\psi_\mu\mathbb{E} \,\omega_t[ n_\mu].
\end{array}
\end{equation}
\end{lemma}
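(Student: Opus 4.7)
\medskip

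\noindent\textbf{Proof plan.} The statement is a direct differentiation of the interpolating pressure, so the strategy is the standard one used throughout Guerra-type interpolation schemes: express $\mathcal A$ as a logarithm of $\mathcal Z$, differentiate, and recognize the resulting ratio $(\partial_t \mathcal Z)/\mathcal Z$ as the generalized Boltzmann-Gibbs average $\omega_t$.

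Concretely, I would start from the definition
\begin{equation*}
\mathcal A^{(sup,unsup)}_{N,K,\beta,d,M,r}(J,t)=\frac{1}{N}\,\mathbb{E}\bigl[\ln \mathcal Z^{(sup,unsup)}_{N,K,\beta,d,M,r}(\bm\eta;J,t)\bigr],
\end{equation*}
exchange $\partial_t$ with the quenched expectation $\mathbb E$ (this is harmless, as the integrand is analytic in $t$ on $[0,1]$), and then use $\partial_t \ln \mathcal Z = (\partial_t \mathcal Z)/\mathcal Z$. The only $t$-dependent pieces in the exponent of $\mathcal Z$ are the prefactors $t$ and $(1-t)$ multiplying, respectively, the ``two-body'' Hebbian block $\tfrac{\beta N(1+\rho)}{2(1-d)}\sum_\mu n_\mu^2(\bm\sigma)$ (supervised) or $\tfrac{\beta N(1+\rho)}{2(1-d)M}\sum_{\mu,a} n_{\mu,a}^2(\bm\sigma)$ (unsupervised) and the ``one-body'' mean-field block $\tfrac{N}{2}\sum_\mu \psi_\mu n_\mu(\bm\sigma)$; the $J$-term is $t$-independent and drops out.

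Differentiating term by term, each numerator becomes a sum weighted by the same Boltzmann factor defining $\omega_t$, so that the ratio collapses to the generalized average of the observable pulled down by the derivative. Dividing by $N$ produces the factor $1/N$ that combines with the explicit $N$'s to give finite, intensive quantities; the two-body contribution yields $\tfrac{\beta(1+\rho)}{2(1-d)}\sum_\mu \mathbb E\,\omega_t[n_\mu^2]$ while the one-body contribution yields the $-\sum_\mu \psi_\mu\,\mathbb E\,\omega_t[n_\mu]$ piece claimed in the lemma. For the unsupervised version, one additionally uses that, by concentration of the single-example magnetizations on the block average (i.e.\ $n_{\mu,a}\to \bar n_\mu$ in the thermodynamic limit, as stated in the main text right after the theorem), the two formulas coincide.

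\medskip

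\noindent\textbf{Expected obstacle.} None of the steps is analytically hard: the derivation is essentially a bookkeeping exercise, and the only place requiring care is tracking the normalizations $\tfrac{1}{2(1-d)}$, $\tfrac{1}{M}$ and the factors of $N$ so that the final coefficients match exactly the ones quoted in the lemma. If one were being pedantic, one could justify the interchange of $\partial_t$ with $\mathbb E$ via dominated convergence (the exponent is a polynomial in $t$ with bounded disorder-dependent coefficients on the finite space $\{\pm 1\}^N$), but this is automatic here.
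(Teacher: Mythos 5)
Your proposal is correct and is exactly the computation the paper intends (the paper explicitly omits the proof of this lemma as ``lengthy but not cumbersome''): differentiate $\ln\mathcal Z$ in $t$, recognize $(\partial_t\mathcal Z)/\mathcal Z$ as $\omega_t$ applied to the $t$-derivative of the exponent, and divide by $N$; your handling of the unsupervised case via the concentration $n_{\mu,a}\to\bar n_\mu$ is also the right (and necessary) observation. The only caveat is one you half-flagged yourself: a literal differentiation of the supervised definition \eqref{def:partfunct_GuerraRS}, whose one-body block carries the prefactor $(1-t)\tfrac{N}{2}$, would produce $-\tfrac{1}{2}\sum_\mu\psi_\mu\,\mathbb E\,\omega_t[n_\mu]$ rather than the coefficient $1$ claimed in the lemma; this factor-of-two mismatch is an inconsistency internal to the paper (the lemma as stated, not the $N/2$ normalization, is what makes the subsequent cancellation against $\psi_\mu=\tfrac{\beta}{1-d}(1+\rho)\bar n_\mu$ work and yields \eqref{eq:dert_TD}), so your bookkeeping should be read as using the one-body term $(1-t)N\sum_\mu\psi_\mu n_\mu$.
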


Since the computation is lengthy but not cumbersome we decided to omit it.

\begin{proposition}
\label{ass:RSassumption}
In the low load regime, in the thermodynamic limit the distribution of the generic order parameter $X$ is centres at its expectation value $\bar X$ with vanishing fluctuations. Thus, being $\Delta X = X-\bar X$,  in the thermodynamic limit, the following relation holds
\begin{equation}
    \mathbb{E} \,\omega_t[ (\Delta X)^2] \xrightarrow[N\to +\infty]{} 0\,.
\end{equation}
\end{proposition}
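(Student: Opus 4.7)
The proposition is a standard self-averaging (replica-symmetric) claim for order parameters in the low-load regime, and I would follow the classical two-step template. Writing the exact decomposition
\begin{equation*}
\mathbb{E}\,\omega_t[(\Delta X)^2] \;=\; \mathbb{E}\,\omega_t\bigl[(X-\omega_t[X])^2\bigr] \;+\; \mathbb{E}\bigl[(\omega_t[X]-\bar{X})^2\bigr],
\end{equation*}
the first (\emph{thermal}) term can be controlled by convexity of the interpolating pressure in an external source, while the second (\emph{quenched}) term is controlled by a bounded-differences concentration estimate on the i.i.d.\ disorder $\xi,\eta$. Showing that each of the two terms vanishes as $N\to\infty$ then proves the proposition.

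\textbf{Step 1: thermal fluctuations.} I would enrich the interpolating partition function \eqref{def:partfunct_GuerraRS} by a linear source $\lambda N X(\sigma)$ in the exponent, defining
\begin{equation*}
\mathcal{A}_N^{(\lambda)}(t) \;=\; \frac{1}{N}\,\mathbb{E}\log \sum_{\sigma} \int d\mu(z)\, e^{-\beta \mathcal{H}_t(\sigma) + \lambda N X(\sigma)}.
\end{equation*}
This pressure is convex and $C^\infty$ in $\lambda$, with
\begin{equation*}
\partial_\lambda \mathcal{A}_N^{(\lambda)} = \mathbb{E}\,\omega_{t,\lambda}[X], \qquad \partial_\lambda^2 \mathcal{A}_N^{(\lambda)} = N\,\mathbb{E}\bigl(\omega_{t,\lambda}[X^2]-\omega_{t,\lambda}[X]^2\bigr).
\end{equation*}
Because every order parameter at play ($m_\mu$, $n_\mu$, $n_{\mu,a}$) is uniformly bounded, $\partial_\lambda \mathcal{A}_N^{(\lambda)}$ is uniformly bounded in $N$ and $\lambda$. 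A Griffiths-lemma argument then gives $\partial_\lambda^2 \mathcal{A}_N^{(\lambda)}/N \to 0$ at every $\lambda$ where the limiting pressure is differentiable, and evaluating at $\lambda=0$ yields $\mathbb{E}\,\omega_t[(X-\omega_t[X])^2] = o(1)$.

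\textbf{Step 2: disorder fluctuations.} Regard $\log Z$ as a function of the $O(NK) + O(NKM)$ bounded i.i.d.\ entries $\{\xi_i^\mu\}$ and $\{\eta_i^{\mu,a}\}$. A single-entry perturbation changes $\mathcal{H}$ (see \eqref{def:H_dil_sup}--\eqref{def:H_dil_unsup2}) by a mean-field-sized amount bounded uniformly by $O(1/M)$, and hence perturbs $\log Z$ by $O(\beta/M)$. McDiarmid's bounded-differences inequality then gives $\mathrm{Var}(\log Z) = o(N^2)$ throughout the low-load regime $K=o(N)$, so $\mathcal{A}_N$ concentrates around its mean. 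Applying the same estimate to $\mathcal{A}_N^{(\lambda)}$ and invoking again the convex-analysis fact that derivatives of a convex sequence converge at points of differentiability of the limit, the concentration transfers from $\mathcal{A}$ to $\omega_t[X]$, giving $\mathbb{E}[(\omega_t[X]-\bar{X})^2]\to 0$.

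\textbf{Main obstacle.} The principal delicate point is the Griffiths-lemma passage from convexity to pointwise control of the second derivative: convexity plus pointwise convergence of $\mathcal{A}_N^{(\lambda)}$ only pins down the second derivative at points of differentiability of the limit. Hence the proposition must be read off the critical lines of the phase diagram in Fig.~\ref{fig:stabilità}, where the limiting pressure can fail to be differentiable; this is the standard caveat of replica-symmetric self-averaging arguments and, crucially, does not affect the uses of the proposition elsewhere in the paper, since those invoke it in the interior of each phase. The remaining estimates reduce to bookkeeping on bounded quantities and pose no genuine difficulty in the low-load regime.
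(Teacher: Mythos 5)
Your argument is sound and is the classical self-averaging template (Pastur--Shcherbina / Guerra--Toninelli: exact splitting into thermal and disorder fluctuations, convexity of the sourced pressure plus a Griffiths-type lemma for the former, bounded differences for the latter), but it is a genuinely different route from what the paper does. In fact the paper offers no dedicated proof of Proposition~\ref{ass:RSassumption} at all: the label reveals it is treated as an assumption, justified in the main text by citing standard low-storage self-averaging results \cite{Bovier,Tala}. The closest the paper comes to a proof is Theorem~\ref{Fluttuo}, which takes a constructive route specific to the $n_\mu$'s: it interpolates directly on the rescaled fluctuations $\tilde n_\mu=\sqrt{N}(n_\mu-\bar n_\mu)$, derives the closed streaming equation $\frac{d}{dt}\langle\tilde n_\mu^2\rangle=\beta(1+\rho)\langle\tilde n_\mu^2\rangle^2$, and solves it to get $\langle\tilde n_\mu^2\rangle_{t=1}$ as a meromorphic function of $\beta$; away from its pole this is $O(1)$, whence $\mathbb{E}\,\omega_t[(\Delta n_\mu)^2]=O(1/N)$. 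The two approaches buy different things. Yours is model-agnostic and applies uniformly to any bounded order parameter ($m_\mu$, $n_\mu$, $n_{\mu,a}$), at the price of the caveat you correctly flag: convexity only controls the second derivative at differentiability points of the limiting pressure, so the statement can fail on critical lines. The paper's computation is quantitative and pins down exactly where it fails --- the pole $\beta_c=1/(1-d)$ reproduces your ``points of non-differentiability'' concretely --- but it requires the self-consistent closure of the fluctuation ODE and is carried out only for the $n_\mu$'s. Two small points to tighten in your write-up: (i) the Griffiths passage really delivers an integrated-in-$\lambda$ (or a.e.-$\lambda$) bound on $N\,\mathbb{E}(\omega_\lambda[X^2]-\omega_\lambda[X]^2)$, and extracting the statement exactly at $\lambda=0$ needs either a continuity argument or acceptance of the a.e. version, which is what the sum rule actually uses; (ii) in Step 2 note that at $J=0$ the Hamiltonians \eqref{def:H_dil_sup}--\eqref{def:H_dil_unsup2} depend on the disorder only through the $\eta$'s, so the McDiarmid estimate should be run on the $NKM$ example entries (with the $\xi$'s entering only through the conditional law of the $\eta$'s), giving $\mathrm{Var}(\log\mathcal{Z})=O(NK\beta^2/M)=o(N^2)$ in the low-load regime, as you state.
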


\begin{remark}
\label{remark:RSassumptionMiriam}
We stress that afterwards we use the relations

\begin{equation}
 \begin{array}{lll}
     \mathbb{E} \,\omega_t[(n_{\mu}-\bar{n}_\mu)^2 ] =\mathbb{E} \,\omega_t[ n_{\mu}^2 ] -2\,\bar{n}_\mu\mathbb{E} \,\omega_t[ n_{\mu} ]+ \bar{n}_\mu^{2}  \,.
\end{array}  
\label{eq:RS_pq_PotenzialiMiriam}
\end{equation}
which are computed with brute force with Newton's Binomial. 

Now, using these relations, if we fix the constants as 
\begin{equation}
    \begin{array}{lll}
        \psi_\mu= \dfrac{\beta}{1-d}(1+\rho)\n_\mu
    \end{array}
\end{equation}
\normalsize
in the thermodynamic limit, due to Proposition \ref{ass:RSassumption}, the expression of derivative w.r.t. $t$ becomes 
\begin{equation}
\begin{array}{lll}
\dfrac{d\mathcal{A}^{(sup,unsup)}_{K,\beta, d,M,r}}{dt}=&-\dfrac{\beta}{2(1-d)}(1+\rho)\SOMMA{\mu=1}{K}\n_\mu^{2}.
\end{array}
\label{eq:dert_TD}
\end{equation}
\normalsize
\end{remark}

\begin{proof}
Let us start from finite size $N$ expression. We apply the Fundamental Theorem of Calculus: 
\small
\begin{equation}
    \mathcal{A}^{(sup,unsup)}_{N, K,\beta, d,M,r}=\mathcal{A}^{(sup,unsup)}_{N, K,\beta, d,M,r}(t=1)=\mathcal{A}^{(sup,unsup)}_{N, K,\beta, d,M,r}(t=0)+\int\limits_0^1\, \partial_s \mathcal{A}^{(sup,unsup)}_{N, K,\beta, d,M,r}(s
)\Big|_{s=t}\,dt.
\label{eq:F_T_CalculusMiriam}
\end{equation}
\normalsize
We have already computed the derivative w.r.t. $t$ in Eq. \eqref{eq:dert_TD}. It only remains to calculate the one-body term:
\small
\begin{equation}
\begin{array}{lll}
      \mathcal{Z}^{(sup,unsup)}_{N, K,\beta, d,M,r}(t=0)=&\SOMMA{\lbrace\boldsymbol{\sigma}\rbrace}{}\exp{\Bigg[\SOMMA{i=1}{N}\left(\SOMMA{\mu=1}{K}\dfrac{\psi_\mu}{2(1+\rho)} \hat{\eta}^\mu+J\xi^\mu\right)\sigma_i\Bigg]}.
\end{array}
\end{equation}
\normalsize
Using the definition of quenched statistical pressure \eqref{hop_GuerraAction} we have 
\small
\begin{equation}
\begin{array}{lll}
      & \mathcal{A}^{(sup,unsup)}_{K,\beta, d,M,r}(J, t=0)=\ln{}\left[{2\cosh\left(\SOMMA{\mu=1}{K}\dfrac{\psi_\mu}{2(1+\rho)} \hat{\eta}^\mu+J\xi^\mu\right)}\right]
      \\\\
      &=\mathbb{E}\left\lbrace\ln{}{2\cosh\left[\dfrac{\beta}{1-d}\SOMMA{\mu=1}{K}\n_\mu\hat{\eta}^\mu+J\xi^\mu\right]}\right\rbrace 
\end{array}
\label{onebody}
\end{equation}
\normalsize
where $\mathbb{E}=\mathbb{E}_\xi \mathbb{E}_{(\eta|\xi)}$. Finally, putting inside \eqref{eq:F_T_CalculusMiriam} \eqref{onebody} and \eqref{eq:dert_TD}, we reach the thesis. 
\end{proof}

\subsection{Proof of Proposition \ref{prop:m_n}}
\label{proof:prop_m_n}

In this subsection we show the proof Proposition \ref{prop:m_n}. 
\begin{proof}
For large data-sets, using the Central Limit Theorem we have 
\begin{equation}
    \hat{\eta}^\mu \sim \xi^\mu\left(1+\sqrt{\rho}\;Z_\mu\right)\,.
    \label{eq:CLT_sup}
\end{equation}
where $Z_\mu$ is a standard Gaussian variable $Z_\mu \sim \mathcal{N}(0,1)$. Replacing Eq. \eqref{eq:CLT_sup} in the self-consistency equation for $\n$, namely Eq. \eqref{eq:n_selfRS_sup}, and applying  Stein's lemma\footnote{This lemma, also known as Wick's theorem, applies to standard Gaussian variables, say $J \sim \mathcal N(0, 1)$, and states that, for a generic
function $f(J)$ for which the two expectations $\mathbb{E}\left( J f(J)\right)$ and $\mathbb{E}\left( \partial_J f(J)\right)$ both exist, then
\begin{align}
    \label{eqn:gaussianrelation2}
    \mathbb{E} \left( J f(J)\right)= \mathbb{E} \left( \frac{\partial f(J)}{\partial J}\right)\,.
\end{align}
} in order to recover the expression for $\m_\mu$, we get the large data-set equation for $\n_\mu$, i.e. Eq. \eqref{eq:n_Mgrande}. 

We will use the relation 
\begin{align}
    \mathbb{E}_{\lambda_\mu}\left[F\left(a+\SOMMA{\mu=1}{K}b_\mu \lambda_\mu\right)\right]=\mathbb{E}_{_Z}\left[F\left(a+Z\sqrt{\SOMMA{\mu=1}{K}b^2_\mu}\right)\right] \, ,
\end{align}
where $\lambda_\mu$ and $Z$ are i.i.d. Gaussian variables. Doing so, we obtain
\begin{equation}
\footnotesize
    g(\beta,\bm \xi, Z, \bar{\bm n})=\b\SOMMA{\nu=1}{K}\n_\nu\xi^\nu +\b\sqrt{\rho\,}\SOMMA{\nu=1}{K}Z_\nu\n_\nu^2\left(\xi^\nu\right)^2 =\b\left(\SOMMA{\nu=1}{K}\n_\nu\xi^\nu +Z\SOMMA{\nu=1}{K}\sqrt{\rho\,\n_\nu^2\left(\xi^\nu\right)^2} \right) \ ,
\end{equation}
thus we reach the thesis.
\end{proof}

\begin{corollary}\label{Corollario3}
The self consistency equations in the large data-set assumption and  null-temperature limit are 
\begin{equation}
    \begin{array}{lll}
        \m_\mu &=&   \mathbb{E}_{\xi} \left\{\mathrm{erf}\left[\left(\SOMMA{\nu=1}{K}\m_\nu\xi^\nu\right)\left(2\rho\SOMMA{\nu=1}{K}\m_\nu^2\left(\xi^\nu\right)^2\right)^{-1/2} \right] \xi^\mu\right\}.
    \end{array}
\end{equation}
\end{corollary}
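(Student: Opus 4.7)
The plan is to derive Corollary \ref{Corollario3} by specialising Proposition \ref{prop:m_n} to the zero fast-noise limit $\beta\to\infty$ and then eliminating the example-magnetizations $\bar n_\nu$ in favour of the archetype-magnetizations $\bar m_\nu$ through a scale-invariance argument.

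First I would introduce the shorthand $A(\bm\xi) := \sum_\nu \bar n_\nu \xi^\nu$ and $B(\bm\xi) := \sqrt{\rho\sum_\nu \bar n_\nu^2 (\xi^\nu)^2}$, so that $g(\beta,\bm\xi,Z,\bar{\bm n}) = \tilde\beta\,(A+ZB)$ with $\tilde\beta = \beta/(1-d)$ as in \eqref{eq:tildebeta}. Since $\tanh(\tilde\beta y)\to \mathrm{sign}(y)$ pointwise for $y\neq 0$ and the set $\{Z:A+ZB=0\}$ has zero Gaussian measure, dominated convergence in $Z$ and then in $\bm\xi$ lets one pass $\beta\to\infty$ through the double expectation in \eqref{eq:m_Mgrande}. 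Evaluating the inner Gaussian integral via the elementary identity $\mathbb{E}_Z[\mathrm{sign}(A+ZB)] = \mathrm{erf}(A/(B\sqrt 2))$ gives
\begin{equation*}
\bar m_\mu \;=\; \mathbb{E}_\xi\!\left[\mathrm{erf}\!\left(\frac{\sum_\nu \bar n_\nu\,\xi^\nu}{\sqrt{2\rho \sum_\nu \bar n_\nu^2 (\xi^\nu)^2}}\right)\xi^\mu\right],
\end{equation*}
which already has the claimed structural form of \eqref{eq:noiseless}, but written in the example-magnetizations $\bar n_\nu$ rather than in the archetype-magnetizations $\bar m_\nu$.

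Second I would observe that the erf argument is homogeneous of degree zero in the vector $\bar{\bm n}$: a rescaling $\bar n_\nu\to c\,\bar n_\nu$ with $c>0$ multiplies numerator and denominator by the same factor $c$. Hence the right-hand side depends on $\bar{\bm n}$ only through its direction in $\mathbb{R}^K$, so it would suffice to establish that $\bar{\bm n}$ and $\bar{\bm m}$ are parallel in the $\beta\to\infty$ limit. I would extract this proportionality from the companion equation \eqref{eq:n_Mgrande}: rewriting it as $(1+\rho)\bar n_\mu = \bar m_\mu + \rho\,\tilde\beta\,\bar n_\mu\,\mathbb{E}\{(1-\tanh^2 g)(\xi^\mu)^2\}$, the residual term is controlled through the distributional limit $\tilde\beta\,[1-\tanh^2(\tilde\beta y)]\to 2\delta(y)$, which collapses the inner $Z$-integral to the Gaussian density evaluated at the saddle $Z=-A/B$. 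The $(\xi^\mu)^2$ projector restricts the remaining expectation to the event $\xi^\mu\neq 0$ and, by the i.i.d.\ symmetry of the archetype law \eqref{eq:quenched_xi}, produces a $\mu$-independent coefficient $\mathcal{G}(\rho,\bar{\bm n})$ multiplying $\bar n_\mu$. This yields $\bar m_\mu = [\,(1+\rho)-\rho\,\mathcal{G}\,]\,\bar n_\mu$ and hence $\bar{\bm n}\parallel\bar{\bm m}$. Substituting $\bar n_\nu\to\bar m_\nu$ in the erf argument, legitimate by the degree-zero homogeneity established above, produces \eqref{eq:noiseless}.

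The main obstacle is the treatment of the residual term in the $\bar n$-equation. The naive reading $1-\tanh^2\to 0$ is misleading because the prefactor $\tilde\beta\to\infty$ compensates exactly the pointwise decay, so the limit must be computed as a saddle-point Gaussian integral and the $\mu$-independence of the resulting coefficient must be checked carefully using the i.i.d.\ structure of the archetypes. Once this proportionality is secured, the rest of the derivation is a clean combination of the Gaussian $\mathrm{sign}$-to-$\mathrm{erf}$ identity and the scale invariance of the limiting integrand.
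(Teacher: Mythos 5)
Your route is essentially the paper's own proof of Corollary \ref{Corollario3}: (i) send $\tanh(\tilde\beta\,(A+ZB))\to\mathrm{sign}(A+ZB)$ in \eqref{eq:m_Mgrande} and apply $\mathbb{E}_Z[\mathrm{sign}(A+ZB)]=\mathrm{erf}(A/\sqrt{2}B)$; (ii) extract a proportionality $\bar n_\mu\propto\bar m_\mu$ from \eqref{eq:n_Mgrande} by controlling the residual term $\tilde\beta\,[1-\tanh^2 g]$; (iii) use the degree-zero homogeneity of the erf argument to trade $\bar{\bm n}$ for $\bar{\bm m}$. The paper does exactly this, positing that $\tilde\beta(1-\tanh^2 g)$ has a finite limit and reparametrizing $C=1-\delta C/\tilde\beta$ (your distributional identity $\tilde\beta[1-\tanh^2(\tilde\beta y)]\to 2\delta(y)$ and the saddle-point collapse at $Z=-A/B$ is a more explicit version of the same step), and then simply adopting the \emph{truncated} relation $\bar n_\mu=\bar m_\mu/(1+\rho)$; the homogeneity you invoke is the implicit reason the overall constant is immaterial. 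The one place where your argument claims more than it delivers is the assertion that the i.i.d.\ symmetry of \eqref{eq:quenched_xi} makes the coefficient $\mathcal{G}$ manifestly $\mu$-independent: after the saddle-point collapse the coefficient reads $\mathbb{E}_\xi\bigl[(2/\sqrt{2\pi})\,B^{-1}e^{-A^2/2B^2}(\xi^\mu)^2\bigr]$, and the $1/B$ prefactor — e.g.\ on the event $\xi^{\nu}=0$ for all $\nu\neq\mu$, where $B=\sqrt{\rho}\,|\bar n_\mu|$ — carries an explicit dependence on $\bar n_\mu$ itself, so for unequal magnetizations (the hierarchical state) the $\mu$-independence is not a consequence of symmetry alone. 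The paper does not resolve this either (it truncates the relation outright), so your derivation is at least as complete as the published one, but you should state that the exact parallelism $\bar{\bm n}\parallel\bar{\bm m}$, and hence the substitution inside the erf, holds only to leading order in $\rho$ unless this coefficient is shown to be uniform in $\mu$.
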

\begin{proof}
In order to lighten the notation we rename
\begin{equation}
    C = \tanh^2\left[g(\beta, \bm\xi, Z, \bar{\bm n})\right]\,.
\end{equation}
We start by assuming finite the limit
\begin{equation}
    \lim_{\b\to\infty}\b(1-C) = D \in \mathbb{R}
\end{equation}
and we stress that as $\b \rightarrow \infty$ we have $C \rightarrow 1$.
As a consequence, the following reparametrization is found to be useful,
\begin{equation}
    C=1-\dfrac{\delta C}{\b}\;\;\;\;\mathrm{as}\;\;\;\;\b\to \infty.
\end{equation}
Therefore, as $\b\to \infty$, it yields
\begin{equation}
    \begin{array}{lll}
        \n_\mu = \dfrac{\m_\mu}{1+\rho-\rho\,\delta C\,(1-d)}
        \\\\
         \m_\mu = \mathbb{E}_\xi\mathbb{E}_Z\left[\mathrm{sign}\left(\SOMMA{\nu=1}{K}\n_\nu\xi^\nu +Z\SOMMA{\nu=1}{K}\sqrt{\rho\,\n_\nu^2\left(\xi^\nu\right)^2} \right) \xi^\mu\right]\, ;
    \end{array}
    \label{eq:new_beta_inf}
\end{equation}
to reach this result, we have also used the relation
\begin{equation}
    \mathbb{E}_z\mathrm{sign}[A+Bz]=\mathrm{erf}\left[\dfrac{A}{\sqrt{2}B}\right] \ ,
\end{equation}
where $z$ is a Gaussian variable $\mathcal{N}(0,1)$ and the truncated expression $\n_\mu= \m_\mu/(1+\rho)$ for the first equation in \eqref{eq:new_beta_inf}.
\end{proof}

\section*{Acknowledgments}
This research has been supported by Ministero degli Affari Esteri e della Cooperazione Internazionale (MAECI) via the BULBUL grant (Italy-Israel), CUP Project n. F85F21006230001, and has received financial support from the Simons Foundation (grant No. 454949, G. Parisi) and ICSC – Italian Research Center on High Performance Computing, Big Data and Quantum Computing, funded by European Union – NextGenerationEU.
\newline
Further, this work has been partly supported by The Alan Turing Institute through the Theory and Methods Challenge Fortnights event {\em Physics-informed machine learning}, which took place on 16-27 January 2023 at The Alan Turing Institute headquarters.
\newline
E.A. acknowledges financial support from Sapienza University of Rome (RM120172B8066CB0).
\newline
E.A., A.A. and A.B. acknowledges GNFM-INdAM (Gruppo Nazionale per la Fisica Mamematica, Istituto Nazionale d’Alta Matematica), A.A. further acknowledges UniSalento  for financial support via PhD-AI and AB further acknowledges the PRIN-2022 project {\em Statistical Mechanics of Learning Machines: from algorithmic and information-theoretical limits to new biologically inspired
paradigms}.

\end{document}